\newtheorem{observation}{Observation}
\def\comment#1{}%
\def\withcomments{%
  \newcounter{mycommentcounter}%
   \def\comment##1{\refstepcounter{mycommentcounter}%
    \ifhmode%
     \unskip%
     {\dimen1=\baselineskip \divide\dimen1 by 2 %
       \raise\dimen1\llap{\tiny
	{-\themycommentcounter-}}}\fi%
     \marginpar[{\renewcommand{\baselinestretch}{0.8}%
       \hspace*{-2em}\begin{minipage}{1.5\marginparwidth}\footnotesize%
[\themycommentcounter]:%
\raggedright ##1\end{minipage}}]{\renewcommand{\baselinestretch}{0.8}%
       \begin{minipage}{1.5\marginparwidth}\footnotesize%
[\themycommentcounter]: \raggedright%
##1\end{minipage}}}%
  }
\newcommand{\dg}{^\circ}
\newcommand{\mn}{^-}
\newcommand{\eps}{\varepsilon}
\newcommand{\ray}[2]{\ensuremath{\textnormal{ray}(#1,\vec{#2})}}%
\newcommand{\hp}[2]{\ensuremath{h_{#1}^{#2}}}
\DeclareMathOperator{\axis}{axis}
\DeclareMathOperator{\proc}{p}
\DeclareMathOperator{\apex}{apex}
\DeclareMathOperator{\procn}{N_p}
\DeclareMathOperator{\neighb}{N}
\DeclareMathOperator{\polytope}{polytope}
\newcommand{\true}{\textnormal{true}}
\newcommand{\false}{\textnormal{false}}
\newcommand{\optangle}{\angle_\textnormal{optimal}}
\newcommand{\gP}[1]{\ensuremath{\mathcal P^{#1}}}
\newcommand{\distT}[2]{\ensuremath{d_T(#1,#2)}}
\newcommand{\turnccw}[2]{\ensuremath{\angle_\textnormal{ccw}(\vec{#1},\vec{#2})}}
\newcommand{\optprobref}{\eqref{eq:opt-problem}\xspace} %
\let\doendproof\endproof
\renewcommand\endproof{~\hfill\qed\doendproof}
\title{Euclidean Greedy Drawings of Trees
}
\author{Martin N\"ollenburg and Roman Prutkin}  
\authorrunning{N\"ollenburg \and Prutkin}
\institute{Institute of Theoretical Informatics, Karlsruhe Institute of Technology, Germany}
\begin{document}

\maketitle

\begin{abstract}
Greedy embedding (or drawing) is a simple and efficient strategy to route messages in wireless sensor networks.
For each source-destination pair of nodes~$s,t$ in a greedy embedding there is always a neighbor~$u$ of~$s$ that is closer to~$t$ according to some distance metric. 
The existence of greedy embeddings in the Euclidean plane $\mathbb{R}^2$ is known for certain graph classes such as 3-connected planar graphs.
We completely characterize the trees that admit a greedy embedding in~$\mathbb R^2$.
This answers a question by Angelini et al.~(Graph Drawing 2009)
and is a further step in characterizing the graphs that admit Euclidean greedy  embeddings.

\end{abstract} 

\section{Introduction}
Message routing in wireless ad-hoc and sensor networks cannot apply  the same established global hierarchical routing schemes that are used, e.g., in the Internet Protocol. A family of alternative routing strategies in wireless networks known as \emph{geographic routing} uses node locations as addresses instead. The  \emph{greedy routing} protocol simply passes a message at each node to a neighbor that is closer to the destination. Two problems with this approach are (i) that sensor nodes typically are not equipped with GPS receivers due to their cost and energy consumption and (ii) that even if nodes know their positions messages can get stuck at voids, where no node closer to the destination exists.

An elegant strategy to tackle these issues was proposed by Rao et al~\cite{Rao2003}. 
It maps nodes to virtual rather than geographic coordinates, on which the standard greedy routing is then performed. 
A mapping that always guarantees successful delivery is called a \emph{greedy embedding} or \emph{greedy drawing}.

The question about the existence of greedy embeddings for various metric spaces and classes of graphs 
has attracted a lot of interest. %
Papadimitriou and Ratajczak~\cite{Papadimitriou2005} conjectured that every 3-connected planar graph 
admits a greedy embedding into the Euclidean plane. 
Dhandapani~\cite{Dhandapani2010} proved that every 3-connected planar triangulation has a greedy drawing. 
The conjecture by Papadimitriou and Ratajczak itself has been proved independently 
by Leighton and Moitra~\cite{lm-srgems-10} and Angelini et al.~\cite{angelini2010algorithm}. 
Kleinberg~\cite{Kleinberg2007} showed that every connected graph has a greedy embedding in the hyperbolic plane.
\par
Since efficient use of storage and bandwidth are crucial in wireless sensor networks, 
virtual coordinates should require only few, i.e., $O(\log n)$, bits in order to keep message headers small.
Greedy drawings with this property are called \emph{succinct}. 
Eppstein and Goodrich proved the existence of succinct greedy drawings for 3-connected planar graphs 
in~$\mathbb R^2$~\cite{Eppstein2011}, 
and Goodrich and Strash~\cite{Goodrich2009} showed it for any connected graph in the hyperbolic plane. 
Wang and He~\cite{Wang2012} used a custom distance metric and constructed convex, 
planar and succinct drawings for 3-connected planar graphs using Schnyder realizers~\cite{Schnyder1990}.
\par
It has been known that not all graphs admit a Euclidean greedy drawing in the  plane, e.g.,~$K_{k,5k+1}$ ($k \geq 1$) has no such drawing~\cite{Papadimitriou2005}, including the tree~$K_{1,6}$.
The (non-)existence of a greedy drawing for some particular tree is used in a number of proofs as an intermediate result. %
Leighton and Moitra~\cite{lm-srgems-10} showed that a graph containing at least six \emph{pairwise independent irreducible triples} 
(e.g., the complete binary tree containing~31 nodes) cannot have a greedy embedding.
They used this fact to present a planar graph that admits a greedy embedding,
although none of its spanning trees does.
We give examples of graphs with no greedy drawing that contain at most \emph{five} such triples.
Further, there are greedy-drawable trees
that have no succinct Euclidean greedy drawing~\cite{abf-sgdae-10}.
\par 
\emph{Self-approaching drawings}~\cite{acglp-sag-12} are a subclass of greedy drawings with the additional constraint that for any pair of nodes there is a path~$\rho$ that is distance decreasing not just for the node sequence of~$\rho$ but for any triple of intermediate points on the edges of~$\rho$.
Alamdari et al.~\cite{acglp-sag-12} gave a complete characterization of trees admitting self-approaching drawings.
Since self-approaching drawings are greedy, all trees with a self-approaching drawing are greedy-drawable.
However, there exist numerous trees that admit a greedy drawing, but no self-approaching one, and the characterization of those trees turns out to be quite complex. \par
\subsubsection{Our contributions.} 
We give the first complete characterization of all trees that admit a greedy embedding in~$\mathbb R^2$ 
with the Euclidean distance metric.
This solves the corresponding open problem stated by Angelini et al.~\cite{abf-sgdae-10}
and is a further step in characterizing the graphs that have greedy embeddings.
For any given tree~$T$ and an edge~$e$ of~$T$ separating~$T$ into~$T_1$ and~$T_2$,
we calculate a tight upper bound on the opening angle of a cone 
formed by perpendicular bisectors of edges of~$T_1$,
in which~$T_2$ is contained 
in any greedy embedding
in time linear in the size of~$T_1$.
We then show that deciding whether~$T$ has a greedy embedding is equivalent to 
deciding whether there exists a valid angle assignment in a certain wheel polygon. This includes a non-linear constraint known as the \emph{wheel condition}~\cite{DiBattistaVismara1993}.
For most cases (all trees with maximum degree~4 and most trees with maximum degree~5) we are able to give an explicit solution to this problem,
which provides a linear-time recognition algorithm.
For trees with maximum degree~3 we give an alternative characterization by forbidden subtrees. %
For some trees with one degree-5 node we resort to using non-linear solvers.
For trees with nodes of degree~$\ge 6$ no greedy drawings exist. 
\par
Our proofs are constructive, however, we ignore the 
possibly exponential area requirements for our constructions.
This is justified by the aforementioned result that some trees require exponential-size greedy drawings~\cite{abf-sgdae-10}.
\par
\section{Preliminaries}
\label{sec:prelim}
In this section, we introduce the concept of the opening angle of a rooted subtree
and present relations between opening angles %
that will be crucial for the characterization
of greedy-drawable trees. We start with a number of lemmas on basic properties of opening angles and greedy drawings and sketch the main ideas of our characterization. This is followed by proving the \emph{shrinking lemma}, which serves as a main tool for our later constructions.
\par
Let $T=(V,E)$ be a tree. A \emph{straight-line drawing} $\Gamma$ of $T$
maps every node $v\in V$ to a point in the plane $\mathbb R^2$ and every edge $uv \in E$ to the line segment between its endpoints. We say that $\Gamma$ is \emph{greedy} if for every pair of nodes $s,t$ there is a neighbor $u$ of $s$ with $|ut| < |st|$, where $|st|$ is the Euclidean distance between points~$s$ and~$t$. To ease notation we identify nodes with points and edges with line segments. Furthermore we assume that all drawings are straight-line drawings.

It is known that for a greedy drawing $\Gamma$ of $T$ any subtree of $T$ is  represented in $\Gamma$ by a greedy subdrawing~\cite{abf-sgdae-10}. We define the \emph{axis} of an edge $uv$ as its perpendicular bisector. Let~$h_{uv}^u$ denote the open half-plane bounded by the axis of~$uv$ and containing~$u$.
Let~$T_{uv}^u$ be the subtree of~$T$ containing~$u$ obtained from~$T$ by removing~$uv$. Angelini et al.~\cite{abf-sgdae-10} showed that in a greedy drawing of $T$ every subtree~$T_{uv}^u$ is contained in~$h_{uv}^u$. The converse is also true.
\begin{lemma}
 Let~$\Gamma$ be a drawing of~$T$ with $T_{uv}^u \subseteq h_{uv}^u$ $\forall uv \in E$.  Then,~$\Gamma$ is greedy. \label{lem:bisectors-dont-cross}
\end{lemma}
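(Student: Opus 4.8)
The plan is to prove the contrapositive: assume $\Gamma$ is \emph{not} greedy and show that some subtree violates the containment condition $T_{uv}^u \subseteq h_{uv}^u$. If $\Gamma$ is not greedy, then there exists a source-destination pair $s,t$ such that $s$ has \emph{no} neighbor $u$ with $|ut| < |st|$; equivalently, every neighbor $u$ of $s$ satisfies $|ut| \ge |st|$. The key observation is that $|ut| \ge |st|$ means $t$ lies in the closed half-plane bounded by the axis of $st$\dots wait, let me phrase this directly in terms of the relevant edges. For each neighbor $u$ of $s$, the condition $|ut| \ge |st|$ is exactly the statement that $t$ lies on $s$'s side of (or on) the axis of the edge $su$, i.e.\ $t \in \overline{h_{su}^s}$ (the closed half-plane containing $s$).

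First I would set up this dictionary between distance comparisons and half-planes: for an edge $su$, the point $t$ satisfies $|st| \le |ut|$ if and only if $t \in \overline{h_{su}^s}$. This is immediate from the definition of the axis as the perpendicular bisector, since the bisector is precisely the locus of points equidistant from $s$ and $u$. Then the failure of greediness at $s$ for target $t$ says that $t$ lies in the intersection $\bigcap_{u \in N(s)} \overline{h_{su}^s}$ over all neighbors $u$ of $s$.

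Now I would use the tree structure to locate the offending subtree. Since $t \neq s$, consider the neighbor $u^\ast$ of $s$ that lies on the (unique) path in $T$ from $s$ to $t$; then $t$ belongs to the subtree $T_{su^\ast}^{u^\ast}$, the component containing $u^\ast$ after deleting edge $su^\ast$. By the previous paragraph, $t \in \overline{h_{su^\ast}^s}$, which means $t \notin h_{su^\ast}^{u^\ast}$ (it lies in the opposite open half-plane, or on the axis). But the hypothesis of the lemma asserts $T_{su^\ast}^{u^\ast} \subseteq h_{su^\ast}^{u^\ast}$, and $t$ is a node of $T_{su^\ast}^{u^\ast}$, so $t \in h_{su^\ast}^{u^\ast}$ --- a contradiction. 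Hence no such pair $s,t$ can exist and $\Gamma$ is greedy.

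The main subtlety to handle carefully is the boundary case of equality, i.e.\ $|st| = |ut^\ast|$ where $t$ lies exactly on the axis of $su^\ast$. Since $h_{uv}^u$ is defined as the \emph{open} half-plane, a point on the axis is in neither $h_{su^\ast}^{u^\ast}$ nor $h_{su^\ast}^s$, so the hypothesis $T_{su^\ast}^{u^\ast} \subseteq h_{su^\ast}^{u^\ast}$ already forbids $t$ from lying on the axis; this is exactly what makes the open half-plane formulation give strict inequality $|u^\ast t| < |st|$ and thus genuine greediness. I expect this equality/boundary bookkeeping to be the only delicate point; the rest is a direct translation between the metric and half-plane formulations combined with the uniqueness of tree paths.
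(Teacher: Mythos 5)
Your proof is correct and uses essentially the same argument as the paper: pick the neighbor $u^\ast$ of $s$ on the unique $s$--$t$ path, note that $t$ lies in $T_{su^\ast}^{u^\ast} \subseteq h_{su^\ast}^{u^\ast}$, and translate membership in the open half-plane into the strict inequality $|u^\ast t| < |st|$ via the perpendicular-bisector definition. The paper states this directly rather than via the contrapositive, but the content is identical.
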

\begin{proof}
 For $s,t \in V$ let~$u$ be the neighbor of~$s$ on the unique $s$-$t$ path in~$T$.  Since $t \in T_{su}^u \subseteq h_{su}^u$, we have $|ut|<|st|$.
\end{proof}
Angelini et al.~\cite{abf-sgdae-10} further showed that greedy tree drawings are always planar and that in any greedy drawing of $T$ the angle between two adjacent edges must be strictly greater than $60^\circ$. Thus $T$ cannot have a node of degree $\ge 6$.

Let $\ray{u}{uv}$ denote the ray with origin~$u$ and direction~$\vec{uv}$.
For $u,v \in V$, let $\distT{u}{v}$ be the length of the $u$-$v$ path in~$T$.
For vectors~$\vec{ab}$, $\vec{cd}$, let~$\turnccw{ab}{cd}$ denote the counterclockwise turn (or turning angle) from~$\vec{ab}$ to~$\vec{cd}$.
\begin{lemma}[Lemma 7 in~\cite{abf-sgdae-10}]\label{lem:lem-slope}
 Consider two edges~$uv$ and~$wz$ in a greedy drawing of $T$, such that the path from~$u$ to~$w$
 does not contain~$v$ and~$z$.
 Then, the rays~$\ray{u}{uv}$ and~$\ray{w}{wz}$ diverge; see Fig.~\ref{fig:prelim:lem-slope}.
\end{lemma}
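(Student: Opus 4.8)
The plan is to convert the tree hypothesis into statements about which nodes lie in which perpendicular‑bisector half‑plane, and then argue geometrically that the two rays cannot meet in the forward direction (which is what the figure calls \emph{diverging}). Since the path from $u$ to $w$ avoids $v$ and $z$, deleting $uv$ leaves both $w$ and $z$ in the component $T_{uv}^u$, and deleting $wz$ leaves both $u$ and $v$ in $T_{wz}^w$. Applying the containment of Angelini et al., $T_{xy}^x\subseteq\hp{xy}{x}$, I get in particular the two ``outer'' inequalities $|zu|<|zv|$ (because $z\in\hp{uv}{u}$) and $|vw|<|vz|$ (because $v\in\hp{wz}{w}$), together with $|wu|<|wv|$ and $|uw|<|uz|$. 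I place the segment $uw$ on the $x$-axis and write $\alpha$ for the angle $\angle vuw$ between $\vec{uv}$ and $\vec{uw}$, and $\beta$ for $\angle zwu$ between $\vec{wz}$ and $\vec{wu}$. If $v$ and $z$ lie on opposite sides of the line through $u$ and $w$ the two rays head into disjoint half‑planes and trivially diverge, so the whole difficulty is the same‑side case, where $\ray{u}{uv}$ and $\ray{w}{wz}$ diverge exactly when $\alpha+\beta\ge 180\dg$.

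The core of the argument handles the base case in which $u$ and $w$ are adjacent, i.e.\ $uw\in E$. Here $uv$ and $uw$ are adjacent edges, as are $wz$ and $wu$, so the $60\dg$ lemma gives $\alpha>60\dg$ and $\beta>60\dg$. Assume for contradiction that the rays cross at a point $p$; then $u$, $w$, $p$ form a triangle with angles $\alpha$, $\beta$, and $180\dg-\alpha-\beta$, forcing $\alpha+\beta<180\dg$. The plan is to expand the two outer inequalities $|zu|<|zv|$ and $|vw|<|vz|$ with the law of cosines (writing $v$ and $z$ in the coordinates above) and add them; after cancelling the common terms this collapses to the single inequality
\begin{equation}
(|uv|+|wz|)\bigl(1+2\cos(\alpha+\beta)\bigr) > 2\,|uw|\,(\cos\alpha+\cos\beta).
\end{equation}
Now observe that $\alpha,\beta>60\dg$ force $\tfrac{\alpha+\beta}{2}\in[60\dg,90\dg)$ and $\tfrac{|\alpha-\beta|}{2}<90\dg$, hence $\cos\alpha+\cos\beta=2\cos\tfrac{\alpha+\beta}{2}\cos\tfrac{\alpha-\beta}{2}>0$, so the right‑hand side is positive. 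If moreover $\alpha+\beta\ge 120\dg$, then $\cos(\alpha+\beta)\le-\tfrac12$ and the left‑hand factor $1+2\cos(\alpha+\beta)\le 0$, making the left‑hand side nonpositive and contradicting the inequality. Thus a crossing would force $\alpha+\beta<120\dg$, which is incompatible with $\alpha>60\dg$ and $\beta>60\dg$. Hence the rays cannot cross and they diverge.

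The step I expect to be the main obstacle is extending this to a general path $u=p_0,p_1,\dots,p_k=w$. The displayed inequality survives verbatim, since it only uses the two outer containments, so a crossing still forces $\alpha+\beta<120\dg$. What breaks down is the angle bound: $v$ and $w$ are no longer adjacent, so I can no longer invoke the $60\dg$ lemma to conclude $\alpha>60\dg$, and indeed a long edge $uv$ can make $\angle vuw$ small. My plan is therefore to induct on the path length, peeling off the first edge $up_1$ and the last edge $p_{k-1}w$: the pairs $(uv,\;p_1p_2)$ and $(p_{k-1}p_{k-2},\;wz)$ satisfy the hypothesis for strictly shorter paths, so by induction their rays diverge, and the interior turns $\angle p_{i-1}p_ip_{i+1}>60\dg$ let me bound the total turning of the path direction from $\vec{u p_1}$ to $\vec{p_{k-1}w}$. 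The delicate point—and the part I expect to require the most care—is that divergence is not transitive, so composing these local divergence relations into a single statement about $\ray{u}{uv}$ and $\ray{w}{wz}$ must be done through an explicit bookkeeping of the accumulated turning angle (via $\turnccw{\cdot}{\cdot}$) rather than by naive chaining.
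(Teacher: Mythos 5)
The paper does not actually prove this statement: it is imported verbatim as Lemma~7 of Angelini et al.~\cite{abf-sgdae-10}, so there is no internal proof to compare against, and your attempt has to stand on its own. As it stands it is not yet a proof, for two reasons. The decisive gap is that you only handle the case $uw\in E$: for a path $u=p_0,\dots,p_k=w$ with $k\ge 2$ you describe an intended induction and yourself flag that you do not know how to compose the local divergences. That worry is justified --- divergence is not transitive, and knowing that $\ray{u}{uv}$ diverges from $\ray{p_1}{p_1 p_2}$ while all interior angles exceed $60\dg$ only bounds the turning of edge \emph{directions} along the path; it gives no control over $\angle vuw$ and $\angle zwu$, which depend on the positions of $u$ and $w$ and can each be arbitrarily small (your own remark about a long edge $uv$ applies equally to a long intermediate path). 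Since the paper applies the lemma almost exclusively to non-adjacent edges (e.g., an extremal edge versus the root edge of a subtree), the case you prove is not the one that is needed.

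Second, your reading of ``diverge'' as ``the rays do not meet in the forward direction'' is too weak for the role the lemma plays later: in the proofs of Lemma~\ref{lem:indep-angles-contain} and Lemma~\ref{lem:shrink} the authors derive contradictions from the rays being ``parallel or converging'', so divergence must be a quantitative condition on the two directions relative to the segment $uw$, not mere disjointness of two point sets. Consequently the opposite-side case cannot be dismissed as trivial (two rays entering opposite half-planes can still approach one another), and it is exactly there that your two ``outer'' containments lose their force: redoing your computation with $v$ and $z$ on opposite sides of the line $uw$ replaces $\cos(\alpha+\beta)$ by $\cos(\alpha-\beta)$ in the left-hand factor, which is then positive for $\alpha,\beta>60\dg$, and no contradiction results. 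What is sound in your attempt is the adjacent, same-side computation: the displayed inequality does follow from adding the two outer containments, and combined with $\alpha,\beta>60\dg$ it correctly forces $\alpha+\beta\ge 180\dg$ there. But the proposal proves neither the general case nor, arguably, the right notion of divergence, so the main content of the lemma remains open in your write-up.
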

\newcommand{\lemDegTwoContractText}{}
\begin{lemma}
 Let~$\Gamma$ be a greedy drawing of~$T$, $v \in V$, $\deg(v) = 2$, $\neighb(v) = \{ u,w \}$ the only two neighbors of~$v$, and~$T' = T - \{ uv, vw \} + \{ uw \}$.  The drawing~$\Gamma'$ induced by replacing segments~$uv$, $vw$ by~$uw$ in~$\Gamma$ is also greedy.
 \label{lem:deg2-contract}
\end{lemma}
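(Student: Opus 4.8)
The plan is to prove greediness of $\Gamma'$ through the half-plane characterization, that is, by verifying that $T_{xy}^x \subseteq \hp{xy}{x}$ holds in $\Gamma'$ for every edge $xy$ of $T'$; by Lemma~\ref{lem:bisectors-dont-cross} this is sufficient. Since $\Gamma'$ differs from $\Gamma$ only in that $v$ is deleted and the segment $uw$ is added, while every other vertex keeps its position, the edges of $T'$ split into two groups that I would handle separately: the edges inherited unchanged from $T$, and the single new edge $uw$.

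For an inherited edge $xy \in E \setminus \{uv, vw\}$ the endpoints $x,y$ occupy the same points in $\Gamma'$ as in $\Gamma$, so the bounding axis and hence the half-plane $\hp{xy}{x}$ are identical in both drawings. Moreover, suppressing the degree-$2$ vertex $v$ can only remove $v$ from whichever side of $xy$ it lay on, so the vertex set of the $T'$-subtree $T_{xy}^x$ is contained in that of the corresponding $T$-subtree. Hence $T_{xy}^x \subseteq \hp{xy}{x}$ in $\Gamma'$ follows immediately from the same containment in $\Gamma$, which holds because $\Gamma$ is greedy.

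The crux is the new edge $uw$. Removing it from $T'$ yields on the $u$-side exactly the vertex set of $T_{uv}^u$ and on the $w$-side exactly that of $T_{vw}^w$ (the vertices reachable from $u$, respectively $w$, without passing through the suppressed vertex $v$). I would show $T_{uv}^u \subseteq \hp{uw}{u}$, the statement for the $w$-side following by the symmetric argument $u \leftrightarrow w$. The key observation is a distance chain through the old position of $v$: for any point $p \in T_{uv}^u$ one has $|pu| < |pv|$ because $p \in T_{uv}^u \subseteq \hp{uv}{u}$ in $\Gamma$, and $|pv| < |pw|$ because $T_{uv}^u \subseteq T_{vw}^v \subseteq \hp{vw}{v}$, the inclusion holding since every vertex reachable from $u$ without using $v$ lies on the $v$-side of the edge $vw$. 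Chaining these gives $|pu| < |pw|$, i.e., $p \in \hp{uw}{u}$.

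The only thing to be careful about is the bookkeeping of which subtree becomes which after suppression, in particular the nestings $T_{uv}^u \subseteq T_{vw}^v$ and its mirror $T_{vw}^w \subseteq T_{uv}^v$, which make both half-plane conditions at $v$ in $\Gamma$ available simultaneously for the single new edge. Notably, the argument uses only the characterization via perpendicular bisectors and does not invoke the minimum-angle bound at $v$; the position of the removed vertex serves purely as the intermediate point certifying $|pu| < |pv| < |pw|$.
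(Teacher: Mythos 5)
Your proof is correct, but it is organized differently from the paper's. The paper argues directly on distance-decreasing paths: for any pair $x,y$, the $x$-$y$ path $\rho'$ in $T'$ is obtained from the path $\rho$ in $T$ by possibly deleting the intermediate vertex $v$, and since the distance to $y$ strictly decreases along $\rho$ (a consequence of greediness for trees), it still decreases along the subsequence $\rho'$ --- a two-line argument that never mentions bisectors. You instead verify the half-plane condition edge by edge and invoke Lemma~\ref{lem:bisectors-dont-cross} to conclude greediness, with the inherited edges handled by monotonicity of the subtree vertex sets and the new edge $uw$ handled by the chain $|pu|<|pv|<|pw|$ obtained from $p\in \hp{uv}{u}$ and $T_{uv}^u\subseteq T_{vw}^v\subseteq \hp{vw}{v}$. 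The two arguments rest on the same underlying fact (the transitivity of strict inequalities through the position of the suppressed vertex $v$; indeed your chain is exactly the paper's ``distance decreases along $u,v,w$'' read from the other endpoint), but your version makes explicit use of both directions of the bisector characterization --- the Angelini et al.\ containment for $\Gamma$ and Lemma~\ref{lem:bisectors-dont-cross} for $\Gamma'$ --- whereas the paper's version is shorter and needs only the forward direction. What your route buys is a statement that is local to the single new edge and meshes directly with the bisector machinery used throughout the rest of the paper; what it costs is some bookkeeping (the subtree nestings $T_{uv}^u\subseteq T_{vw}^v$ and $T_{vw}^w\subseteq T_{uv}^v$) that the path formulation avoids. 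Both are complete and correct.
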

\begin{proof}
 For~$x,y$ in~$T'$, let~$\rho'$ and $\rho$ be the $x$-$y$-paths in~$T'$ and in~$T$, respectively.  If~$\rho \neq \rho'$, then~$v \in \rho$.
 Since distance to~$y$ decreases along~$\rho$, it also decreases along~$\rho'$.
 Hence,~$\Gamma'$ is greedy.
\end{proof}
Next we generalize some concepts from Leighton and Moitra~\cite{lm-srgems-10}. For $k=3,4,5$, we define an \emph{irreducible $k$-tuple} as a $k$-tuple of nodes~$(b_1, \dots, b_k)$ 
 in a graph~$G=(V,E)$, such that~$\deg(b_1)=k$, 
 $b_1 b_2, b_1 b_3, \dots, b_1 b_k \in E$ (we call these $k-1$ edges \emph{branches} of the $k$-tuple) and 
 removing any branch~$b_1 b_j$ disconnects the graph.
 A $k$-tuple $(b_1, \ldots, b_k)$ and an $l$-tuple $(x_1, \ldots, x_l)$ are \emph{independent}, if
 $\{ b_1, \ldots, b_k \} \cap \{ x_1, \ldots, x_l \} = \emptyset$, and
  deleting all the branches keeps~$b_1$ and~$x_1$ connected.
\par
Let~$\Gamma$ be a greedy drawing of~$T$. We shall consider subtrees~$T_i=(V_i, E_i)$ of $T$, such that~$T_i$ has root~$r_i$,
$\deg(r_i)=1$ in~$T_i$ and~$v_i$ is the neighbor of~$r_i$ in~$T_i$. We define the \emph{polytope of a rooted subtree} $T_i$ as $\polytope(T_i) = \bigcap \{ h_{uw}^w \mid uw \in E_i, uw \neq r_i v_i, \; \distT{w}{r_i} < \distT{u}{r_i}\}$.

\begin{definition}[Extremal edges]
 For~$j=1,2$, let~$a_j b_j \neq v_i r_i$ be an edge of~$T_i$, $\distT{a_j}{r_i} < \distT{b_j}{r_i}$, such that direction~$\vec{a_j b_j}$
 is closest to~$\vec{v_i r_i}$ clockwise for~$j=1$ and counterclockwise for~$j=2$.
 We call edges~$a_j b_j$ \emph{extremal}.
\end{definition}
Note that by Lemma~\ref{lem:lem-slope}, $\ray{a_j}{a_j b_j}$ and~$\ray{v_i}{v_i r_i}$ diverge.
In the following two definitions, let~$e_j = a_j b_j$, $j=1,2$ be the extremal edges of~$T_i$.
\begin{definition}[Open angle]
Let~$\turnccw{a_1 b_1}{a_2 b_2} > 180\dg$. Then, $\polytope(T_i)$ is unbounded,
and we say that $T_i$ is drawn with an \emph{open angle}.
 \begin{compactitem}
  \item[(a)] If~$a_1 b_1 \not\subseteq \hp{a_2 b_2}{b_2}$ and~$a_2 b_2 \not\subseteq \hp{a_1 b_1}{b_1}$,
   we define 
   $ \angle T_i = \hp{a_1 b_1}{a_1} \cap \hp{a_2 b_2}{a_2}.$
   Let~$x_i$ be the intersection of~$\axis(e_1)$ and~$\axis(e_2)$.
   We set~$\apex(\angle T_i) = x_i$; see Fig.~\ref{fig:prelim:op_angle}.
  \item[(b)] If~$a_j b_j \subseteq \hp{a_k b_k}{b_k}$ for $j=1, k=2$ or $j=2, k=1$,
  let~$\angle T_i$ be the cone defined by moving the boundaries 
  of~$\hp{a_1 b_1}{a_1}$, $\hp{a_2 b_2}{a_2}$ to~$b_j$ ($b_k \in \angle T_i$),
  and~$\apex(\angle T_i) = b_j$.
 \end{compactitem}
 We call~$\angle T_i$ \emph{the opening angle} of~$T_i$ in~$\Gamma$ (orange in Fig.~\ref{fig:prelim:op_angle}).
 We write~$|\angle T_i| = \alpha$, where~$\alpha$ is the angle between the two rays of~$\angle T_i$.   
 \end{definition}
Obviously,~$\polytope(T_i) \subseteq \angle T_i$ in~(a).
This is also the case in~(b) by Observation~\ref{lem:move-halfplane}.
\begin{observation}
Let~$h$ be an open half-plane and~$p \notin h$. Let~$h'$ be the half-plane created by a parallel translation of the boundary of~$h'$ to~$p$. Then,~$h \subseteq h'$.
 \label{lem:move-halfplane}
\end{observation}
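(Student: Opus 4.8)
The plan is to encode the half-plane $h$ by a single linear inequality and then read the containment off directly. First I would fix a unit vector $n$ that is the outward normal of $h$ (the normal pointing away from the interior of $h$) together with a scalar $c$, so that $h = \{x \in \mathbb{R}^2 : n\cdot x < c\}$ and its bounding line is $\ell = \{x : n\cdot x = c\}$. The hypothesis $p \notin h$ means exactly that $p$ lies in the closed complementary half-plane, i.e.\ $n\cdot p \ge c$.

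Next I would make precise what $h'$ is. A parallel translation of the boundary $\ell$ to $p$ produces the line $\ell' = \{x : n\cdot x = n\cdot p\}$, and $h'$ is the open half-plane bounded by $\ell'$ lying on the same side as $h$, i.e.\ with the same outward normal $n$, namely $h' = \{x : n\cdot x < n\cdot p\}$. With both half-planes written in this normal form the containment is immediate: for any $x \in h$ we have $n\cdot x < c \le n\cdot p$, hence $x \in h'$, which proves $h \subseteq h'$.

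I expect no genuine obstacle here, since the statement is elementary once the normal form is fixed; the only point that needs care is that the phrase ``the half-plane created by a parallel translation'' is ambiguous about orientation. The intended reading is that $h'$ keeps the same facing direction as $h$ (same normal $n$), and under that reading the argument above is complete. In the degenerate case $p \in \ell$ we have $n\cdot p = c$, so $\ell' = \ell$ and $h' = h$, and the claim holds trivially.
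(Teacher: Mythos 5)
Your proof is correct and complete: writing $h=\{x : n\cdot x < c\}$ in normal form, observing that $p\notin h$ gives $n\cdot p\ge c$, and reading off the containment $n\cdot x < c \le n\cdot p$ is exactly the right argument, and you also correctly resolve the orientation ambiguity (same outward normal) and the degenerate case $p\in\partial h$. The paper states this as an Observation with no proof at all, treating it as self-evident, so your write-up simply supplies the elementary justification the authors omitted.
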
 
\begin{figure}
 \hfill
  \subfloat[]{\includegraphics[scale = 1.0, page=2]{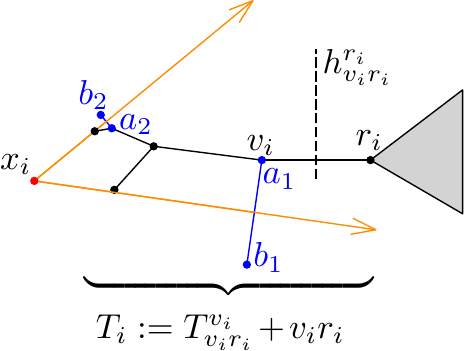}
  \label{fig:prelim:lem-slope}
 }
 \hfill
 \subfloat[]{\includegraphics[scale = 1.0, page=1]{fig/full/prelim.pdf}
  \label{fig:prelim:op_angle}
 } 
 \hfill\null
 \caption{
 \protect\subref{fig:prelim:lem-slope} Sketch of Lemma~\ref{lem:lem-slope}.
 \protect\subref{fig:prelim:op_angle} Subtree~$T_i$ with opening 
 angle~$\angle T_i$ (orange), 
 extremal edges~$a_1 b_1$, $a_2 b_2$ (blue) and apex $x_i$ (red).
 The subtree~$T_{v_i r_i}^{r_i}$ (gray triangle) must be contained in the half-plane~$\hp{v_i r_i}{r_i}$
 and the cone~$\angle T_i$. }
\end{figure}
\begin{definition}[Closed and zero angle]
 Let~$\turnccw{a_1 b_1}{a_2 b_2} < 180\dg$ (or $= 180\dg$).
 Let~$C_i = h_{a_1 b_1}^{a_1} \cap h_{a_2 b_2}^{a_2}$, and let~$p_j$ be the midpoint of~$e_j$.  We denote the part of~$C_i$ bounded by segment~$p_1 p_2$ containing~$r$ by~$\angle T_i$ and say that~$T_i$ is drawn with a \emph{closed (or zero) angle}; see Fig.~\ref{fig:prelim:closed}.
  We write~$|\angle T_i| < 0$ (or $=0$).
\end{definition}

We say that two subtrees~$T_1$, $T_2$ are \emph{independent}, if
$T_2\setminus \{ r_2 \} \subseteq T_{v_1 r_1}^{r_1}$ and~$T_1\setminus \{ r_1 \} \subseteq T_{v_2 r_2}^{r_2}$.
If~$T_1$ and~$T_2$ are independent, then 
$T_2 \setminus \{ r_2 \} \subseteq h_{v_1 r_1}^{r_1}$ and~$T_1 \setminus \{ r_1 \} \subseteq h_{v_2 r_2}^{r_2}$
in~$\Gamma$.
Also, if~$r_2 \notin T_{v_1 r_1}^{r_1}$, then~$r_2 = v_1$.
\newcommand{\lemIndepAnglesContainIText}{}
\newcommand{\lemIndepAnglesContainIIText}{}
\begin{figure}
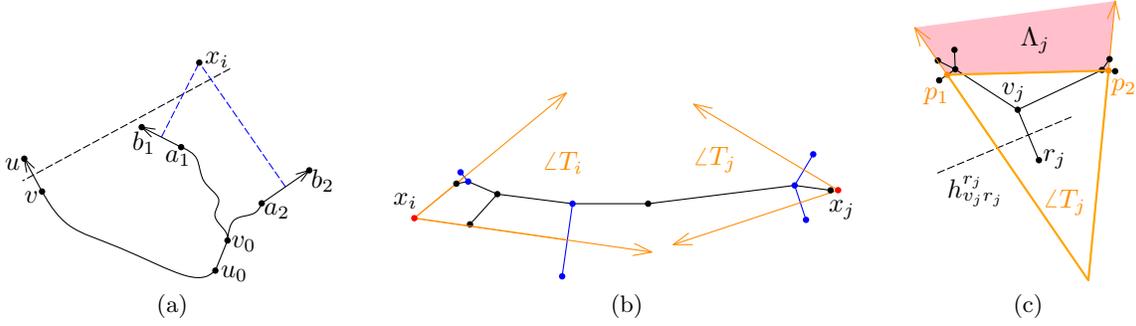

  \hfill
 \subfloat[]{\includegraphics[page=4, scale=1.0]{fig/full/prelim.pdf}
  \label{fig:prelim:apex}}
   \hfill
 \subfloat[]{\includegraphics[page=6]{fig/full/prelim.pdf}
  \label{fig:prelim:lem-2open-angles} }
  \hfill
 \subfloat[]{\includegraphics[page=7, scale=1.0]{fig/full/prelim.pdf}
  \label{fig:prelim:closed} }
  \hfill\null
  \caption{Proof of Lemma~\ref{lem:indep-angles-contain}.
  \protect\subref{fig:prelim:apex}~Proving $\apex(T_i) \in \polytope(T_j)$ using Lemma~\ref{lem:lem-slope}.
  \protect\subref{fig:prelim:lem-2open-angles}~Open angles of independent subtrees must contain apices of each other.
  \protect\subref{fig:prelim:closed}~Subtree~$T_j$ with closed angle~$\angle T_j$ and boundary segment~$p_1 p_2$. 
} 
\end{figure}
\begin{lemma}
  Let~$T_i$ and~$T_j$ be independent.
  \begin{compactitem}
    \item[(a)]  Let $|\angle T_i|$, $|\angle T_j|>0$ in~$\Gamma$.  Then, $\apex(\angle T_i) \in \angle T_j$ and~$\apex(\angle T_j) \in \angle T_i $.
    \item[(b)] Let~$|\angle T_i| > 0$, $|\angle T_j| \leq 0$, $p_1 p_2$ the boundary segment of~$\angle T_j$.  Then,~$\apex(T_i) \in \angle T_j$, and~$p_1, p_2 \in \angle T_i$.
    \item[(c)] Let~$|\angle T_i|$, $|\angle T_j| \leq 0$, $p_1 p_2$ the boundary segment of~$\angle T_j$.  Then, $p_1, p_2 \in \angle T_i$.
  \end{compactitem}
 \label{lem:indep-angles-contain}
\end{lemma}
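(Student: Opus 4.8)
The plan is to collapse the three cases into two symmetric claims and then drive both with the divergence property of Lemma~\ref{lem:lem-slope}. Write $x_i=\apex(\angle T_i)$, let $e_1=a_1b_1,e_2=a_2b_2$ be the extremal edges of $T_i$ (with $a_k$ toward $r_i$), and let $c_1d_1,c_2d_2$ be the extremal edges of $T_j$ (with $c_k$ toward $r_j$) and midpoints $p_1,p_2$. Then case~(a) is exactly the claim ``$T_i$ open $\Rightarrow x_i\in\angle T_j$'' applied in both directions, case~(c) is the claim ``$p_1,p_2\in\angle T_i$'', and case~(b) is one instance of each. So it suffices to prove these two claims. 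The common starting point is $T_j\setminus\{r_j\}\subseteq\polytope(T_i)\subseteq C_i$: by independence $T_j\setminus\{r_j\}\subseteq T_{v_ir_i}^{r_i}$, and for every edge $uw$ defining $\polytope(T_i)$ (so $w$ is toward $r_i$ and $uw\neq r_iv_i$) the removal of $uw$ keeps $r_i$ on the $w$-side, so $T_{v_ir_i}^{r_i}\subseteq T_{uw}^w\subseteq\hp{uw}{w}$ by the property of Angelini et al.~\cite{abf-sgdae-10}; intersecting over all such $uw$ gives the containment. In particular each node $c_k,d_k$ of $T_j$ (note $c_k,d_k\neq r_j$, since the only $T_j$-edge at $r_j$ is the excluded $v_jr_j$) lies in $\polytope(T_i)$.

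For the first claim I would prove the stronger statement $x_i\in\polytope(T_j)$ (Fig.~\ref{fig:prelim:apex}), i.e.\ $x_i\in\hp{cd}{c}$ for every edge $cd$ of $T_j$ with $c$ toward $r_j$. Fix such an edge and an extremal edge $a_kb_k$ of $T_i$. Because $a_k$ reaches the outside of $T_i$ only through $r_i$, the $a_k$--$c$ path in $T$ contains neither $b_k$ (the child of $a_k$) nor $d$ (the child of $c$), so Lemma~\ref{lem:lem-slope} applies and $\ray{a_k}{a_k b_k}$ and $\ray{c}{cd}$ diverge. Since $x_i$ lies on $\axis(e_k)$, which is perpendicular to $a_kb_k$, this divergence determines on which side of $\axis(cd)$ the apex sits and yields $x_i\in\hp{cd}{c}$; intersecting over all edges of $T_j$ gives $x_i\in\polytope(T_j)$. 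If $T_j$ is open, then $\polytope(T_j)\subseteq\angle T_j$ and we are done; if $T_j$ is closed (case~(b)) I additionally verify, from the same divergence, that $x_i$ lies on the $r_j$-side of the truncating segment $p_1p_2$, so that $x_i\in\angle T_j$.

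For the second claim I use convexity. Both $C_i$ and the open cone $\angle T_i$ are intersections of two half-planes, hence convex, and in the closed case $\angle T_i$ is the piece of the convex set $C_i$ cut off by a chord and is therefore again convex. Since $c_k,d_k\in\polytope(T_i)\subseteq C_i$, the midpoint $p_k=\tfrac12(c_k+d_k)$ lies in $C_i$. When $T_i$ is open (case~(b)) we have $\polytope(T_i)\subseteq\angle T_i$, so $c_k,d_k\in\angle T_i$ and hence $p_k\in\angle T_i$ by convexity. When $T_i$ is closed (case~(c)) I argue exactly as for the apex that $c_k,d_k$, and hence $p_k$, lie on the $r_i$-side of the truncating segment of $\angle T_i$, again placing $p_k$ in $\angle T_i$. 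The only degenerate configuration is $r_j=v_i$ (the sole way $r_j\notin T_{v_ir_i}^{r_i}$, by the remark preceding the lemma); there the relevant points fall on the boundary and the inclusions hold in the closure.

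The hard part will be the single geometric step inside the first claim: turning ``the two rays diverge'' (Lemma~\ref{lem:lem-slope}) into ``the apex $x_i$ lies on the $c$-side of $\axis(cd)$.'' The subtlety is that $x_i$ is an intersection of bisectors rather than a convex combination of tree nodes, so half-plane membership of the nodes of $T_i$ does not transfer to $x_i$ automatically; the argument must genuinely exploit that the bounding rays of $\angle T_i$ are perpendicular to the extremal edges and that divergence forbids any bisector of $T_j$ from separating $x_i$ from the side carrying $T_j\setminus\{r_j\}$. The side-of-segment checks needed in the closed cases (b) and (c) are the same obstacle in disguise and would be discharged by the identical divergence argument.
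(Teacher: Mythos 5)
Your decomposition and your choice of tools match the paper's proof almost exactly: the paper also reduces everything to ``$\apex(T_i)\in\polytope(T_j)$'' plus ``$p_1,p_2\in\polytope(T_i)$'', drives both with Lemma~\ref{lem:lem-slope}, and handles the closed-angle cases by an extra check against the truncating segment. But the one step you defer as ``the hard part'' is precisely the mathematical content of the lemma, and the sketch you give for it would not close it. You fix a single extremal edge $a_kb_k$ and claim that divergence of $\ray{a_k}{a_kb_k}$ and $\ray{c}{cd}$ ``determines on which side of $\axis(cd)$ the apex sits.'' It does not: divergence constrains only the relative orientation of the two axes, and $x_i$ can lie anywhere along $\axis(e_k)$, on either side of $\axis(cd)$. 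The paper's argument necessarily uses \emph{both} extremal edges at once: since $v_0,a_1,b_1,a_2,b_2\in\hp{cd}{c}$ and $\axis(cd)$ cannot cross the $v_0$--$b_k$ paths, if $x_i\notin\hp{cd}{c}$ then $\axis(cd)$ would have to cross both segments joining $x_i$ to the two extremal edges, and only from that double crossing does one extract, for at least one $k$, that $\ray{c}{cd}$ and $\ray{a_k}{a_kb_k}$ are parallel or converge, contradicting Lemma~\ref{lem:lem-slope}. Without this two-sided argument the claim $x_i\in\polytope(T_j)$ is unproved.

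The closed-angle boundary checks are likewise only gestured at. The paper discharges them by a concrete mechanism you do not supply: writing $\Lambda_j=C_j\setminus\angle T_j$, it shows $\hp{v_jr_j}{r_j}\cap\Lambda_j=\emptyset$ (because $\axis(v_jr_j)$ must cross the sides of $C_j$ beyond $p_1,p_2$ and, by divergence with the extremal rays of $T_j$, cannot cut the two unbounded sides of $\Lambda_j$), and separately that $x_i\in\hp{v_jr_j}{r_j}$ by running the polytope argument on the root edge $v_jr_j$ itself; together these place $x_i$ in $\angle T_j$ rather than in $\Lambda_j$. Your plan names the right target and the right lemma, but both the apex-placement step and the $\Lambda$-exclusion step are left as acknowledged gaps, so the proposal is an outline of the paper's proof rather than a proof.
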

\begin{proof}
 (a)~If~$\apex(T_i)$ is a node of~$T_i$, then~$\apex(T_i) \in T_i \subseteq \polytope(T_j) \subseteq \angle T_j$.
 Now let~$x_i = \apex(T_i)$ be the intersection of $\axis(e_1)$ and~$\axis(e_2)$,
 $e_k = a_k b_k$, $k=1,2$ the extremal edges of~$T_i$; see Fig.~\ref{fig:prelim:apex}. 
 Let~$\rho_k$ be the~$r_i$-$b_k$-path in~$T_i$ and~$u_0 \rightarrow v_0$ the last common edge of~$\rho_1$ and~$\rho_2$.
 Without loss of generality, let~$\vec{a_1 b_1}$ point upwards to the left and~$\vec{a_2 b_2}$ upwards to the right.
 Then,~$x_i$ must lie to the right of~$\vec{a_1 b_1}$ and to the left of~$\vec{a_2 b_2}$.
 Consider an edge~$u v$ in~$T_{u_0 v_0}^{u_0} + \{ u_0 v_0 \} $ such that~$v, u_0$ lie on the~$u$-$v_0$-path in~$T$.
 It is~$v_0, a_k, b_k \in \hp{uv}{v}$, $k=1,2$, and~$\axis(uv)$ does not cross
 the $v_0$-$b_k$-path. 
 Assume $x_i \notin \hp{uv}{v}$.
 Then, $\axis(uv)$ must cross both blue segments in Fig.~\ref{fig:prelim:apex},
 and for one~$k \in \{ 1,2\}$,
 rays~$\ray{v}{vu}$ and~$\ray{a_k}{a_k b_k}$ must be parallel or converge,
 a contradiction to Lemma~\ref{lem:lem-slope}.
 Hence, $x_i \in \hp{v u}{v}$. 
 Thus,~$x_i \in \polytope(T_j) \subseteq \angle T_j$.
 \par
(b)~For the cone~$C_j$
 from the definition of~$\angle T_j$, it must hold~$r_j \in C_j$.
  Let~$\Lambda_j = C_j \setminus \angle T_j$;
 see Fig.~\ref{fig:prelim:closed}.
 Let~$p_1 p_2$ be horizontal and~$C_j$ point downwards. 
 Since~$p_1, p_2$ lie on $T_j \setminus \{ r_j v_j\} \subseteq h_{v_j r_j}^{v_j}$,
 $\axis(v_j r_j)$ must cross the sides of~$C_j$ below~$p_1$ and~$p_2$.
 By Lemma~\ref{lem:lem-slope},~$\axis(v_j r_j)$ cannot cut the two unbounded sides of~$\Lambda_j$, since rays~$\ray{v_j}{v_j r_j}$ and~$\ray{a_k}{a_k b_k}$ diverge, $k=1,2$.
 Therefore, it is~$h_{v_j r_j}^{r_j} \cap \Lambda_j = \emptyset$.
 Hence,~$T_i\setminus \{ r_i\} \subseteq T_{r_j v_j}^{r_j}$ must be contained in~$\angle T_j$.
 By the same argument as in the proof of~(b), $\apex(T_i)$ lies in~$\polytope(T_j)$,
 and by choosing $vu = r_j v_j$ in~(a) we get~$\apex(T_i) \in \hp{v_j r_j}{r_j}$.
 Therefore, $x_i = \apex(T_i) \notin \Lambda_j$, and~$x_i \in \angle T_j$.
 Finally, since~$p_1, p_2$ lie on~$T_j$,
 it holds: $p_1, p_2 \in \polytope(T_i) \subseteq \angle T_i$.
  \par
 (c)~Since~$p_1, p_2$ lie on~$T_j$, it is~$p_1, p_2 \in \polytope(T_i)$.
 Also, $p_1, p_2 \in \hp{v_i r_i}{r_i}$.
 By the same argument as in the proof of~(a),
 $h_{r_i v_i}^{r_i} \cap \Lambda_i = \emptyset$.
 Hence, $p_1, p_2 \in \angle T_i$. 
\end{proof}
\begin{lemma}[generalization of Claim~4 in~\cite{lm-srgems-10}]
 Let~$T_i$, $T_j$ be two independent subtrees. Then, either~$|\angle T_i| > 0$
 or~$|\angle T_j| > 0$.
 \label{lem:two-closed-angles}
\end{lemma}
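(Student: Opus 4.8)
The plan is to argue by contradiction: assume that both $|\angle T_i|\le 0$ and $|\angle T_j|\le 0$, i.e.\ both subtrees are drawn with a closed or zero angle, and derive a contradiction from Lemma~\ref{lem:indep-angles-contain}(c). First I would pin down the geometry of a closed angle. Writing $e_1^i=a_1^i b_1^i$, $e_2^i=a_2^i b_2^i$ for the extremal edges of $T_i$ and $q_1,q_2$ for their midpoints (the boundary segment of $\angle T_i$), each $q_k$ lies on $\axis(e_k^i)$, so $\angle T_i$ is exactly the bounded triangle with apex $x_i=\axis(e_1^i)\cap\axis(e_2^i)$ and base $q_1 q_2$; it is convex, contains $r_i$, and satisfies $\polytope(T_i)\subseteq\angle T_i$. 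Crucially, the extremal edges are the ones whose direction is closest to $\vec{v_i r_i}$, and since the cone $C_i$ opens away from $x_i$ while $r_i$ sits between $x_i$ and the base, one checks that $\vec{v_i r_i}$ points toward the apex $x_i$. Because $T_i$ and $T_j$ are independent, the $v_i$--$v_j$ path in $T$ runs $v_i,r_i,\dots,r_j,v_j$, so $\vec{v_i r_i}$ (and likewise $\vec{v_j r_j}$) points into the common connecting region; hence both apices $x_i,x_j$ face one another.

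Next I would invoke Lemma~\ref{lem:indep-angles-contain}(c) in both directions: the endpoints $p_1,p_2$ of the base of $\angle T_j$ lie in $\angle T_i$, and the endpoints $q_1,q_2$ of the base of $\angle T_i$ lie in $\angle T_j$. Convexity of the two triangles upgrades this to containment of the whole base segments, $\overline{p_1 p_2}\subseteq\angle T_i$ and $\overline{q_1 q_2}\subseteq\angle T_j$. The contradiction I am aiming for is the purely geometric fact that two bounded triangles whose apices point toward one another cannot each contain the other's base. The clean instance is when the two base lines are parallel: a direct width estimate settles it, since each triangle narrows monotonically from its base toward its apex, so to cover the far base it would have to be wider, at that base's height, than its own base is -- which forces each base to be strictly longer than the other, an impossibility.

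For the remaining, non-parallel configurations I would bring in Lemma~\ref{lem:lem-slope}. Because $T_i$ and $T_j$ are independent, for every extremal edge of $T_i$ and every extremal edge of $T_j$ the path joining their near endpoints avoids the far endpoints, so the rays $\ray{a_k^i}{a_k^i b_k^i}$ and $\ray{a_l^j}{a_l^j b_l^j}$ diverge. Since each extremal edge is perpendicular to a leg of its triangle, this divergence constrains the relative orientation of the legs, and hence of the two bases, which I would use to exclude the tilted placements and reduce them to the parallel-type width argument. The \textbf{main obstacle} is precisely this last step: turning the qualitative ``diverging rays'' into a quantitative orientation constraint that uniformly rules out every non-parallel placement of the two triangles. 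A secondary technical point is the degenerate zero-angle case $|\angle T_i|=0$, where the two axes are parallel and $\angle T_i$ is a half-strip rather than a triangle; the same monotone-width comparison still applies, but the boundedness step must be restated for a slab cut off by the segment $q_1 q_2$.
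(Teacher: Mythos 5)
Your overall route is exactly the paper's: assume $|\angle T_i|,|\angle T_j|\le 0$, apply Lemma~\ref{lem:indep-angles-contain}(c) in both directions to conclude that each region contains the boundary segment of the other, and derive a contradiction. The paper's proof consists of precisely these two invocations followed by the bare assertion that this mutual containment ``is not possible,'' so up to that point you have reproduced it faithfully; the step you flag as your \emph{main obstacle} is exactly the step the paper does not spell out.

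That said, as written your proposal does not close the argument: the width/monotonicity comparison is only carried out for parallel base lines, and you explicitly leave the non-parallel configurations open. Your instinct that something beyond the bare containment of the two base segments is needed is correct --- for two arbitrary convex triangles, mutual containment of designated sides is \emph{not} contradictory (take two copies of one triangle with different sides designated as bases), so the tree structure must enter. A cleaner way to finish than quantifying the divergence of extremal rays is to use more of what the proof of Lemma~\ref{lem:indep-angles-contain} actually establishes: the whole drawing of $T_{v_j r_j}^{r_j}$, hence all of $T_i\setminus\{r_i\}$ including the far endpoints $b^i_k$ of $T_i$'s extremal edges, lies in $\angle T_j$, and the apex argument from parts (a) and (b) applies unchanged to give $\apex(C_i)\in\angle T_j$ as well. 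Since the closed region $\angle T_i$ is the convex hull of $q_1$, $q_2$ and $\apex(C_i)$, with $q_1,q_2$ on $T_i$, convexity yields $\angle T_i\subseteq\angle T_j$, and symmetrically $\angle T_j\subseteq\angle T_i$, so the two regions coincide; but then $b^i_1\in T_i\subseteq\angle T_j=\angle T_i\subseteq \hp{a^i_1 b^i_1}{a^i_1}$, contradicting that $b^i_1$ lies on the far side of $\axis(a^i_1 b^i_1)$. The degenerate case $|\angle T_i|=0$, where the region is an unbounded half-slab rather than a triangle, still needs the separate treatment you already anticipate.
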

\begin{proof}
 Assume $|\angle T_i|,|\angle T_j| \leq 0$. By Lemma~\ref{lem:indep-angles-contain}(c), $\angle T_i$ contains the boundary segment~$p_j^1 p_j^2$ of~$\angle T_j$, and vice versa. This is not possible.
\end{proof}
We shall use the following lemma to provide a certificate of non-existence of a greedy drawing.
\begin{lemma}
Let~$T_i$, $i=1, \ldots, d$ be pairwise independent subtrees, and~$\alpha_i = |\angle T_i|$.  Then,  $$\sum_{i=1,\ldots,d, \alpha_i > 0} \alpha_i > (d-2)180\dg.$$
 \label{lem:indep-angles-sum}
\end{lemma}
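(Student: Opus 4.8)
The plan is to prove this inequality by induction on the number $d$ of pairwise independent subtrees, using the two-subtree results of Lemma~\ref{lem:indep-angles-contain} and Lemma~\ref{lem:two-closed-angles} as the base and as the engine driving the inductive step. For the base case $d=2$, the claim asserts that $\sum_{\alpha_i>0}\alpha_i > 0$, which is immediate from Lemma~\ref{lem:two-closed-angles}: at least one of the two independent subtrees has a strictly positive opening angle, so the sum over positive angles is strictly positive.

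For the inductive step, the natural geometric idea is to consider the common apex structure of the open angles. Since the subtrees are pairwise independent, Lemma~\ref{lem:indep-angles-contain}(a) tells us that for any two subtrees with positive opening angles, each contains the apex of the other. I would fix attention on a subtree $T_k$ whose apex $\apex(\angle T_k)$ is, in some sense, extremal (for instance, the apex that is ``innermost'' or that realizes a particular extreme coordinate among all apices), so that it acts as a common reference point lying inside all the other cones. The key is that all the cones $\angle T_i$ with $\alpha_i>0$ contain this common point $x := \apex(\angle T_k)$ by part~(a), and the subtrees with $\alpha_i\le 0$ have their boundary segments $p_1p_2$ contained in the other cones by parts~(b) and~(c). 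Translating each cone so that its apex sits at $x$ only enlarges it (by Observation~\ref{lem:move-halfplane} applied to the bounding half-planes), so we may assume all positive-angle cones share the apex $x$. Around $x$ the cones are arranged in angular order consistent with the planarity of the greedy drawing (greedy tree drawings are planar), and the total angular measure around $x$ is at most $360\dg$.

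The core combinatorial argument then reduces to the following: the $d$ subtrees, being pairwise independent, correspond to $d$ edges/directions emanating around $x$ that must respect a cyclic order, and each subtree ``uses up'' an angular sector complementary to its opening cone. Concretely, I expect to show that the gaps between consecutive cones around $x$ sum to at least some fixed quantity, and that each closed or zero angle ($\alpha_i\le 0$) contributes a forbidden region whose presence forces the positive angles to compensate. Writing $\beta_i = 180\dg - \alpha_i$ for the complementary angle at the apex (the angle subtended by the ``forbidden'' side), the planarity and the containment relations should yield $\sum_i \beta_i < 360\dg$ after accounting for the wrap-around, which upon substituting $\alpha_i = 180\dg - \beta_i$ and summing only over $\alpha_i>0$ rearranges into the desired bound $\sum_{\alpha_i>0}\alpha_i > (d-2)180\dg$. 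The bookkeeping must handle the $\alpha_i\le 0$ terms, which are omitted from the left-hand sum but still occupy angular space around $x$; these only make the inequality easier, since dropping a nonpositive $\alpha_i$ from the sum while it still consumes a sector strengthens the slack.

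The hard part will be making the ``angular order around a common apex'' argument fully rigorous, in particular justifying that after the half-plane translations the cones can be treated as concurrent and that their angular arrangement around $x$ is genuinely cyclic without overlaps—this is where planarity of the greedy drawing and the independence (each $T_i\setminus\{r_i\}$ sitting in the complementary subtree of each $T_j$) must be combined carefully. I would most likely organize this as an induction that peels off one subtree at a time: remove the subtree $T_d$ realizing the smallest positive angular sector, apply the inductive hypothesis to the remaining $d-1$ pairwise independent subtrees to get $\sum_{i<d,\alpha_i>0}\alpha_i > (d-3)180\dg$, and then show that re-inserting $T_d$ adds more than $180\dg$ to the positive sum by a direct two- or three-subtree angle-chasing argument based on Lemma~\ref{lem:indep-angles-contain}. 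Keeping the inductive hypothesis compatible with the possible sign changes (a subtree that is open relative to the full set need not remain so as subtrees are removed) is the main subtlety I would watch for.
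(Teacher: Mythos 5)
There is a genuine gap: neither of the two mechanisms you propose for the inductive step actually works. First, the ``translate all cones to a common apex'' reduction misapplies Observation~\ref{lem:move-halfplane}. That observation enlarges a half-plane only when the target point lies \emph{outside} it; here $x=\apex(\angle T_k)$ lies \emph{inside} each $\angle T_i$ (that is exactly what Lemma~\ref{lem:indep-angles-contain}(a) gives you), so translating the bounding lines of $\angle T_i$ to $x$ \emph{shrinks} the cone rather than enlarging it. Worse, once all apices coincide the mutual-containment relations become vacuous, and the ``forbidden'' sectors of size $180\dg-\alpha_i$ are not disjoint sectors around a single point, so the inequality $\sum_i(180\dg-\alpha_i)<360\dg$ does not follow from anything you have set up. Second, the peeling induction cannot close: every opening angle satisfies $\alpha_i\le 180\dg$, so re-inserting $T_d$ adds \emph{at most} $180\dg$ to the positive sum, while the bound you must prove increases by exactly $180\dg$ from $d-1$ to $d$. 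The weak inductive hypothesis $\sum_{i<d}\alpha_i>(d-3)180\dg$ carries no slack, so ``adding more than $180\dg$'' is impossible and the step fails for every $d\ge 3$.

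Your target inequality $\sum_i(180\dg-\alpha_i)<360\dg$ is nevertheless the right arithmetic; what is missing is the correct geometric carrier for it. The paper's proof is direct and non-inductive: using Lemma~\ref{lem:indep-angles-contain}, the $d$ apices together with the $d$ intersection points of boundaries of cyclically consecutive cones form a convex $2d$-gon, in which each apex $x_i$ contributes interior angle $\alpha_i$ and each of the remaining $d$ vertices contributes an interior angle strictly less than $180\dg$; the interior-angle identity $(2d-2)180\dg$ for a $2d$-gon then yields $\sum_i\alpha_i>(d-2)180\dg$ at once (equivalently, the exterior angles $180\dg-\alpha_i$ at the apices sum with the other positive exterior angles to exactly $360\dg$, which is the statement you were reaching for). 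The single possible closed angle (there is at most one, by Lemma~\ref{lem:two-closed-angles}) is handled by letting its boundary segment $p_1p_2$ contribute two polygon vertices with angle sum at most $180\dg$, giving a $(2d+1)$-gon and the same bound. You should replace both inductive mechanisms with this polygon argument.
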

\begin{proof}
 First, let~$\alpha_i > 0$, $i=1,\ldots,d$.
 Arranging all angles~$\angle T_i$ in accordance with Lemma~\ref{lem:indep-angles-contain}
 forms a convex~$2d$-gon, in which each~$\angle T_i$ provides one angle of size~$\alpha_i$
 and the remaining~$d$ angles are less than~$180\dg$ each.
 Then,~$d\cdot 180\dg + \sum_{i=1}^d \alpha_i > (2d-2) 180\dg$,
 and~$\sum_{i=1}^d \alpha_i > (d-2) 180\dg$. \par
 Now let~$\alpha_1 < 0$ and~$\alpha_i > 0$, $i=2,\ldots,d$.
 Then, arranging all angles~$\angle T_i$ in accordance with Lemma~\ref{lem:indep-angles-contain}
 forms a convex~$2d+1$-gon, in which~$\angle(T_1)$ provides
 two angles with sum~$180\dg$ or less,
  each~$\angle T_i$, $i=2,\ldots,d$ provides one angle of size~$\alpha_i$
  and the remaining~$d$ angles are less than~$180\dg$ each.
  Then,~$(d+1)\cdot 180\dg + \sum_{i=2}^d \alpha_i > (2d-1) 180\dg$,
 and~$\sum_{i=2}^d \alpha_i > (d-2) 180\dg$.
\end{proof}
Let~$T$ contain a set of~$n_k$ irreducible $k$-tuples, $k=3,4,5$, %
that are all pairwise independent.
Leighton and Moitra~\cite{lm-srgems-10} showed that for~$n_3\geq 6$ no greedy drawing of~$T$ exists.
We generalize this result slightly:
\begin{lemma}
 No greedy drawing of~$T$ exists if~$n_3 + 2 n_4 + 3 n_5 \geq 6$.
 \label{lem:indep-tuples-num}
\end{lemma}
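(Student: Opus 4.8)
The plan is to reduce the claim about mixed $3$-, $4$-, and $5$-tuples to the angle-sum obstruction of Lemma~\ref{lem:indep-angles-sum}. Suppose, for contradiction, that a greedy drawing~$\Gamma$ of~$T$ exists while $n_3 + 2n_4 + 3n_5 \geq 6$. First I would convert each irreducible $k$-tuple into a collection of pairwise independent rooted subtrees drawn with well-defined opening angles, so that Lemma~\ref{lem:indep-angles-sum} applies. For an irreducible $3$-tuple $(b_1,b_2,b_3)$, each branch $b_1 b_j$ is a cut edge, so deleting it yields a rooted subtree hanging off~$b_1$; the node~$b_1$ together with two of its branches defines one rooted subtree with root at~$b_1$. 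The key quantitative fact I would use is that the angle between two adjacent edges at~$b_1$ must exceed $60\dg$ (stated just after Lemma~\ref{lem:bisectors-dont-cross}), which means a degree-$k$ node~$b_1$ forces its incident branches to ``spread out'' and the associated opening angles to be large.

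Next I would make the counting precise by assigning to each $k$-tuple a number of independent subtrees that matches its weight in the sum $n_3 + 2n_4 + 3n_5$. Concretely, a $3$-tuple contributes one unit, a $4$-tuple two units, and a $5$-tuple three units; this is exactly the number of ``extra'' branches beyond the two needed to form a basic wedge, i.e. $k-2$ units for a $k$-tuple. I would carve each irreducible $k$-tuple at~$b_1$ into $k-2$ rooted subtrees, each consisting of~$b_1$ and a pair of consecutive branches (in the cyclic order around~$b_1$ given by~$\Gamma$), so that together they account for $k-2$ independent subtrees whose opening angles are each bounded below. Because every branch $b_1 b_j$ is a cut edge and the tuples are pairwise independent, the resulting family of $\sum_k (k-2) n_k = n_3 + 2n_4 + 3n_5 \geq 6$ subtrees is pairwise independent: removing all branches keeps the apex nodes mutually connected, and the subtrees hang off disjoint node sets by the definition of independent tuples. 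Here I would invoke Lemma~\ref{lem:two-closed-angles} to guarantee that among any two independent subtrees at least one has a strictly positive opening angle, ensuring the positive-angle terms in Lemma~\ref{lem:indep-angles-sum} are genuinely present.

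With $d = n_3 + 2n_4 + 3n_5 \geq 6$ pairwise independent subtrees in hand, Lemma~\ref{lem:indep-angles-sum} gives $\sum_{\alpha_i > 0}\alpha_i > (d-2)\,180\dg$. Since each opening angle satisfies $\alpha_i < 360\dg$ trivially, but more usefully each subtree built from a single node and two branches has opening angle strictly less than the full $360\dg$ and in fact the subtrees emanating from a common apex~$b_1$ partition the angular space around it into wedges summing to less than $360\dg$. I would then bound the total $\sum_i \alpha_i$ from above: each $k$-tuple contributes $k-2$ wedges whose opening angles sum to at most $360\dg$ minus the slack forced by the $>60\dg$ constraint at the degree-$k$ center. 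The clean contradiction is that the lower bound $(d-2)\,180\dg$ grows linearly in~$d$ with slope~$180\dg$, while the upper bound on the angle sum grows more slowly, so for $d \geq 6$ the two bounds are incompatible. This mirrors exactly the $n_3 \geq 6$ argument of Leighton and Moitra, where six independent wedges force $\sum \alpha_i > 4\cdot 180\dg = 720\dg$, which is impossible around a bounded configuration.

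The main obstacle I anticipate is the decomposition step: splitting an irreducible $k$-tuple into $k-2$ \emph{independent} rooted subtrees while keeping each one's opening angle well-defined and correctly oriented. The definitions of open, closed, and zero angles (and of \emph{extremal edges}) are delicate, and I must verify that the chosen pairs of consecutive branches at~$b_1$ yield subtrees whose apices and extremal edges behave as required by Lemma~\ref{lem:indep-angles-contain}, so that the convex polygon assembly in the proof of Lemma~\ref{lem:indep-angles-sum} really goes through. In particular I would need to check carefully that subtrees sharing the apex~$b_1$ remain pairwise independent under the tuple's definition (their branch edges are cut edges, and deleting all branches leaves the $b_1$ copies connected to one another), and that no subtree degenerates in a way that invalidates the angle bound. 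Once this bookkeeping is settled, the arithmetic contradiction via Lemma~\ref{lem:indep-angles-sum} is routine.
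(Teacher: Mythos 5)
There is a genuine gap in your decomposition step, and it is exactly the step you flag as the ``main obstacle.'' Carving an irreducible $k$-tuple at~$b_1$ into $k-2$ wedges, each consisting of~$b_1$ and a pair of consecutive branches, does \emph{not} produce pairwise independent subtrees in the sense the paper needs: any two such wedges share the node~$b_1$, and consecutive pairs even share a branch, so the containment conditions $T_2\setminus\{r_2\}\subseteq T_{v_1 r_1}^{r_1}$ and $T_1\setminus\{r_1\}\subseteq T_{v_2 r_2}^{r_2}$ fail. Lemma~\ref{lem:indep-angles-sum} is proved by arranging the $d$ subtrees into a convex polygon via Lemma~\ref{lem:indep-angles-contain} (each opening angle must contain the apices of all the others), and that arrangement is impossible for wedges emanating from a common apex. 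Moreover, such a wedge has no root of degree~$1$, so its opening angle is not even well-defined by the paper's conventions. Your closing paragraph then only gestures at an upper bound (``$360\dg$ minus the slack''); no concrete inequality contradicting $(d-2)\,180\dg$ is derived, and the invocation of Lemma~\ref{lem:two-closed-angles} plays no role since Lemma~\ref{lem:indep-angles-sum} already restricts the sum to positive angles.

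The paper's actual argument keeps \emph{one} subtree per tuple, so $d=n_3+n_4+n_5$, and instead sharpens the per-tuple angle bound: since consecutive edges at a degree-$k$ node form angles exceeding $60\dg$, a triple has opening angle $<120\dg$, a quadruple $<60\dg$, and a quintuple admits no open angle at all. Hence $\sum_{\alpha_i>0}\alpha_i\le 120\dg\, n_3+60\dg\, n_4$, and the hypothesis $n_3+2n_4+3n_5\ge 6$ rearranges (add $3n_3+3n_4+3n_5$ and multiply by $60\dg$) to $120\dg\, n_3+60\dg\, n_4\le(n_3+n_4+n_5-2)\,180\dg$, contradicting Lemma~\ref{lem:indep-angles-sum}. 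The weighted count $n_3+2n_4+3n_5$ thus emerges from the arithmetic, not from multiplying the number of independent subtrees; to repair your proof you should replace the wedge decomposition by these three per-tuple bounds and redo the count with $d=n_3+n_4+n_5$.
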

\begin{proof}
 A triple has opening angle less than~$120\dg$, a quadruple less than~$60\dg$ and a quintuple cannot be drawn with an open angle.  Thus, the sum of all positive opening angles is at most $ 120\dg \cdot n_3 + 60\dg \cdot  n_4$. 
From the lemma's assumption we have $ - n_3 - 2 n_4 - 3 n_5 \leq -6 $.
By adding $3 n_3 + 3 n_4 + 3 n_5$ to both sides, we acquire $ 2 n_3 + n_4 \leq 3 n_3 + 3 n_4 + 3 n_5 - 6$.  Finally, multiplying by $60\dg$ provides $120\dg \cdot n_3 + 60\dg \cdot  n_4 \leq (n_3 + n_4 + n_5 - 2) 180\dg$, a contradiction to Lemma~\ref{lem:indep-angles-sum}.
\end{proof}
\subsubsection{Outline of the characterization.} 
Consider a node~$r \in V$ with neighbors~$v_1, \ldots, v_d$. The subtrees $T_i = T_{r v_i}^{v_i}+ r v_i$ with the common root $r$ are pairwise independent, $i = 1, \dots, d$. Consider angles $\varphi_i \geq 0$, such that $|\angle T_i| \leq \varphi_i$ in each greedy drawing of $T_i$. If either there exist $i,j \in \{ 1, \dots, d\}$, $i\neq j$, such that $\varphi_i, \varphi_j = 0$, or $\sum_{i=1}^d \varphi_i \leq (d-2) 180\dg$, then, by Lemmas~\ref{lem:two-closed-angles} and~\ref{lem:indep-angles-sum}, the tree~$T$ has no greedy drawing.\par
Determining tight upper bounds on~$|\angle T_i|$  will let us derive a sufficient condition for the existence of a greedy drawing of~$T$. Using the next result, we shall be able to compute such a bound for any rooted subtree in Section~\ref{sec:opening-angles}.
\subsection{Shrinking lemma}
We now present a lemma which is crucial for later proofs. 
Let the \emph{bounding cone} of a subtree $T_{rv}^v +rv$ defined for an edge $rv$ in a greedy drawing $\Gamma$ of~$T$ be the cone with apex $v$ %
and boundary rays~$\ray{v}{a_1 b_1}$ and~$\ray{v}{a_2 b_2}$ for extremal edges~$a_1 b_1$, $a_2 b_2$ of~$T_{rv}^v +rv$ that contains the drawing of $T_{rv}^v$.
\begin{figure}[tb]
 \centering
 \subfloat[]{
  \includegraphics[scale = 1.0, page=1]{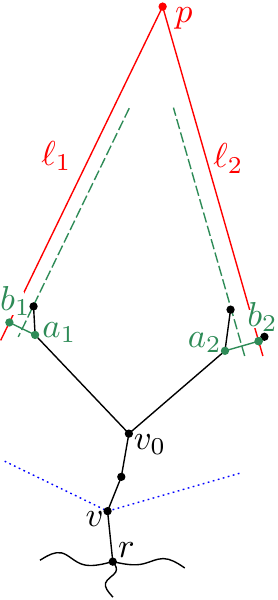} \label{fig:lem-shrink:1} }
  \hfill
  \subfloat[]{
  \includegraphics[scale = 1.0, page=6]{fig/full/lem-shrink.pdf} \label{fig:lem-shrink:2} } 
  \hfill
  \subfloat[]{
  \includegraphics[scale = 1.0, page=7]{fig/full/lem-shrink.pdf} \label{fig:lem-shrink:7} }
 \caption{Illustration of Lemma~\ref{lem:shrink}.
 \protect\subref{fig:lem-shrink:1}~Greedy drawing~$\Gamma$. Edges~$a_1 b_1$, $a_2 b_2$ are extremal.
 Dotted blue: bounding cone of~$T'$.
  \protect\subref{fig:lem-shrink:2}~Greedy drawing~$\Gamma'$. Subtree~$T_{rv}^v$ has been moved to
  a new point~$p \notin V$ and drawn infinitesimally small.
  \protect\subref{fig:lem-shrink:7} Drawings~$\Gamma$ and~$\Gamma'$ for the case when $a_1$,$b_1$ lie on the 
  $r$-$b_2$-path. Here, $p = b_2 \in V$. }
\label{fig:lem-shrink-1}
\end{figure}
\begin{lemma}
 Let~$T=(V,E)$ be a tree and~$T' = T_{rv}^v + rv$, $rv \in E$, a subtree of~$T$.
 Let~$\Gamma$ be a greedy drawing of~$T$, such that~$|\angle T'| > 0$.
 Then, there exists a point~$p$ in the bounding cone of~$T_{rv}^v$, such that
 shrinking~$T_{rv}^v$ infinitesimally and moving it to~$p$ keeps the drawing greedy, 
 and~$|\angle T'|$ remains the same.
 \label{lem:shrink}
\end{lemma}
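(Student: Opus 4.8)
The plan is to construct the new drawing $\Gamma'$ explicitly by choosing the point $p$ inside the bounding cone, replacing the whole subtree $T_{rv}^v$ by an infinitesimally small copy placed at $p$, and then verifying greediness via Lemma~\ref{lem:bisectors-dont-cross}. First I would fix notation: let $a_1 b_1$ and $a_2 b_2$ be the extremal edges of $T'=T_{rv}^v+rv$, and recall that since $|\angle T'|>0$ the rays $\ray{v}{a_1 b_1}$ and $\ray{v}{a_2 b_2}$ span a genuine cone $K$ (with apex $v$) containing the drawing of $T_{rv}^v$. The candidate point $p$ should be chosen far out along the angular bisector of this cone, so that shrinking $T_{rv}^v$ to an arbitrarily small diameter and translating it so that $v$ lands at $p$ keeps the subtree's drawing entirely inside $K$ and — crucially — keeps it at least as far from every vertex of $T\setminus T_{rv}^v$ as the axes require. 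In the degenerate case of Fig.~\ref{fig:lem-shrink:7}, where one extremal edge lies on the $r$-$b_2$ path, $p$ should be taken to be the existing vertex $b_2\in V$, matching the definition of $\apex(\angle T')$ in case (b) of the open-angle definition.

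The verification then splits according to which pairs of vertices $s,t$ the greedy condition must hold for. By Lemma~\ref{lem:bisectors-dont-cross} it suffices to check that $T_{xy}^x\subseteq h_{xy}^x$ for every edge $xy$ of the new tree. For an edge $xy$ internal to $T_{rv}^v$, the containment is preserved because shrinking is a similarity transformation — it scales distances uniformly and thus maps axes to axes and half-planes to half-planes — so the internal combinatorial structure of the half-plane containments is unchanged. For an edge $xy$ lying entirely outside $T_{rv}^v$, the relevant subtree $T_{xy}^x$ either avoids $T_{rv}^v$ altogether (containment is untouched) or meets it only in the shrunken cluster at $p$; here I would argue that by placing the cluster at a point of $h_{xy}^{x}$ and making its diameter small enough, the entire cluster stays on the correct side of $\axis(xy)$, so $T_{xy}^x\subseteq h_{xy}^x$ still holds. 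The key remaining cases are the edges straddling the cut, i.e.\ the edge $rv$ itself and its symmetric partner: for $T_{rv}^v\subseteq h_{rv}^v$ I must show the shrunken cluster sits in the half-plane bounded by $\axis(rv)$ containing $v$'s new location, and for $T_{rv}^r\subseteq h_{rv}^r$ I must show the \emph{fixed} part $T\setminus T_{rv}^v$ stays on the $r$-side. This is where choosing $p$ inside the bounding cone (rather than anywhere in the plane) does the work: since the cone already contained the original drawing and is bounded by the extremal rays that, by Lemma~\ref{lem:lem-slope}, diverge from $\ray{v}{vr}$, moving the cluster along the cone and shrinking it cannot push it across the relevant axes.

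Finally I would confirm the two assertions about $\angle T'$. Because the extremal \emph{directions} $\vec{a_1b_1}$ and $\vec{a_2b_2}$ are unchanged by a similarity and a translation, the rays defining $\angle T'$ retain their slopes, so $|\angle T'|$ is literally unchanged; and since the apex of $\angle T'$ is determined either as the intersection of $\axis(e_1)$ and $\axis(e_2)$ or as the endpoint $b_j$, the apex at most moves to $p$ while the opening angle stays fixed. The main obstacle I anticipate is the straddling-edge analysis: precisely quantifying how small the cluster must be and how far out $p$ must sit so that every one of the (finitely many) half-plane containments involving vertices outside $T_{rv}^v$ is simultaneously satisfied. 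I expect to handle this by a compactness/continuity argument — for each fixed outside vertex the containment is an open condition on the cluster's position and scale, and taking the minimum over the finite vertex set yields a valid choice of $p$ and shrinking factor.
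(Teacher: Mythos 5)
Your high-level plan (choose $p$ in the bounding cone, shrink the cluster, verify all half-plane containments via Lemma~\ref{lem:bisectors-dont-cross}) has the same shape as the paper's argument, and you correctly isolate the degenerate case $p=b_2$ and the fact that the opening-angle \emph{measure} is determined by the preserved extremal directions. But the proposal omits exactly the steps that constitute the proof. The condition $p\in\hp{xy}{x}$ for every edge $xy$ outside $T_{rv}^v$ is not an open perturbation of anything that holds in $\Gamma$: it is a global constraint on where $p$ may be placed, and establishing it is the core of the lemma. Your closing ``compactness/continuity'' argument only shows that the set of admissible $(p,\text{scale})$ pairs is open; it does not show it is nonempty. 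The paper does not place $p$ far out on the bisector — it takes $p$ to be the intersection of the lines through $b_1$ and $b_2$ parallel to $\axis(a_1b_1)$ and $\axis(a_2b_2)$, first reduces to the configuration $\{rv_0,v_0b_1,v_0b_2\}$ via Lemma~\ref{lem:deg2-contract}, and then proves $p\in\hp{xy}{x}$ by showing that otherwise $\axis(xy)$ would have to cross $b_1p$, $b_2p$ and $v_0p$, forcing $\ray{x}{xy}$ and some $\ray{a_i}{a_ib_i}$ to be parallel or convergent, contradicting Lemma~\ref{lem:lem-slope}. You would need an analogue of this for your choice of $p$, and none is given.

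The second gap is in your treatment of edges internal to the shrunk cluster. For such an edge $xy$ with $x$ on the $r$-side, $T_{xy}^x$ contains the entire fixed part $T_{vr}^r$, and $\axis(xy)$ is a full line passing essentially through $p$; the similarity argument says nothing about whether the \emph{unmoved} part of the tree lies on the correct side of that line, and this does not become easier as $p$ moves far away (the line follows $p$). The paper handles this by showing the fixed part lies in the cone $C_2$ bounded by $\ell_1,\ell_2$ and that, by extremality of $e_1,e_2$, $C_2$ lies on the correct side of every such line. Similarly, for the relocated edge $rv$ you assert that divergence of the extremal rays from $\ray{v}{vr}$ prevents the fixed part from crossing the new axis, but the paper needs the dedicated geometric Lemma~\ref{lem:v0p} (the circle argument), which depends on $\angle v_0b_ip\ge 90\dg$ — a property of its specific $p$ that you neither arrange nor verify. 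Until these three containments (external edges versus $p$, internal cluster edges versus the fixed part, and the new edge $rp$ versus the fixed part) are actually proved for your $p$, the argument is a restatement of what must be shown rather than a proof of it.
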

\begin{proof}
Let~$e_i = a_i b_i$, $i=1,2$, be the two extremal edges of~$T'$ in~$\Gamma$, 
$\rho_i$ the $r$-$b_i$-path, and $a_i \in \rho_i$; see Fig.~\ref{fig:lem-shrink-1} for a sketch. We distinguish two cases: \par
 (1)~Edge~$e_1$ is not on~$\rho_2$ and edge~$e_2$ is not on~$\rho_1$.
 Then, $\{ a_1,b_1\} \subseteq \hp{a_2 b_2}{a_2}$,
 and $\{ a_2,b_2\} \subseteq \hp{a_1 b_1}{a_1}$.
 Let~$\ell_i$ be the line parallel to~$\axis(e_i)$ through~$b_i$ and~$p$ the intersection of~$\ell_1$ and~$\ell_2$;
 see Fig.~\ref{fig:lem-shrink:1}.
 Let~$v_0 \in V$ be the last common node of~$\rho_1$ and~$\rho_2$,
 and let~$\eta_i$ be the~$v_0$-$b_i$-path in~$T$, $i=1,2$. \par
 We now define three intermediate drawings.
 Let~$\Gamma_1$ be the drawing gained by replacing~$T'$ in~$\Gamma$ by
 the edge~$r v_0$ and the two paths~$\eta_1$ and~$\eta_2$,
 and let~$\Gamma_2 = \Gamma_1 - \eta_1 - \eta_2 + \{ v_0 b_1, v_0 b_2\}$; see Fig.~\ref{fig:lem-shrink:Gamma1}.
 By Lemma~\ref{lem:deg2-contract}, both~$\Gamma_1$ and~$\Gamma_2$ are greedy.
 Let~$\Gamma_3 = \Gamma_2 - \{ v_0 b_1, v_0 b_2\} + \{ v_0 p\}$.
 Let~$V_1$ be the node set of~$T_{vr}^r$ with addition of~$v_0$.
 Note that the nodes in~$V_1$ have the same coordinates in~$\Gamma$, $\Gamma_1$, $\Gamma_2$ and~$\Gamma_3$. 

\begin{figure}[tb]
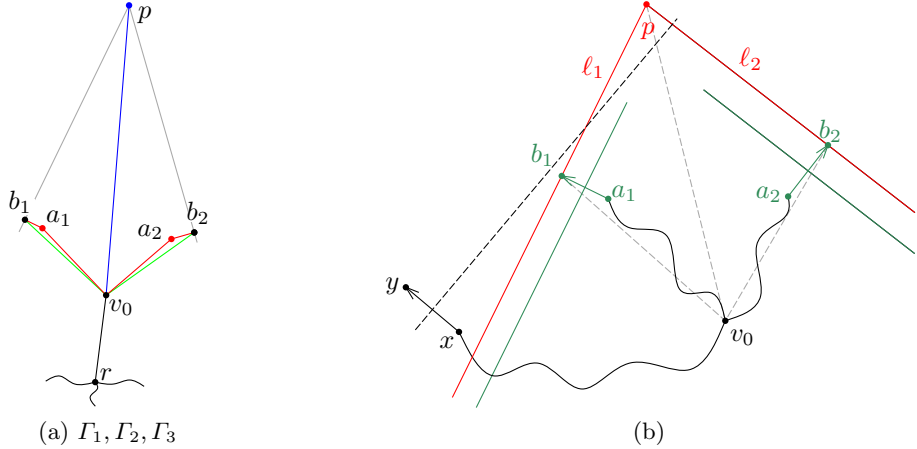
%
  \hfill
  \subfloat[$\Gamma_{1},\Gamma_{2},\Gamma_{3}$]{
  \includegraphics[scale = 1.0, page=10]{fig/full/lem-shrink.pdf} \label{fig:lem-shrink:Gamma1} 
  }
  \hfill
  \subfloat[]{
  \includegraphics[scale = 1.0, page=4]{fig/full/lem-shrink.pdf} \label{fig:lem-shrink:raysDiverge} }
  \hfill\null
 \caption{
  Proof of Lemma~\ref{lem:shrink}.
  \protect\subref{fig:lem-shrink:Gamma1}:
  Intermediate drawings~$\Gamma_1$ (black and red), 
  $\Gamma_2$ (black and green) and~$\Gamma_3$ (black and blue). 
  \protect\subref{fig:lem-shrink:raysDiverge}: For an edge~$xy \notin T_{rv}^v$,
  its axis doesn't cross~$v_0 b_1$, $v_0 b_2$.
  It also doesn't cross~$v_0 p$ due to Lemma~\ref{lem:lem-slope}.
 }
 \label{fig:lem-shrink}
\end{figure}

 We have to prove the greediness of~$\Gamma_3$.
 Since~$p \notin V$, it doesn't follow directly from Lemma~\ref{lem:deg2-contract}.
 We first show that for an edge~$xy$ in~$\Gamma_3$, $xy \neq v_0 p$, where~$x$
 is closer to~$v_0$ in~$T$ than~$y$, it holds~$p \in \hp{xy}{x}$.
 Edge~$xy$ is also contained in~$\Gamma_1$.
 Nodes~$x$, $v_0$ and~$a_i$ lie on the $y$-$b_i$-path in~$T$, $i=1,2$.
 Hence,~$\{ v_0, a_1, a_2, b_1, b_2 \} \subseteq \eta_1 \cup \eta_2 \subseteq \hp{xy}{x}$,
 therefore,~$\axis(xy)$ doesn't cross edges~$v_0 b_1$, $v_0 b_2$.
 Now assume~$p \notin \hp{xy}{x}$.
 Then,~$\axis(xy)$ must cross~$v_0 p$, $b_1 p$
 and~$b_2 p$ (but not~$v_0 b_i$); see Fig.~\ref{fig:lem-shrink:raysDiverge}.
 This is only possible if for some $i \in \{1,2\}$,
 rays~$\ray{x}{xy}$ and $\ray{a_i}{a_i b_i}$ are parallel or converge,
 which is a contradiction to Lemma~\ref{lem:lem-slope}.
 \par
 Next, we show that~$V_1 \subseteq \hp{p v_0}{v_0}$.
 Without loss of generality, let~$v_0 b_1$ be directed upwards to the left
 and~$v_0 b_2$ upwards to the right.
 Note that~$a_1$ lies to the right of~$v_0 b_1$
 and~$a_2$ to the left of~$v_0 b_2$ 
 (otherwise, the edge~$a_i b_i$ would not be extremal in~$T'$).
Hence,~$\angle v_0 b_i p \geq 90\dg$.
Further, since~$v_0 \in \hp{a_i b_i}{a_i}$, it is~$\angle a_i b_i v_0 < 90\dg$, therefore, $\angle v_0 b_i p < 180\dg$, $i = 1,2$, and~$p$ lies inside the angle~$\angle b_1 v_0 b_2 < 180\dg$. 
 Let~$\Lambda$ 
 be the opening angle of the subtree induced by edges~$\{ r v_0, v_0 b_1, v_0 b_2 \}$ with root~$r$ in~$\Gamma_2$
 (blue in Fig.~\ref{fig:lem-shrink:v0p:1}).
 It is~$\Lambda \subseteq \hp{p v_0}{v_0}$ (see Lemma~\ref{lem:v0p}).
 Hence, $V_1 \subseteq \Lambda \subseteq \hp{p v_0}{v_0}$.
 This proves the greediness of~$\Gamma_3$. 
 Due to the extremality of~$a_1 b_1$, $a_2 b_2$, $p$ lies in the bounding cone of~$T'$.\par
 Removing~$v_0$ and connecting~$r$ to~$p$ keeps the drawing greedy.
 Finally, we acquire~$\Gamma'$ by drawing the subtree~$T_{rv}^v$ of~$T$ infinitesimally small at~$p$.
 Let~$C_1$ be the cone~$\angle T'$ in the original drawing~$\Gamma$,
 and~$C_2$ be the cone bounded by~$\ell_1$ and~$\ell_2$, $a_i \in C_2$.
 By Observation~\ref{lem:move-halfplane},
 $C_1 \subseteq C_2$.
 Consider an edge~$e$ in~$T_{rv}^v$, $e \notin \{ e_1, e_2\}$ in~$\Gamma$.
 Let~$\ell$ be the line parallel to~$\axis(e)$ through~$p$.
 Due to the extremality of~$e_1$, $e_2$, cone~$C_2$ lies on one side of~$\ell$.
 Therefore, since $V_1 \subseteq C_2$, the drawing~$\Gamma'$ is greedy,
 and it is~$\angle T' = C_2$.
 Since~$\ell_i$ is parallel to~$\axis(a_i b_i)$, 
 $|\angle T'|$ in~$\Gamma'$ is as big as in~$\Gamma$. \par
 (2)~Now assume $a_1 b_1$ lies on~$\rho_2$.
 Let~$\Gamma_4$ be the drawing obtained by replacing~$T'$ in~$\Gamma$ by edge~$r b_2$. By Lemma~\ref{lem:deg2-contract},~$\Gamma_4$ is greedy.
 It is~$b_2 \in \hp{a_1 b_1}{b_1}$.
 Similar to~(1), we acquire~$\Gamma'$ by drawing the subtree~$T_{rv}^v$ of~$T$ infinitesimally small at~$p = b_2$.
 Then, $|\angle T'|$ remains the same as in~$\Gamma$, see Fig.~\ref{fig:lem-shrink:7}.
\end{proof}
\begin{figure}[tb]
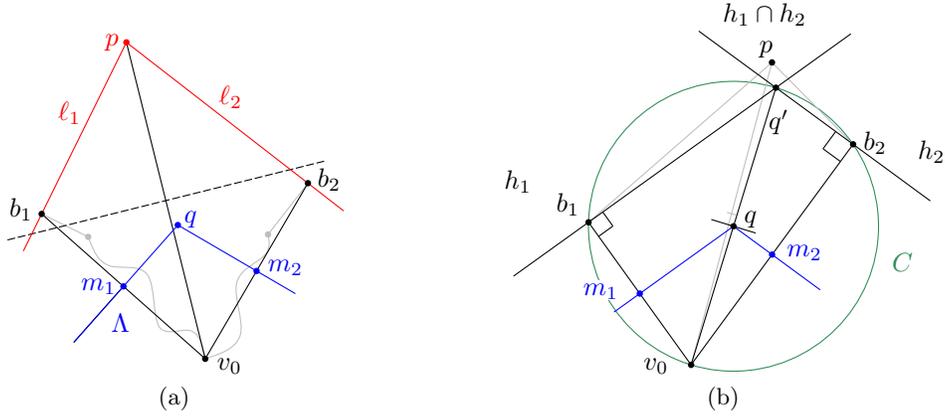

 \hfill
  \subfloat[]{
  \includegraphics[scale = 1.0, page=5]{fig/full/lem-shrink.pdf} \label{fig:lem-shrink:v0p:1} }
 \hfill
   \subfloat[]{
  \includegraphics[scale = 1.0, page=11]{fig/full/lem-shrink.pdf}
\label{fig:lem-shrink:v0p} }\hfill\null 
  \caption{Proof of Lemma~\ref{lem:v0p}.
   $\axis(v_0 p)$ doesn't cross~$v_0 m_i$, and~$p$ doesn't lie inside the circle~$C$. Hence,~$\Lambda \subseteq \hp{v_0 p}{v_0}$.}
 \end{figure}
 \begin{lemma}
 Consider five points $v_0$, $b_1$, $b_2$, $p$, $q$, such that
 $p$ lies inside the angle $\angle b_1 v_0 b_2 < 180\dg$, $\angle v_0 b_1 p, \angle v_0 b_2 p \geq 90\dg$,
 $b_1 \in \hp{v_0 b_2}{v_0}$,  $b_2 \in \hp{v_0 b_1}{v_0}$.
 Let~$\Lambda$ be the cone bounded by~$axis(v_0 b_1)$ and~$\axis(v_0 b_2)$, $v_0 \in \Lambda$.
 Then, $\Lambda \subseteq \hp{v_0 p}{v_0}$.%
 \label{lem:v0p}
\end{lemma}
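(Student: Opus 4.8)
The plan is to recast the statement as a single algebraic inequality after placing $v_0$ at the origin, so that every point coincides with its position vector. With this convention $x\in\hp{v_0 b_i}{v_0}$ is equivalent to $|x|^2<|x-b_i|^2$, i.e.\ to $2\langle b_i,x\rangle<|b_i|^2$, and likewise $\hp{v_0 p}{v_0}=\{x: 2\langle p,x\rangle<|p|^2\}$. Since $v_0$ lies in both half-planes and the two axes are not parallel, the cone $\Lambda$ is exactly $\hp{v_0 b_1}{v_0}\cap\hp{v_0 b_2}{v_0}$. The goal therefore reduces to showing that $2\langle b_1,x\rangle<|b_1|^2$ and $2\langle b_2,x\rangle<|b_2|^2$ together force $2\langle p,x\rangle<|p|^2$.

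Next I would read the hypotheses as two linear-algebraic facts about $p$. Because $\angle b_1 v_0 b_2<180\dg$, the rays $v_0 b_1$ and $v_0 b_2$ positively span a pointed cone, and ``$p$ inside the angle'' means $p=\lambda_1 b_1+\lambda_2 b_2$ for some $\lambda_1,\lambda_2\ge 0$. The right-angle conditions supply the second ingredient: $\angle v_0 b_i p\ge 90\dg$ says the vectors $v_0-b_i=-b_i$ and $p-b_i$ meet at a non-acute angle, i.e.\ $\langle -b_i,\,p-b_i\rangle\le 0$, which simplifies to $\langle b_i,p\rangle\ge|b_i|^2$ for $i=1,2$.

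These two ingredients combine at once. For any $x\in\Lambda$ I would chain $2\langle p,x\rangle=\sum_i\lambda_i\,2\langle b_i,x\rangle<\sum_i\lambda_i|b_i|^2\le\sum_i\lambda_i\langle b_i,p\rangle=|p|^2$, where the strict step uses $x\in\Lambda$ together with $\lambda_i\ge 0$ (not both zero, as $p\neq v_0$), and the following step uses $\langle b_i,p\rangle\ge|b_i|^2$; the last equality is $\sum_i\lambda_i\langle b_i,p\rangle=\langle p,p\rangle$. Hence $x\in\hp{v_0 p}{v_0}$, which is the claim.

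The one delicate point — and the only real obstacle — is making the chain point the right way: the inequality $\langle b_i,p\rangle\ge|b_i|^2$ must align with the inequality coming from $x\in\Lambda$, and it does precisely because $p$ lies on the far side of the two right angles. It is worth noting that only $\angle b_1 v_0 b_2<180\dg$, $p$ inside this angle, and $\angle v_0 b_i p\ge 90\dg$ are actually used; the hypotheses $b_1\in\hp{v_0 b_2}{v_0}$, $b_2\in\hp{v_0 b_1}{v_0}$ (and the unused point $q$) are not needed for the implication. Geometrically the apex of $\Lambda$ is the circumcenter of $\triangle v_0 b_1 b_2$, and the argument above is the analytic shadow of the two facts drawn in Fig.~\ref{fig:lem-shrink:v0p}: $p$ lies outside the circumcircle $C$, so the apex is in $\hp{v_0 p}{v_0}$, and the bounding rays through the midpoints $m_i$ do not cross $\axis(v_0 p)$.
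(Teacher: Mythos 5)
Your proof is correct, and it takes a genuinely different route from the paper's. The paper argues geometrically: it first shows that $\axis(v_0 p)$ cannot cross the segments $v_0 m_i$ (from $|v_0 b_i|\le |v_0 p|$, a consequence of $\angle v_0 b_i p\ge 90\dg$), and then handles the apex $q$ of $\Lambda$ by a circumcircle argument --- the circle $C$ centered at $q$ through $v_0$, $b_1$, $b_2$ meets the region $h_1\cap h_2$ (which contains $p$) only at the antipode $q'$ of $v_0$, so $p$ lies outside $C$ and $|v_0 q|\le |qp|$. Your version replaces all of this with a single inner-product chain: writing $p=\lambda_1 b_1+\lambda_2 b_2$ with $\lambda_i\ge 0$ (valid since the angle at $v_0$ is pointed), translating $\angle v_0 b_i p\ge 90\dg$ into $\langle b_i,p\rangle\ge |b_i|^2$, and identifying $\Lambda$ with $\hp{v_0 b_1}{v_0}\cap\hp{v_0 b_2}{v_0}$, the conclusion $2\langle p,x\rangle<|p|^2$ follows by taking the $\lambda$-combination of the two defining inequalities of $\Lambda$. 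This is shorter, avoids any case analysis or picture, and, as you correctly observe, does not use the hypotheses $b_1\in\hp{v_0 b_2}{v_0}$, $b_2\in\hp{v_0 b_1}{v_0}$ or the point $q$ at all, so it proves a slightly stronger statement; what the paper's synthetic argument buys in exchange is only that it matches the figure used elsewhere in the construction. The one point worth making explicit in your write-up is that "$p$ inside the angle'' gives $(\lambda_1,\lambda_2)\neq(0,0)$, which is exactly what makes the first inequality in your chain strict.
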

\begin{proof}
 Without loss of generality, let~$v_0 b_1$ be directed upwards to the left
 and~$v_0 b_2$ upwards to the right; see Fig.~\ref{fig:lem-shrink:v0p}.
 Let~$m_i$ be the midpoint of~$v_0 b_i$, $i=1,2$.
 Since~$\angle v_0 b_i p \geq 90\dg$, it is~$|v_0 b_i| \leq |v_0 p|$,
 and~$\axis(v_0 p)$ cannot cross the interior of~$v_0 m_i$. \par
 Let~$\{ q \} = \axis(v_0 b_1) \cap \axis(v_0 b_2)$.
 It remains to show that~$|v_0 q| \leq |q p|$.
 Let~$h_i$ be the closed half-plane bounded from below by the line orthogonal to~$v_0 b_i$
 through~$b_i$, $i=1,2$.
 Since~$\angle v_0 b_i p \geq 90\dg$, it must hold:~$p \in h_1 \cap h_2$.
 Let~$q'$ be the intersection point of the boundaries of~$h_1$ and~$h_2$.
 Let~$C$ be the circle with center~$q$ and radius~$v_0 q$.
 By elementary geometric arguments, points~$b_1$, $b_2$ and~$q'$ lie on the boundary of~$C$.
 Since a line can cross a circle at most twice,
 it is~$h_1 \cap h_2 \cap C = \{ q' \}$; see Fig.~\ref{fig:lem-shrink:v0p}, right.
 Hence,~$p$ cannot lie in the interior of~$C$.
\end{proof}
\section{Opening angles of rooted trees}
\label{sec:opening-angles}
The main idea of our decision algorithm is to process the nodes of~$T$ bottom-up while calculating tight upper bounds on the maximum possible opening angles of the considered subtrees.
If~$T$ contains a node of degree~5, it cannot be drawn with an open angle,
since each pair of consecutive edges forms an angle strictly greater than~$60\dg$.
In this section, we consider trees with maximum degree~4.
\par 
\begin{figure}[t]
 \hfill
  \subfloat[case I]{\includegraphics[page=12, scale = 0.95]{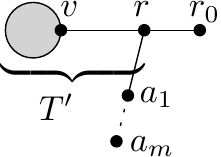}
  \label{fig:combine-cases:0}
 } \hfill{}
 \subfloat[case II]{\includegraphics[page=1, scale = 0.95]{fig/full/combine-cases.pdf}
  \label{fig:combine-cases:1}
 } \hfill{}
 \subfloat[case III]{\includegraphics[page=2, scale = 0.95]{fig/full/combine-cases.pdf}
  \label{fig:combine-cases:2}
 } \hfill{}
 \subfloat[case IV]{\includegraphics[page=3, scale = 0.95]{fig/full/combine-cases.pdf}
  \label{fig:combine-cases:3}
 } \hfill{}
 \subfloat[case V]{\includegraphics[page=4, scale = 0.95]{fig/full/combine-cases.pdf}
  \label{fig:combine-cases:4}
 } \hfill{}
 \subfloat[case VII]{\includegraphics[page=15, scale = 0.95]{fig/full/combine-cases.pdf}\hfill\null
  \label{fig:combine-cases:5}
 }
 \caption{
 \protect\subref{fig:combine-cases:0}--\protect\subref{fig:combine-cases:4}:
 Possible cases when combining subtrees to maintain an open angle. Subtrees~$T_1, T_2$ have opening angles~$\in (90\dg, 120\dg)$.
 In case~VII (\protect\subref{fig:combine-cases:5}) or in case VI
 ($|\angle T_i| \leq 90\dg$ in~IV or~V
 for one~$i \in \{1,2\}$) no open angle is possible.}
 \label{fig:combine-cases}
\end{figure}
If a subtree~$T'$ can be drawn with an open angle~$\varphi - \eps$ for any~$\eps > 0$, but not~$\varphi$, %
we say that it has opening angle~$\varphi\mn$ and write~$|\angle T'| = \varphi\mn$.
For example, a triple has opening angle~$120\mn$ and a quadruple~$60\mn$.
We call a subtree \emph{non-trivial} if it is not a single node or a simple path.
Figure~\ref{fig:combine-cases} shows possibilities to combine or extend non-trivial subtrees~$T', T_1, T_2$.
We shall now prove tight bounds on the possible opening angles for each construction.
As we  show later, 
only cases~I--V are feasible for the resulting subtree to have an open angle.
\begin{figure}
 \centering
 \subfloat[]{\includegraphics[page=5]{fig/full/combine-cases.pdf}
 \label{fig:combine-cases:1proof:1}}
 \hfill
  \subfloat[]{\includegraphics[page=7]{fig/full/combine-cases.pdf}
 \label{fig:combine-cases:1proof:2}}
 \hfill
 \subfloat[]{\includegraphics[page=13]{fig/full/combine-cases.pdf}
 \label{fig:combine-cases:1proof:c2}}
 \hfill
   \subfloat[]{\includegraphics[page=8]{fig/full/combine-cases.pdf}
 \label{fig:combine-cases:1proof:3}}
 \hfill
 \subfloat[]{
  \includegraphics[page=2]{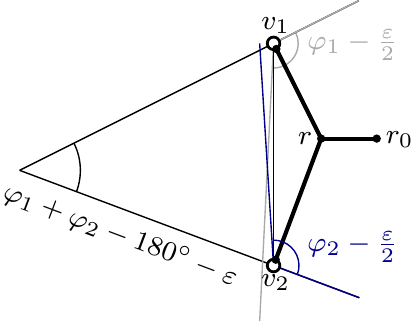}
  \label{fig:bintrees:combine1}
 }
 \caption{Optimal construction and tight upper bound for case~II.}
\end{figure}
\begin{lemma}
 Let~$T'$ be a subtree with positive opening angle. Consider the subtree~$\overline{T} = T' + r r_0$ in Fig.~\ref{fig:combine-cases:0}.
 Then~$\overline{T}$ has the same maximum opening angle as~$T'$.%
 \label{lem:combine-case0}
\end{lemma}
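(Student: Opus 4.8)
The plan is to exploit that the opening angle is an intrinsic quantity of the \emph{branching part} $T_{rv}^v$, which is identical in $T'$ and in $\overline T = T'+rr_0$ (Fig.~\ref{fig:combine-cases:0}). I will use throughout the following guiding observation: in a greedy drawing in which a rooted subtree is drawn with an open angle, its opening angle equals $180\dg$ minus the angular width of the fan spanned by the directions $\vec{ab}$ of its non-root edges $ab$ (oriented with $a$ closer to the root). Indeed, the extremal edges are exactly the two edges bounding this fan, their axes bound the cone, and the root direction $\vec{v_i r_i}$ always lies in the complementary (reflex) region by the ray-divergence of Lemma~\ref{lem:lem-slope}; hence a narrower fan yields a wider opening angle. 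The only structural difference between $T'$ and $\overline T$ is that $\overline T$ has a longer stem $r_0$--$r$--$v$, and that the edge $rv$, which was the root edge of $T'$, becomes an ordinary interior edge and thus a new candidate for being extremal. I will establish the two inequalities $|\angle\overline T|_{\max}\le|\angle T'|_{\max}$ and $|\angle\overline T|_{\max}\ge|\angle T'|_{\max}$ separately.

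For the upper bound, take any greedy drawing of $\overline T$. The node $r$ has degree~2 with neighbours $r_0$ and $v$, so by Lemma~\ref{lem:deg2-contract} replacing $r_0 r$ and $rv$ by the single edge $r_0 v$ yields a greedy drawing of a subtree isomorphic to $T'$ in which $T_{rv}^v$ keeps its coordinates. Passing to this drawing removes $rv$ from the set of candidate extremal edges (it is absorbed into the new root edge $r_0 v$), so the spanned fan can only shrink and the opening angle can only grow. Since the contracted drawing is a greedy drawing of a copy of $T'$, its opening angle is at most $|\angle T'|_{\max}$, whence $|\angle\overline T|\le|\angle T'|_{\max}$ in the original drawing.

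For the lower bound, I will extend an optimal greedy drawing of $T'$ to $\overline T$ without changing the opening angle. First I note that I may assume $\vec{rv}$ points into the interior of the fan of $T_{rv}^v$: keeping $T_{rv}^v$ fixed, I reposition $r$ far from $v$ in the direction bisecting the reflex gap, so that $\vec{rv}$ points to the middle of the fan. This preserves greediness, since by Lemma~\ref{lem:bisectors-dont-cross} the only condition involving $r$ is $T_{rv}^v\subseteq\hp{rv}{v}$, which holds once $r$ is far enough in the gap direction, and it leaves $T_{rv}^v$, hence the fan and the opening angle, untouched. Then I attach $r_0$ far along the backward extension of the stem, i.e.\ with $\vec{r r_0}=\vec{vr}$. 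The new greediness conditions involve only $r_0 r$, and $T'\subseteq\hp{r_0 r}{r}$ holds for $r_0$ sufficiently far; moreover $\vec{r r_0}$ lies in the gap while $\vec{rv}$ lies in the interior of the fan, so $rv$ is not extremal and the extremal edges, hence the opening angle, coincide with those of $T'$. This gives a greedy drawing of $\overline T$ with opening angle $|\angle T'|_{\max}$.

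The main obstacle is precisely the new candidate edge $rv$: adding it to the fan could a priori enlarge the spanned angle and shrink the opening angle, which is exactly why the upper bound relies on the contraction of Lemma~\ref{lem:deg2-contract} and why the lower bound needs the freedom to realize an optimal $T'$-drawing in which $\vec{rv}$ already points into the interior of the fan. Once this repositioning is justified, everything reduces to the monotonicity of the opening angle under inserting or deleting a candidate edge direction, combined with the half-plane characterisation of greediness. I would finally double-check the degenerate sub-cases of the open-angle definition (part~(b), where an extremal edge lies in the opposite half-plane and the apex is a vertex), but these only affect the location of $\apex(\angle T')$ and not the angular width, so the same reasoning carries through.
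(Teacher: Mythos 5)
Your upper bound is essentially the paper's one\nobreakdash-line observation made explicit: the non-root edge set of $\overline T$ is that of $T'$ plus the single edge $rv$, so the fan can only widen and the opening angle can only shrink; the detour through Lemma~\ref{lem:deg2-contract} is not even needed, since the restriction of a greedy drawing of $\overline T$ to $T'$ is already a greedy drawing of $T'$. Your lower bound, however, genuinely departs from the paper: the paper applies the shrinking lemma (Lemma~\ref{lem:shrink}) to collapse ${T'}_{rv}^{v}$ to an infinitesimal copy inside $\angle T'$ and then draws $v,r,r_0$ collinearly, whereas you keep ${T'}_{rv}^{v}$ at full size and instead send $r$ (and then $r_0$) far away along the bisector of the reflex gap. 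That route can be made to work, but as written it contains a genuine gap. (A minor additional point: maximum opening angles are suprema of the form $\varphi\mn$ and need not be attained, so ``an optimal greedy drawing of $T'$'' should be a drawing with opening angle $\varphi-\eps$ for arbitrary $\eps>0$.)

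The gap is the assertion that ``by Lemma~\ref{lem:bisectors-dont-cross} the only condition involving $r$ is $T_{rv}^v\subseteq\hp{rv}{v}$'' (and, later, that attaching $r_0$ only requires $T'\subseteq\hp{r_0 r}{r}$). This is false: for \emph{every} non-root edge $uw$ of $T'$ with $u$ on the root side, the component $T_{uw}^{u}$ contains $r$ (and afterwards $r_0$), so Lemma~\ref{lem:bisectors-dont-cross} also demands $r\in\hp{uw}{u}$ and $r_0\in\hp{uw}{u}$. These are precisely the conditions that encode greedy routing from the deep nodes back towards the root, and they are what constrains the direction in which $r$ may be pushed to infinity. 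If they were really absent, your argument would license moving $r$ arbitrarily far in \emph{any} direction, since $T_{rv}^v\subseteq\hp{rv}{v}$ holds for every sufficiently distant $r$ regardless of direction --- and that clearly destroys greediness (send $r$ far in the direction $\vec{a_1b_1}$ of an extremal edge, say). Your construction is rescued only by the specific choice of direction: for $r$ far away in the direction $d$ bisecting the reflex gap, $r\in\hp{uw}{u}$ holds asymptotically if and only if $d\cdot\vec{uw}<0$, and $d$ makes an angle of at least $90\dg+\tfrac12|\angle T'|>90\dg$ with every direction in the fan, whose width is $180\dg-|\angle T'|$. This verification is the actual content of the step and must be supplied; once it is, the remainder of your argument (that $\vec{rv}$ lands in the interior of the fan and $\vec{rr_0}$ in the gap, so the extremal edges and hence the opening angle are unchanged) does go through.
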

\begin{proof}
 Obviously,~$\overline{T}$ cannot have a bigger maximum opening angle than $T'$.
 By Lemma~\ref{lem:shrink}, for every greedy drawing of~$T'$ there exists a greedy drawing with an opening angle~$\angle T'$ of the same size 
 in which~${T'}_{rv}^{v}$ is drawn infinitesimally small.
 We then draw~$v,r,r_0$ collinearly inside~$\angle T'$.
\end{proof}
\begin{table}
 \centering
 \begin{tabular}{p{.07\linewidth} p{.13\linewidth} p{.13\linewidth} p{.13\linewidth} p{.42\linewidth} r}
  case & $\varphi_1$ & $\varphi_2$ & $\varphi_3$ & maximum $|\angle \overline{T}|$ & proof\\\hline
  I & $(0\dg, 180\dg]$ & - & - & ${\varphi_1}^-$ & Lem.~\ref{lem:combine-case0}\\
  II.i & $180\dg$ & $(90\dg, 120\dg]$ & - & $({\frac{\varphi_2}{2} + 45\dg)}^- \in (90\dg, 120\dg)$ & Lem.~\ref{lem:combine-case1} \\
  II.ii & $180\dg$ & $(0\dg, 60\dg]$ & - & ${\varphi_2}^- \in (0\dg, 60\dg)$ & Lem.~\ref{lem:combine-case1}\\
  III & $180\dg$ & $180\dg$ & $(0\dg, 120\dg]$ & ${\frac{\varphi_3}{2}}^- \in (0\dg, 60\dg)$ & Lem.~\ref{lem:combine-case2}\\
  IV & $(90\dg, 120\dg]$ & $(90\dg, 120\dg]$ & - & ${(\varphi_1 + \varphi_2 - 180\dg)}^- \in (0\dg, 60\dg)$ & Lem.~\ref{lem:combine-case3}\\
  V & $180\dg$ & $(90\dg, 120\dg]$ & $(90\dg, 120\dg]$ & $({\frac{3}{4}\varphi_2 + \frac{1}{2}\varphi_3 - 112.5\dg)}^- \in (0\dg, 60\dg)$ & Lem.~\ref{lem:combine-case4} \\
  VI & $(0\dg, 120\dg]$ & $(0\dg, 90\dg]$ & - & $<0\dg$ & Lem.~\ref{lem:combine:closed}\\
  VII & $(0\dg, 120\dg]$ & $(0\dg, 120\dg]$ & $(0\dg, 120\dg]$ & $<0\dg$ & Lem.~\ref{lem:combine:closed}\\\hline
 \end{tabular}
 \normalsize
 \caption{Computing maximum opening angle of the combined subtree~$\overline{T}$.
 Let~$|\angle T_i| = \varphi_i^-$, $\varphi_i \geq \varphi_{i+1}$, and $|\angle T_i| = \varphi_i = 180\dg$
 if $T_i$ is a path.}
 \label{tab:combine}
\end{table} 
To compute the maximum opening angle of the combined subtree~$\overline{T}$ 
in cases~II--V,
we use the following strategy.
We show that applying Lemma~\ref{lem:shrink} to~$T'$ does not decrease the opening angle of~$\overline{T}$ in a drawing.
Hence, it suffices to consider only drawings in which~${T'}_{r v}^{v}$
is shrunk to a point. We than obtain an upper bound by solving a linear maximization problem.
Finally, we construct a drawing with an almost-optimal opening angle for~$\overline{T}$
inductively using an almost-optimal construction for~$T'$.
Tight upper bounds on opening angles of the combined subtree~$\overline{T}$ for all possible cases
are listed in Table~\ref{tab:combine}.
Note that no bounds in~$(120\dg,180\dg)$ and~$(60\dg,90\dg]$ appear.
We now present the proofs for cases~II--V.
\begin{lemma}
 Let~$T'$ be a subtree with~$\angle T' = \varphi\mn$, and consider the subtree~$\overline{T} = T' + r r_0 + r a_1 + a_1 a_2 + \ldots + a_{m-1} a_m$ in Fig.~\ref{fig:combine-cases:1}.
 Then~$|\angle \overline{T}| = (45\dg + \frac{\varphi}{2})\mn$ if $\varphi > 90\dg$ (case~(i)), and~$|\angle \overline{T}| = \varphi\mn$ if $\varphi \leq 90\dg$ (case~(ii)).
 \label{lem:combine-case1}
\end{lemma}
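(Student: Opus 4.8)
The plan is to follow the three-step strategy announced just before the statement: reduce to a shrunk copy of~$T'$, solve a linear maximization problem for the upper bound, and exhibit a matching inductive construction.

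\emph{Reduction.} First I would take any greedy drawing~$\Gamma$ of~$\overline{T}$ with $|\angle\overline{T}|>0$ and apply Lemma~\ref{lem:shrink} to~$T'$. Since $|\angle T'| = \varphi\mn > 0$, the lemma shrinks~$T'_{rv}^v$ to an infinitesimal copy at a point~$p$ in its bounding cone, keeps~$\Gamma$ greedy, and preserves $|\angle T'|$. The step I must justify is that this does not decrease $|\angle\overline{T}|$: the extremal edge of~$\overline{T}$ on the path side is untouched, and on the $T'$ side the move only translates the relevant bounding lines outward, so by Observation~\ref{lem:move-halfplane} the cone of~$\overline{T}$ can only grow. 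Hence it suffices to bound $|\angle\overline{T}|$ in the reduced picture of Fig.~\ref{fig:combine-cases:1}, where~$T'$ is a single point~$p$ carrying its cone~$\angle T'$ of size~$\varphi$, the edge~$rv$ points from~$r$ to~$p$, and $r$, $r_0$, and the whole path $a_1,\dots,a_m$ lie inside~$\angle T'$.

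\emph{Upper bound via a linear program.} In this picture the opening cone of~$\overline{T}$ is~$\angle T'$ intersected with the path half-planes $\hp{r a_1}{r},\hp{a_1 a_2}{a_1},\dots$ and with~$\hp{rv}{r}$, so $|\angle\overline{T}|\le\varphi$ in all cases. I would parametrize the drawing by the directions of the three edges at~$r$ (to~$r_0$, to~$v=p$, to~$a_1$) together with the orientation of~$\angle T'$, and draw the path straight so that~$r a_1$ is its extremal edge. The active constraints are the strict $60\dg$ lower bounds on the three angles at~$r$, the requirement that~$a_1$ and the continuation stay inside~$\angle T'$, and Lemma~\ref{lem:lem-slope}, which forces $\ray{r}{r a_1}$ to diverge from the rays of the extremal edges of~$T'$. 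Writing the resulting cone angle as a function of these parameters and maximizing yields a linear program whose optimum is approached but never attained, because the $60\dg$ bounds are strict --- this is the source of the ${}\mn$ in the statement.

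\emph{Case split and construction.} The two regimes correspond to which constraint binds. For $\varphi\le 90\dg$ the cone is narrow enough that the path can be tucked against a boundary ray of~$\angle T'$ without its axis cutting into the cone, giving $|\angle\overline{T}|=\varphi\mn$; I would realize this by drawing~$T'$ almost optimally (induction), placing~$r$ near~$\apex(\angle T')$, and routing the path straight just inside one side of~$\angle T'$. For $\varphi>90\dg$ the cone is too wide: the strict $60\dg$ separation between~$r a_1$ and~$rv$ forces $\axis(r a_1)$ to slice off part of~$\angle T'$, and balancing this cut against the opposite boundary of~$\angle T'$ produces the optimum $(45\dg+\tfrac{\varphi}{2})\mn$. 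The matching construction again starts from an almost-optimal drawing of~$T'$, fixes $\angle r_0 r a_1 = 60\dg+\eps$ with $\angle r_0 r v$ placed symmetrically, draws the path straight, and is verified greedy through the half-plane containments of Lemma~\ref{lem:bisectors-dont-cross}.

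\emph{Main obstacle.} The hard part is the linear program of the second step. The opening angle is an intersection of half-planes, not a purely angular quantity, so it depends on the \emph{positions} of~$p$ and~$r$ and not only on edge directions; the clean value $45\dg+\tfrac{\varphi}{2}$ emerges only after locating the apex of~$\angle\overline{T}$ at the crossing of $\axis(r a_1)$ with the far boundary of~$\angle T'$ and ruling out, via Lemma~\ref{lem:lem-slope}, any configuration that would beat this bound. Getting the extremal-edge bookkeeping and the $\varphi=90\dg$ threshold exactly right is where the real work lies.
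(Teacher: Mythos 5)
Your reduction step matches the paper: apply Lemma~\ref{lem:shrink} to~$T'$ and argue, via the two possibilities for the left boundary of~$\angle \overline{T}$ (boundary of~$\angle T'$ versus $\axis(rv)$), that shrinking cannot decrease $|\angle\overline{T}|$; the paper does exactly this, and your appeal to Observation~\ref{lem:move-halfplane} is the right ingredient. The trivial bound $|\angle\overline{T}|\le\varphi$ and the treatment of $m\ge 2$ (path drawn straight, collapsed onto~$a_1$) are also as in the paper.

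However, there is a genuine gap at the quantitative core. The value $45\dg+\frac{\varphi}{2}$ comes from maximizing $\min\{180\dg-\alpha,\ \varphi-90\dg+\alpha\}$ over $\alpha=\angle a_1 r v$: the first bound is the angle of the cone between $\axis(r a_1)$ and $\axis(rv)$, the second comes from the left boundary of~$\angle T'$ together with the strict divergence of $\ray{r}{r a_1}$ from the extremal rays of~$T'$ (Lemma~\ref{lem:lem-slope}). Your proposal never writes down these two competing inequalities, and instead attributes the loss in case~(i) to the strict $60\dg$ separation between adjacent edges at~$r$. That mechanism is wrong: at the optimum $\alpha=135\dg-\frac{\varphi}{2}\in[75\dg,90\dg)$ the $60\dg$ constraint is slack, and if it were the binding constraint the answer would be capped at $120\dg$ rather than equal $45\dg+\frac{\varphi}{2}$. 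Consistent with this misidentification, your matching construction for case~(i) (``$\angle r_0 r a_1=60\dg+\eps$ with $\angle r_0 r v$ placed symmetrically'') does not realize the bound --- it forces $\angle a_1 r v$ near $120\dg$ or larger and hence an opening angle of at most about $60\dg$, far below $45\dg+\frac{\varphi}{2}>90\dg$; the correct choice is $\angle a_1 r v=135\dg-\frac{\varphi}{2}+\eps'$ with $T'_{rv}^v$ drawn sufficiently small. The case threshold $\varphi=90\dg$ is then simply the point where $45\dg+\frac{\varphi}{2}$ crosses the trivial bound~$\varphi$, not a change in which $60\dg$ constraint binds. Your case~(ii) construction also needs care: the paper places $\angle a_1 r v=90\dg-\frac{\eps}{2}$ and makes $r a_1$ long so that $\axis(r a_1)$ stays clear of~$T'_{rv}^v$, rather than tucking the path against a boundary ray of~$\angle T'$.
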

\begin{proof}
  First, let~$m=1$. (i)~Consider a greedy drawing~$\Gamma$ of~$\overline{T}$.
  Let~$a_1 r$ be drawn horizontally and~$v$ above it and to the left of~$\axis(r a_1)$; 
  see Fig.~\ref{fig:combine-cases:1proof:1},\subref*{fig:combine-cases:1proof:2},\subref*{fig:combine-cases:1proof:3}.
  Due to Lemma~\ref{lem:lem-slope}, the right boundary of~$\angle \overline{T}$ is formed by~$\axis(r a_1)$.
  The left boundary is either formed by (1)~the left boundary of~$\angle T'$ 
  (see Fig.~\ref{fig:combine-cases:1proof:1}), 
  or (2)~by~$\axis(rv)$ (Fig.~\ref{fig:combine-cases:1proof:2}).
  We apply Lemma~\ref{lem:shrink} to~${T'}_{rv}^v$ in~$\Gamma$
  and acquire~$\Gamma'$, in which~${T'}_{rv}^{v}$ is drawn infinitesimally small.
  In~$\Gamma'$,~$\axis(r a_1)$ remains the right boundary of~$\angle \overline{T}$. 
  In case~(1), the left boundary of~$\angle \overline{T}$ is again formed by the left boundary of~$\angle T'$,
  and~$|\angle \overline{T}|$ remains the same.
  In case~(2), the subtree~${T'}_{rv}^{v}$
  must lie to the right of~$\vec{rv}$ in~$\Gamma$
  (since each edge in it is oriented clockwise relative to~$\vec{r v}$),
  and so does the point~$p$ from Lemma~\ref{lem:shrink}.
  Thus, the edge~$rv$ is turned clockwise in~$\Gamma'$,
  and~$|\angle \overline{T}|$ increases;
  see Fig.~\ref{fig:combine-cases:1proof:c2}.
  Thus, to acquire an upper bound for~$|\angle \overline{T}|$
  it suffices to only consider drawings in which~${T'}_{rv}^{v}$ is drawn infinitesimally small.
  Let~$\alpha = \angle a_1 r v$.
  Then, for~$\overline{\varphi} = |\angle \overline{T}|$ it holds:
  $ \overline{\varphi} \leq 180\dg - \alpha$, $\overline{\varphi} < \varphi - 90\dg + \alpha$;
  see the blue and green angles in Fig.~\ref{fig:combine-cases:1proof:3}.
  Thus,~$\overline{\varphi}$ lies on the graph~$f(\alpha) = 180\dg - \alpha$ or below it
  and strictly below the graph $g(\alpha) = \varphi - 90\dg + \alpha$.
  Maximizing over~$\alpha$ gives $\overline{\varphi} < 45\dg + \frac{\varphi}{2}$. %
  We can achieve~$\overline{\varphi} = (45\dg + \frac{\varphi}{2})\mn$ by choosing~$\alpha = 135\dg - \frac{\varphi}{2} + \eps'$ 
  and drawing~${T'}_{rv}^{v}$ sufficiently small 
  with~$|\angle T'| = \varphi - \eps$ for sufficiently small~$\eps, \eps' > 0$. \par
  (ii)~Obviously, $|\angle T'| \geq |\angle \overline{T}|$. 
  For the second part, see Fig.~\ref{fig:bintrees:combine1}.
  We choose $\angle a_1 r v = 90\dg - \frac{\eps}{2}$ and draw~$r a_1$%
  long enough, such that its axis doesn't cross~${T'}_{rv}^{v}$.
  We rotate~${T'}_{rv}^{v}$ such that the right side of the opening angle~$\angle T'$ and~$r v$ form an angle~$\frac{3\eps}{2}$.
  Then, the opening angle~$\varphi'$ of the drawing is defined by the left side of~$\angle T'$ and the axis of~$r a_1$ and is~$\varphi - \eps$. \par
  For~$m \geq 2$, draw~$a_2, \ldots, a_m$ collinear with~$r a_1$ and infinitesimally close to $a_1$.
\end{proof}
\begin{lemma}
 Let~$T'$ be a subtree with~$|\angle T'| = \varphi\mn < 120\dg$, and consider subtree~$\overline{T} = T' + \{ r r_0, r a_1, \ldots, a_{m-1} a_m, r b_1, \ldots, b_{k-1} b_k \}$ in Fig.~\ref{fig:combine-cases:2}.  Then,~$\angle \overline{T} = \frac{\varphi}{2}\mn$.
 \label{lem:combine-case2}
\end{lemma}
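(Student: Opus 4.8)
The plan is to follow the three-step strategy announced before the lemma: reduce via the shrinking lemma, bound $|\angle\overline T|$ by a small angular optimization, and then match the bound by an explicit construction. First I would reduce to a single configuration. By Lemma~\ref{lem:shrink} every greedy drawing of~$T'$ can be replaced by one of the same opening angle in which~$T'^{\,v}_{rv}$ is drawn infinitesimally small at a point~$p$ of its bounding cone; as in Lemma~\ref{lem:combine-case1}, drawing $a_2,\dots,a_m$ (resp.\ $b_2,\dots,b_k$) collinearly and infinitesimally close to~$a_1$ (resp.\ $b_1$) lets me assume $m=k=1$. Thus only the two single edges $ra_1,rb_1$, the root edge~$rr_0$, the edge~$rp$ (in direction~$\vec{rv}$), and the tiny cone~$\angle T'$ of width~$\varphi$ with apex~$p$ remain. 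As in the earlier cases, I first have to check that shrinking does not decrease~$|\angle\overline T|$.

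Next I would fix coordinates with~$\vec{rp}$ pointing up and write $\alpha=\angle a_1 r p$, $\beta=\angle b_1 r p$ for the opposite-side turning angles of the two path edges. Because the paths bend away from~$p$, their first edges are the two edges of~$\overline T$ whose directions are closest to~$\vec{rr_0}$ on either side; by Lemma~\ref{lem:lem-slope} they are then the extremal edges, so $\angle\overline T=\hp{r a_1}{r}\cap\hp{r b_1}{r}$. Since both extremal edges emanate from~$r$, this yields the clean relation $|\angle\overline T|=180\dg-(\alpha+\beta)$, and it remains to bound $\alpha+\beta$ from below. Here two greedy conditions compete for each path, say~$a$: on the one hand $T'^{\,v}_{rv}\subseteq\hp{r a_1}{r}$ (Lemma~\ref{lem:bisectors-dont-cross}) forces~$p$ onto~$r$'s side of~$\axis(r a_1)$, i.e.\ $|ra_1|>2|rp|\cos\alpha$; on the other hand~$a_1$ must lie in the opening cone~$\angle T'$ (the rest of the tree is contained in~$\angle T'$), which caps~$|ra_1|$ by $|rp|\tan(\varphi/2)\big/(\sin\alpha+\tan(\varphi/2)\cos\alpha)$. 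A short trigonometric computation shows these two bounds on~$|ra_1|$ are compatible exactly when $\sin\!\big(2\alpha+\varphi/2\big)<0$, i.e.\ $\alpha>90\dg-\varphi/4$, and symmetrically $\beta>90\dg-\varphi/4$.

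Combining these gives $|\angle\overline T|=180\dg-(\alpha+\beta)<180\dg-2(90\dg-\varphi/4)=\varphi/2$, and the value~$\varphi/2$ is approached as $\alpha,\beta\to(90\dg-\varphi/4)^+$, so $|\angle\overline T|=\tfrac{\varphi}{2}\mn$. For the matching construction I would start from a near-optimal drawing of~$T'$ with $|\angle T'|=\varphi-\eps$, set $\alpha=\beta=90\dg-\varphi/4+\eps'$, choose~$|rp|$ and the common edge length $|ra_1|=|rb_1|$ in the (now nonempty) feasible interval above, and place the remaining path vertices collinearly. Verifying greediness then reduces to checking the half-plane conditions of Lemma~\ref{lem:bisectors-dont-cross} for the finitely many edges, which hold by the extremality of $ra_1,rb_1$ together with Lemma~\ref{lem:lem-slope}.

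I expect the main obstacle to be the angular bookkeeping of the second step: certifying that the two path edges (and not the edges inside the shrunk~$T'$) are the extremal edges of~$\overline T$, and handling the apex offset of~$\angle T'$ introduced by shrinking, so that the ``$a_1\in\angle T'$'' cap and the ``$p\in\hp{r a_1}{r}$'' bound are both measured against the correct cone apex~$p$. Ruling out an asymmetric optimum is the second delicate point: a tilt of~$\angle T'$ would let one path turn less at the other's expense, and I would have to argue that any such tilt only increases~$\alpha+\beta$, so that the symmetric configuration is optimal and the bound $\tfrac{\varphi}{2}\mn$ is indeed tight.
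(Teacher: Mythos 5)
Your overall strategy (shrink $T'$ via Lemma~\ref{lem:shrink}, reduce to $m=k=1$, derive necessary angular constraints, optimize, then give a matching symmetric construction) is the same as the paper's, and your trigonometric computation for the configuration you analyse is correct: the two necessary conditions $p\in\hp{ra_1}{r}$ and $a_1\in\angle T'$ do force $\angle a_1rp>90\dg-\varphi/4$, which is an equivalent reformulation of the paper's linear system $\alpha_1<\alpha_2$, $\beta_1<\beta_2$, $\alpha_1+\alpha_2+\beta_1+\beta_2<\varphi$ with opening angle $\alpha_1+\beta_1$. Moreover, the asymmetry worry you raise at the end is not actually a problem: if the cone $\angle T'$ is tilted so that its two boundary rays make angles $\mu_a$ and $\mu_b$ with the ray from $p$ towards $r$ (with $\mu_a+\mu_b<\varphi$), your own computation gives $\alpha>90\dg-\mu_a/2$ and $\beta>90\dg-\mu_b/2$, and the bound $\alpha+\beta>180\dg-\varphi/2$ survives because it is linear in the two half-widths; so the symmetric optimum needs no separate justification.

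The genuine gap is the case you fold into ``certifying that the two path edges are the extremal edges.'' Your identity $|\angle\overline T|=180\dg-(\alpha+\beta)$ presupposes that the edge $rp$ lies inside the angle $\angle a_1rb_1<180\dg$, i.e., that $a_1$ and $b_1$ lie on opposite sides of the line through $r$ and $p$. Nothing forces a greedy drawing of $\overline T$ to look like that: the paper explicitly treats a second embedding option in which $ra_1$ lies inside $\angle vrb_1$, so that $\angle a_1rb_1=|\alpha-\beta|$ and your formula for $|\angle\overline T|$ no longer holds. In that configuration one boundary of $\angle\overline T$ is $\axis(rb_1)$ while the other is inherited from $\angle(\overline T-\{rb_1\})$, and the paper dispatches it by invoking Lemma~\ref{lem:combine-case1}: $|\angle(\overline T-\{rb_1\})|<\varphi/2+45\dg$, and since adjacent edges at $r$ must form an angle greater than $60\dg$, the opening angle in this option is at most $\varphi/2-15\dg<\varphi/2$, so the first option is the optimal one. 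Without this second case your upper bound argument is incomplete; everything else (the reduction, the bound in the main configuration, and the construction attaining $\varphi/2-\eps$) matches the paper's proof.
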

\begin{figure}[tb]
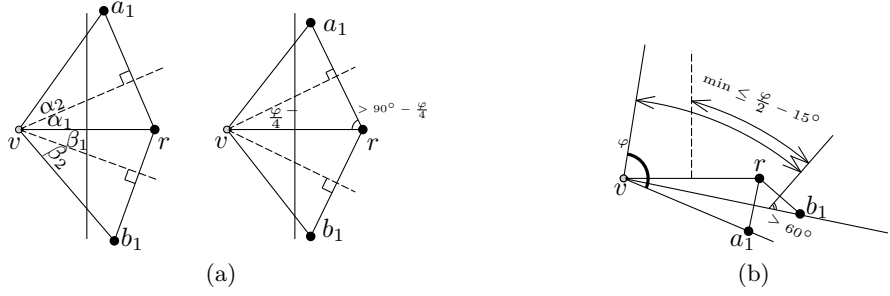

 \hfill
  \subfloat[]{\includegraphics[page=6]{fig/full/combine-cases.pdf}
 \label{fig:combine-cases:2proof:1}}
  \hfill
  \subfloat[]{\includegraphics[page=14]{fig/full/combine-cases.pdf}
 \label{fig:combine-cases:2proof:2}}
  \hfill\null
 \caption{Optimal construction and tight upper bound for case~2.}
\end{figure}
\begin{proof}
First, let~$k = m = 1$.
Consider a greedy drawing~$\Gamma$ of~$\overline{T}$ with~$|\angle \overline{T}|>0$.
Let~$rv$ be horizontal in~$\Gamma$ and let~$v$ lie to the left of~$r$.
There exist two possibilities for~$\Gamma$.
The edge~$rv$ can be either drawn inside the angle~$\angle a_1 r b_1 < 180\dg$ (see Fig.~\ref{fig:combine-cases:2proof:1})
or on the outside of it (see Fig.~\ref{fig:combine-cases:2proof:2}). \par
In the first case, let~$a_1$ lie above~$rv$ and~$b_1$ below.
Then, the upper boundary of~$\angle \overline{T}$ is formed by~$\axis(r a_1)$
and the lower by~$\axis(r b_1)$.
It remains to be the case after applying Lemma~\ref{lem:shrink} to~$T'$;
see Fig.~\ref{fig:combine-cases:2proof:1} for the corresponding drawing~$\Gamma'$.
We can assume that~$\angle v r a_1, \angle v r b_1 < 90\dg$ (otherwise,
we can increase~$\angle \overline{T}$ by turning~$r a_1$ counterclockwise or~$r b_1$ clockwise).
It must be~$\alpha_1 < \alpha_2$, $\beta_1 < \beta_2$ and~$\alpha_1 + \alpha_2 + \beta_1 + \beta_2 < \varphi$.
Thus, the opening angle in this construction is~$\alpha_1 + \beta_1 < \frac{\varphi}{2}$.
The angle~$|\angle \overline{T}| = \frac{\varphi}{2}\mn$ can be achieved by choosing
$\alpha_1 = \beta_1 = \frac{\varphi}{4} - 2\eps$, $\alpha_2 = \beta_2 = \frac{\varphi}{4} - \eps$
for a sufficiently small~$\eps$.
Then, $\angle a_1 v r < \angle a_1 r v $, and~$|r a_1| < |a_1 v|$
 (it is $\frac{\varphi}{2}^- < 60\dg \leq 90\dg - \frac{\varphi}{4}$).
Hence, the drawing is greedy and has opening angle~$|\angle \overline{T}| = \frac{\varphi}{2}\mn$.
\par
Now consider the second option for~$\Gamma$;
see Fig.~\ref{fig:combine-cases:2proof:2}.
Let~$r a_1$ be inside~$\angle v r b_1 < 180\dg$.
Then, $\axis(r b_1)$ is a boundary of the opening angle of~$\overline{T}$
and $\axis( r a_1 )$ is a boundary of the opening angle of~$\overline{T} - \{ r b_1 \}$.
By Lemma~\ref{lem:combine-case1}, $|\angle (\overline{T} - \{ r b_1 \})| < \frac{\varphi}{2} + 45\dg$,
and the drawing of $\overline{T}$ has opening angle at most
$ \frac{\varphi}{2} + 45\dg - 60\dg = \frac{\varphi}{2} - 15\dg$.
Hence, the first option is optimal. 
We add~$a_2, \ldots, a_m$ and~$b_2, \ldots, b_k$ similarly to the proof of Lemma~\ref{lem:combine-case1}.\par
Note that the new opening angle is $< 60\dg$.
\end{proof} 
\begin{figure}[tb]
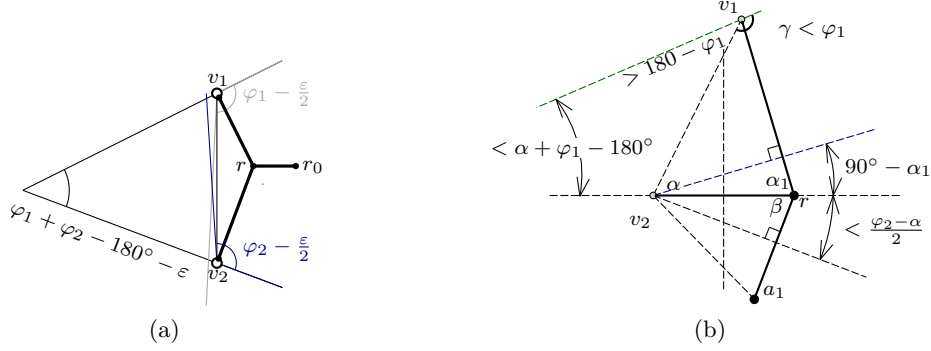

 \hfill
 \subfloat[]{
  \includegraphics[page=1]{fig/full/lem-combine-subtrees.pdf}
  \label{fig:bintrees:combine2}
 }
 \hfill
 \subfloat[]{
 \includegraphics[page=9]{fig/full/combine-cases.pdf}
 \label{fig:combine-case4}}
  \hfill\null
 \caption{Sketch of the proof of Lemma~\ref{lem:combine-case3} and~\ref{lem:combine-case4}.}
\end{figure}
\begin{lemma}
 Let~$T_1$, $T_2$ be subtrees with~$|\angle T_i| = \varphi_i\mn \in (90\dg, 120\dg)$, $i = 1,2$, and consider subtree~$\overline{T} = T_1+ T_2 + \{ r r_0 \}$ in Fig.~\ref{fig:combine-cases:3}.
 Then,~$\angle \overline{T} = (\varphi_1 + \varphi_2 - 180\dg)\mn$.
 \label{lem:combine-case3}
\end{lemma}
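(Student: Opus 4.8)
The plan is to reuse the three-step strategy of Lemma~\ref{lem:combine-case1}: reduce to infinitesimal subtrees, bound the opening angle by an angular count, and give a matching construction. \emph{Reduction.} Given a greedy drawing~$\Gamma$ of~$\overline{T}$, I would apply Lemma~\ref{lem:shrink} to~$T_1$ and then to~$T_2$, shrinking each pendant subtree $(T_i)_{r v_i}^{v_i}$ to an infinitesimal blob at a point~$p_i$ of its bounding cone while preserving~$|\angle T_i|=\varphi_i\mn$ and greediness; the two shrinks are independent and do not interfere. Exactly as in case~II, shrinking cannot decrease~$|\angle \overline{T}|$, so it suffices to establish the upper bound in the reduced configuration of Fig.~\ref{fig:combine-cases:3}, where~$r$ is joined to~$r_0$ and to the two blobs~$p_1,p_2$, each blob carrying its cone~$\angle T_i$ of width~$\varphi_i$ with apex~$p_i$.

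\emph{Upper bound.} In the reduced drawing the far subtree~$T_{r r_0}^{r_0}$ must lie in $\polytope(\overline{T})\subseteq \angle T_1\cap \angle T_2\cap \hp{r v_1}{r}\cap \hp{r v_2}{r}$, so~$\angle \overline{T}$ is the cone bounding~$\angle T_1\cap \angle T_2$. Let~$\psi$ be the angle swept, across this overlap, from the outer boundary ray of~$\angle T_1$ to the outer boundary ray of~$\angle T_2$ (the axes of the clockwise-extremal edge of~$T_1$ and the counterclockwise-extremal edge of~$T_2$). A direct count gives $|\angle \overline{T}| = \varphi_1 + \varphi_2 - \psi$, so the whole bound reduces to showing $\psi \ge 180\dg$. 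These two outer extremal edges lie in the two independent subtrees (their connecting path in~$T$ runs through~$r$ and omits the far endpoints), hence by Lemma~\ref{lem:lem-slope} their rays diverge; since the two boundary rays are the perpendicular bisectors of these edges, divergence translates into $\psi > 180\dg$, with equality only in the parallel limit. Lemma~\ref{lem:indep-angles-contain}(a) gives~$p_1\in\angle T_2$ and~$p_2\in\angle T_1$, which guarantees that the overlap is the nonempty cone about the~$\vec{r r_0}$-direction and, together with Observation~\ref{lem:move-halfplane}, that the half-planes~$\hp{r v_1}{r},\hp{r v_2}{r}$ are not binding. Hence $|\angle \overline{T}| < \varphi_1+\varphi_2-180\dg$, i.e. the bound $(\varphi_1+\varphi_2-180\dg)\mn$.

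\emph{Construction.} To match it, I would draw~$T_1$ and~$T_2$ infinitesimally small using the inductive near-optimal constructions with~$|\angle T_i| = \varphi_i-\eps$, place~$r$ at the origin, and orient~$r v_1$,~$r v_2$ so that the two outer extremal edges are almost parallel, i.e.~$\psi = 180\dg+\eps'$; then the cones overlap in a cone of width $\varphi_1+\varphi_2-180\dg-O(\eps+\eps')$, in which~$r r_0$ and the far tree are placed (Fig.~\ref{fig:bintrees:combine2}). Greediness follows from Lemma~\ref{lem:bisectors-dont-cross}: the half-plane conditions inside~$T_1$ and inside~$T_2$ hold by their constructions, the cross-conditions between~$T_1$ and~$T_2$ hold because the extremal edges diverge and each apex lies in the other cone, and the conditions at~$r$ hold since its incident edges are pairwise more than~$60\dg$ apart and~$r_0$ sits in the overlap. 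Letting~$\eps,\eps'\to 0$ yields $|\angle \overline{T}| = (\varphi_1+\varphi_2-180\dg)\mn$.

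\emph{Main obstacle.} The crux is the inequality~$\psi\ge 180\dg$: converting the edge-ray divergence of Lemma~\ref{lem:lem-slope} into the statement that the two outer cone boundaries (their perpendicular bisectors) are at least a straight angle apart, and checking that neither the half-planes~$\hp{r v_i}{r}$ nor the~$>60\dg$ angle constraint at~$r$ ever shrinks the overlap below~$\varphi_1+\varphi_2-180\dg$. A secondary technical point is that shrinking both subtrees simultaneously---including the case~(b) apex-at-a-vertex situation---preserves~$|\angle \overline{T}|$.
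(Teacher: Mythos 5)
Your overall architecture (shrink the pendant subtrees, bound the opening angle by angular bookkeeping, then give a matching construction with the two inner boundary rays almost parallel and $\angle v_1 r v_2 > 120\dg$) matches the paper, and your construction half is essentially the paper's Fig.~\ref{fig:bintrees:combine2}. The gap is in the crux of the upper bound. You reduce everything to the inequality $\psi \ge 180\dg$ for the angle between the two boundary rays that do \emph{not} bound $\angle\overline{T}$, and you claim this follows from Lemma~\ref{lem:lem-slope} because the rays of the underlying extremal edges diverge and the boundary rays are their perpendicular bisectors. That implication does not hold: in your identity $|\angle\overline{T}| = \varphi_1 + \varphi_2 - \psi$ the quantity $\psi$ is purely directional (an angle between ray directions), whereas divergence of two rays is a joint condition on their directions \emph{and} the positions of their origins. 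Two rays whose directions differ by any prescribed angle can be made to diverge by translating their origins, so divergence alone imposes no lower bound on the angle between the corresponding axes; your ``divergence translates into $\psi>180\dg$'' is exactly the missing positional input.

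The paper closes this step with Lemma~\ref{lem:indep-angles-contain}(a), which you cite only for non-emptiness of the overlap. Since $\apex(\angle T_1) \in \angle T_2$ and $\apex(\angle T_2) \in \angle T_1$, the direction $\theta$ from $\apex(\angle T_1)$ to $\apex(\angle T_2)$ lies in the direction interval of $\angle T_1$ while $\theta + 180\dg$ lies in that of $\angle T_2$; hence the two direction intervals, of widths $\varphi_1$ and $\varphi_2$, contain an antipodal pair of directions, and their intersection---which contains all directions of $\angle\overline{T}$---has width at most $\varphi_1+\varphi_2-180\dg$. This is precisely your $\psi\ge 180\dg$, obtained from the positional fact you left unused. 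Redirecting your appeal to Lemma~\ref{lem:indep-angles-contain}(a) from the side remark about non-emptiness to the main inequality repairs the proof. Note also that the paper derives the upper bound directly in the original drawing, so your shrinking reduction (and the worry about performing two shrinks simultaneously) is not needed for that half; it only plays a role in the matching construction.
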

\begin{proof}
 Let~$r r_0$ be drawn horizontally and~$v_1$ above~$v_2$ in~$\Gamma$.
 Then, the upper boundary of~$\angle \overline{T}$ is formed either by the upper boundary of~$\angle T_1$
 or by~$\axis(r v_1)$,
 and the lower boundary of~$\angle \overline{T}$ is formed either by the lower boundary of~$\angle T_2$
 or by~$\axis(r v_2)$.
 Since~$\apex(T_1) \in \angle T_2$ and~$\apex(T_2) \in \angle T_1$,
 an opening angle~$|\angle \overline{T}|  \geq \varphi_1 + \varphi_2 - 180\dg$ is not possible. \par
 For the lower bound, see the construction in Fig.~\ref{fig:bintrees:combine2}.
 Both~$({T_i})_{r v_i}^{v_i}$ are drawn infinitesimally small.
 The lower boundary ray of~$\angle T_1$ and the upper boundary ray of~$\angle T_2$
 have intersection angle~$\eps$, and the other two sides form an angle~$\varphi' = \varphi_1 + \varphi_2 - 180^\circ - \eps$.
 The edges~$r v_1$ and~$r v_2$ are drawn orthogonal to the 
 upper boundary of~$\angle T_1$ and lower boundary of~$\angle T_2$ respectively,
 so their axes are parallel to the boundary rays of~$\angle \overline{T}$.
 It is~$\angle v_1 r v_2 = 360\dg - \varphi_1 - \varphi_2 + \eps \geq 120\dg + \eps$.
 Hence, no axis crosses another edge, and the drawing is greedy.
 Note that the new opening angle is $< 60\dg$.
\end{proof}
\begin{lemma}
Let~$T_1$, $T_2$ be subtrees with~$\angle T_i = \varphi_i\mn \in (90\dg, 120\dg)$, $\varphi_1 \geq \varphi_2$, and consider subtree~$\overline{T} = T_1+ T_2 + \{ r r_0, r a_1, \ldots, a_{m-1} a_m \}$ in Fig.~\ref{fig:combine-cases:4}.
 Then,~$|\angle \overline{T}| = (\frac{3}{4} \varphi_1 + \frac{1}{2} \varphi_2 - 112.5\dg)\mn$.
 \label{lem:combine-case4}
\end{lemma}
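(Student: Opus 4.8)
The plan is to follow the same three-step template used for cases~II--IV: first reduce to configurations in which both subtrees are shrunk to points, then bound $|\angle\overline T|$ by a linear maximization, and finally match the bound by an explicit construction.

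First I would apply the shrinking lemma (Lemma~\ref{lem:shrink}) to \emph{both} $(T_1)_{rv_1}^{v_1}$ and $(T_2)_{rv_2}^{v_2}$, arguing exactly as in the proof of Lemma~\ref{lem:combine-case1} that each application only turns the corresponding edge $rv_i$ toward the interior of $\angle\overline T$, so that $|\angle\overline T|$ does not decrease. Hence it suffices to consider drawings in which each $T_i$ is an infinitesimal cone of width $\varphi_i$ attached to $r$ by the edge $rv_i$, together with the path edge $ra_1$; the case $m\ge 2$ is again handled at the very end by drawing $a_2,\dots,a_m$ collinear with $ra_1$ and infinitesimally close to $a_1$.

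For the upper bound I would parametrize the configuration by the directions of the three branch edges $rv_1,rv_2,ra_1$ at $r$ and by the orientation of each cone $\angle T_i$ about $rv_i$. Writing $\angle\overline T$ as the intersection of the root-side half-planes of all edges---equivalently, as $180\dg$ minus the angular spread of the edge directions, since each such half-plane has inward normal opposite to its edge---turns the maximization into a linear program. Its constraints are: adjacent edges at $r$ differ by more than $60\dg$ (the degree bound); the root direction $\vec{v_i r}$ lies in $\angle T_i$, so that $rv_i$ is compatible with its cone; and, crucially, the apex-containment relations $\apex(T_1)\in\angle T_2$, $\apex(T_2)\in\angle T_1$ from Lemma~\ref{lem:indep-angles-contain}(a) between the two independent subtrees, together with the ray-divergence of Lemma~\ref{lem:lem-slope} between each subtree and the path. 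Solving this program should yield $(\tfrac34\varphi_1+\tfrac12\varphi_2-112.5\dg)\mn$; the asymmetric weights reflect that the larger subtree can sit in the \emph{singly} constrained slot next to the path (as in case~II) while the smaller one occupies the \emph{doubly} constrained slot, so one assigns the larger angle $\varphi_1$ to the former.

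For the lower bound I would build the extremal drawing directly, drawing $rv_1,rv_2$ perpendicular to the two outer boundary rays of $\angle\overline T$ (the perpendicularity device from Lemma~\ref{lem:combine-case3}), tuning the path edge so that its axis is the binding third cut, and keeping every inter-branch angle at $r$ just above $60\dg$; greediness is then checked edge by edge via Lemma~\ref{lem:lem-slope}. The main obstacle is the apex-containment step: unlike the purely angular bookkeeping of the earlier cases, $\apex(T_1)\in\angle T_2$ and $\apex(T_2)\in\angle T_1$ are genuinely positional, and it is precisely their interaction with the path that pushes $|\angle\overline T|$ strictly below the value $\varphi_1+\tfrac12\varphi_2-135\dg$ that naively chaining Lemma~\ref{lem:combine-case1} and Lemma~\ref{lem:combine-case3} would suggest. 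Linearizing these containments correctly in the infinitesimal limit---and verifying that the binding constraints indeed produce the coefficients $\tfrac34,\tfrac12$ and the constant $112.5\dg$---is the delicate part of the argument.
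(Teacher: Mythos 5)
Your overall template (shrink both subtrees via Lemma~\ref{lem:shrink}, reduce to a linear optimization, then realize the bound by an explicit construction) is indeed the paper's strategy, but the proposal stops exactly where the lemma's content begins: you write that the program ``should yield'' $(\tfrac34\varphi_1+\tfrac12\varphi_2-112.5\dg)\mn$ and defer ``verifying that the binding constraints indeed produce the coefficients'' as the delicate part. That verification \emph{is} the proof, and your setup as stated cannot produce it. If you parametrize only by the directions of $rv_1$, $rv_2$, $ra_1$ at $r$ and the orientations of the two infinitesimal cones, the constraints that actually bind are not expressible: the paper's optimization has as its free variable $\alpha=\angle v_1v_2r$, an angle at $v_2$ that depends on the direction of the segment $v_1v_2$, and the two active constraints are (i) $\alpha<\angle v_1rv_2$, i.e.\ $|rv_1|<|v_1v_2|$, which is greediness between the two independent subtrees, and (ii) the containments $v_2\in\angle T_1$ and $v_1,a_1\in\angle T_2$, which bound the angle $\gamma$ between the upper boundary of $\angle T_1$ and $v_1v_2$ by $\varphi_1$ and force $\angle rv_2a_1<\varphi_2-\alpha$. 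These give two linear functions $f(\alpha)=90\dg-\tfrac32\alpha+\tfrac{\varphi_2}{2}$ and $g(\alpha)=\tfrac{\alpha}{2}+\varphi_1+\tfrac{\varphi_2}{2}-180\dg$, and the bound is $\max_\alpha\min\{f,g\}$ at $\alpha=135\dg-\tfrac{\varphi_1}{2}$. A ``spread of edge directions at $r$'' formulation, with only the $60\dg$ separation, cone-compatibility and ray-divergence constraints, yields a strictly weaker bound.

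You also omit the case distinction the paper makes on whether $ra_1$ lies inside or outside the angle $\angle v_1rv_2<180\dg$. The inside case is handled not by the optimization at all but by two applications of Lemma~\ref{lem:combine-case1} (to $\overline T-T_2$ and $\overline T-T_1$, which share $\axis(ra_1)$ as a boundary), giving $|\angle\overline T|<\tfrac{\varphi_1+\varphi_2}{2}-90\dg$; one must then check that this is dominated by the outside case precisely because $\varphi_1>90\dg$. Your remark that the naive chaining of Lemmas~\ref{lem:combine-case1} and~\ref{lem:combine-case3} would give $\varphi_1+\tfrac12\varphi_2-135\dg$ is a correct sanity check, and your lower-bound sketch (perpendicular $rv_1$, tuned path edge) matches the paper's construction, but as written the argument has a genuine gap at its center: the binding constraints are never identified and the optimization is never solved.
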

\begin{proof}
 First, let~$m=1$.
 There exist two possibilities for a greedy drawing~$\Gamma$ of~$\overline{T}$.
 Edge~$r a_1$ can be either drawn inside the angle~$\angle v_1 r v_2 < 180\dg$ or outside it.
 For the first case, let~$v_1$ be above~$r a_1$ and~$v_2$ below it.
 Let~$\overline{T_1} = \overline{T} - T_2$ and~$\overline{T_2} = \overline{T} - T_1$.
 Then~$\axis(r a_1)$ forms the lower boundary of~$\angle \overline{T_1}$ and the upper boundary of~$\angle \overline{T_2}$.
 By Lemma~\ref{lem:combine-case1},~$|\angle \overline{T_1}| < \frac{\varphi_1}{2} + 45\dg$
 and~$|\angle \overline{T_2}| < \frac{\varphi_2}{2} + 45\dg$.
 Hence,~$|\angle \overline{T}| < \frac{\varphi_1}{2} + \frac{\varphi_2}{2} - 90\dg$. 
\par
 We now consider the second option.
 Let~$v_2$ be below~$v_1$ and~$a_1$ below~$v_2$ in~$\Gamma$; see Fig.~\ref{fig:combine-case4}.
 The upper boundary of~$\angle \overline{T}$ is either formed by the upper boundary of~$\angle T_1$ or by~$\axis r v_1$.
 The lower boundary of~$\angle \overline{T}$ is formed by~$\axis(r a_1)$. 
 Again, we acquire~$\Gamma'$ by applying~Lemma~\ref{lem:shrink} to~$T_1$ and then to~$T_2$.
 In~$\Gamma'$, both~$({T_i})_{r v_i}^{v_i}$ are drawn infinitesimally small.
 By a similar argument as in the proof of Lemma~\ref{lem:combine-case1},
 $|\angle \overline{T}|$ in~$\Gamma'$ is at least as big as in~$\Gamma$.
 Thus, for an upper bound it suffices to consider only greedy drawings 
 in which~$(T_1)_{r v_1}^{v_1}$ and~$(T_2)_{r v_2}^{v_2}$ are drawn infinitesimally small;
 see Fig.~\ref{fig:combine-case4} for one such drawing.
 Let~$\alpha = \angle v_1 v_2 r$, $\alpha_1 = \angle v_1 r v_2$
 and~$\gamma$ the angle formed by the upper boundary of~$\angle T_1$ and~$v_1 v_2$.
 It must be~$\alpha < \alpha_1$ and~$\gamma < \varphi_1$. %
 Then, for~$\varphi' = |\angle \overline{T}|$ it must hold:
 \begin{align*}
  & \varphi' < (90\dg - \alpha_1) + \frac{\varphi_2 - \alpha}{2} < 90\dg - \frac{3}{2}\alpha + \frac{\varphi_2}{2} =: f(\alpha), \\
  & \varphi' < (\alpha + \varphi_1 - 180\dg) + \frac{\varphi_2 - \alpha}{2} = \frac{\alpha}{2} - 180\dg + \varphi_1 + \frac{\varphi_2}{2} =: g(\alpha).
 \end{align*}
 Hence,~$\varphi' < \varphi_{\max} := \max_\alpha \min \{ f(\alpha), g(\alpha) \} = \frac{3}{4}\varphi_1 + \frac{1}{2}\varphi_2 - 112.5\dg$.
 We can achieve $\varphi_{\max}^-$ by choosing~$\alpha = 135\dg - \frac{\varphi_1}{2}$, 
 $\alpha_1 = \alpha + \eps$, $\gamma = \varphi_1 - \eps$ and~$\angle v_1 v_2 a_1 = \varphi_2 - \eps$.
 In this construction, $\axis(r v_1)$ is parallel to the upper boundary ray of~$\angle T_1$ (dashed green in Fig.~\ref{fig:combine-case4}).
 \par
 Since we assumed~$\varphi_1 > 90\dg$, it is $\frac{\varphi_1}{4}  > 22.5\dg$, and the second embedding option provides a bigger opening angle.
 Note that the new opening angle is $< 37.5\dg$.
\end{proof}
\begin{figure}[tb]
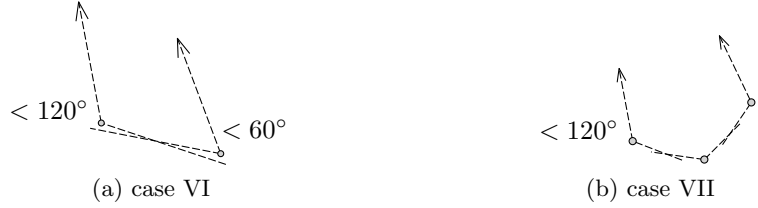

\hfill
 \subfloat[case VI]{\includegraphics[page=10]{fig/full/combine-cases.pdf}
 \label{fig:combine-closed:1}}\hfill
  \subfloat[case VII]{\includegraphics[page=11]{fig/full/combine-cases.pdf}
 \label{fig:combine-closed:2}}\hfill\null
 \caption{In cases VI and VII, no open angle is possible.}
\end{figure}
\begin{lemma}
If either \textnormal{(VI)}~$|\angle T_i| < 90\dg$ in Fig.~\ref{fig:combine-cases:3} or~\ref{fig:combine-cases:4} for some~$i \in \{1,2\}$ or \textnormal{(VII)}~$|\angle T_i| < 120\dg$ for each~$i=1,\ldots,3$ in Fig.~\ref{fig:combine-cases:5}, it is~$|\angle \overline{T}| < 0$.
 \label{lem:combine:closed}
\end{lemma}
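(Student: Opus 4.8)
The plan is to assume, for contradiction, that $\overline{T}$ admits a greedy drawing with $|\angle \overline{T}| \ge 0$ and to reduce both parts to a statement about the directions of the perpendicular bisectors. By Lemma~\ref{lem:shrink} I first shrink each subtree to an infinitesimal point at $v_i$ carrying its opening wedge $\angle T_i$, without changing $|\angle \overline{T}|$. After this, $\polytope(\overline{T})$ is the intersection of the wedges $\angle T_i$ with the half-planes $\hp{v_i r}{r}$ coming from the edges $rv_i$, and $|\angle \overline{T}|\ge 0$ is equivalent to this intersection being unbounded, i.e.\ to the existence of a common recession direction $d$ making an angle at most $90\dg$ with every child-to-parent edge direction (equivalently, to all these directions lying in one half-plane).

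For case~VII this is immediate. The three subtrees are pairwise independent, so by Lemma~\ref{lem:lem-slope} the rays of edges belonging to different $T_i$ diverge; hence the angular ranges spanned by the child-to-parent directions of $T_1,T_2,T_3$ around $r$ are pairwise disjoint. An open $T_i$ contributes a range of width $180\dg-\varphi_i$, so the total spread is at least $\sum_i(180\dg-\varphi_i)=540\dg-\sum_i\varphi_i>540\dg-3\cdot120\dg=180\dg$, using $\varphi_i<120\dg$. (If some $T_i$ is already closed, then $\polytope(\overline{T})\subseteq\polytope(T_i)$ is bounded outright.) A spread exceeding $180\dg$ leaves no recession direction, so $\polytope(\overline{T})$ is bounded and $|\angle \overline{T}|<0$. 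Equivalently, the disjoint-spread bound $|\angle \overline{T}|\le\sum_i\varphi_i-(k-1)\cdot180\dg$ for $k$ subtrees (which for $k=2$ reproduces the value $\varphi_1+\varphi_2-180\dg$ of Lemma~\ref{lem:combine-case3}) gives $\sum_i\varphi_i-360\dg<0$ for $k=3$.

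Case~VI is the delicate one, since with only two subtrees the disjoint-spread bound alone gives $\varphi_1+\varphi_2-180\dg$, which may be positive. Here the parent edge $rr_0$ (and, in case~V, the branch $ra_1$) must be used: by Lemma~\ref{lem:lem-slope} it diverges from every edge of $T_1,T_2$, and the conditions $\angle r_0rv_i>60\dg$ and $\angle v_1rv_2>60\dg$ force the excluded direction $\vec{r_0r}$ to occupy an angular gap that the two spreads cannot absorb once one subtree is too narrow. Concretely, I would rerun the boundary analysis of Lemmas~\ref{lem:combine-case3} and~\ref{lem:combine-case4}: their open-angle optimum draws $rv_i$ perpendicular to the boundary of $\angle T_i$, which keeps $r\in\angle T_i$ only while $\varphi_i\ge90\dg$; for $\varphi_i<90\dg$ the half-plane $\hp{v_i r}{r}$ necessarily cuts strictly across $\angle T_1\cap\angle T_2$. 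Feeding the case-(ii) regime of Lemma~\ref{lem:combine-case1} (where the combined angle stays $\varphi_i$ instead of growing to $\tfrac{\varphi_i}{2}+45\dg$) into those computations, and using apex containment (Lemma~\ref{lem:indep-angles-contain}) to decide which side of $\angle \overline{T}$ is formed by $\axis(rv_i)$ rather than by a subtree boundary, drives the maximum to a non-positive value. The main obstacle is exactly this quantitative gap bookkeeping: proving that $90\dg$ is the precise threshold at which the two bounding rays of $\angle \overline{T}$ stop turning past $180\dg$, which requires a careful case distinction on the orientation of $rv_i$ relative to $\angle T_i$, handled through the perpendicularity condition together with Lemma~\ref{lem:indep-angles-contain}.
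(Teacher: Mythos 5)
Your treatment of case~VII is sound and is essentially a dual reformulation of the paper's argument: the paper phrases it via Lemma~\ref{lem:indep-angles-contain} (the three cones must pairwise contain each other's apices, which is incompatible with all three widths being below $120\dg$), while you phrase it via recession directions and the additive spread of the three disjoint direction arcs. Both versions rest on the same unstated fact that the arcs/cones of pairwise independent subtrees are arranged consecutively (the convex $2d$-gon in the proof of Lemma~\ref{lem:indep-angles-sum}), so you are at the paper's level of rigor there; your quantitative bound $|\angle\overline{T}|\le\sum_i\varphi_i-(k-1)\cdot 180\dg$ is consistent with Lemma~\ref{lem:combine-case3}.

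Case~VI, however, has a genuine gap, and you put your finger on it yourself without closing it. You correctly observe that the spread bound only yields $|\angle\overline{T}|\le\varphi_1+\varphi_2-180\dg$, which can be positive when $\varphi_2\in(60\dg,90\dg)$, and you then describe a program (``rerun the boundary analysis\dots'', ``the main obstacle is exactly this quantitative gap bookkeeping'') rather than a proof. The paper's point is that this analysis is never needed: by Table~\ref{tab:combine}, no tight upper bound on the opening angle of a subtree arising in the construction lies in the interval $(60\dg,90\dg]$, so the hypothesis $|\angle T_2|<90\dg$ self-improves to $|\angle T_2|<60\dg$. Then $\varphi_1+\varphi_2<120\dg+60\dg=180\dg$, and exactly the two-cone specialization of your case~VII argument (equivalently, the apex-containment argument of Lemma~\ref{lem:indep-angles-contain}, which forces $\varphi_1+\varphi_2\ge 180\dg$ for the intersection to be unbounded) finishes the proof. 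Note that the statement you set out to prove directly --- that $\varphi_1<120\dg$ and $\varphi_2<90\dg$ always force a closed combined angle --- does not follow from cone geometry alone: two cones of widths, say, $115\dg$ and $85\dg$ that pairwise contain each other's apices can have an unbounded intersection. So the quantization of achievable opening angles is not a convenience but the essential missing idea, and without it your proposed $90\dg$-threshold analysis would have to carry the entire burden, which it does not.
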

\begin{proof}
 First, let~$|\angle T_1| < 120\dg$ and~$|\angle T_2| \leq 90\dg$
 in Fig.~\ref{fig:combine-cases:3} or \ref{fig:combine-cases:4}.
 Since no tight upper bounds in range~$(60\dg, 90\dg]$ appear (see Table~\ref{tab:combine}),
 it is~$|\angle T_2| < 60\dg$. It must be~$\apex(T_i) \in \angle T_j$ for~$i \neq j$,
 therefore, no open angle is possible; see Fig.~\ref{fig:combine-closed:1}.
  Same holds for~$|\angle T_1|, |\angle T_2|, |\angle T_3| < 120\dg$; see Fig.~\ref{fig:combine-closed:2}.
\end{proof}
\section{Arranging rooted subtrees with open angles}
\label{sec:arrange-angles}
In this section, we consider the task of constructing a greedy drawing~$\Gamma$ of~$T$
by combining independent rooted subtrees with a common root.
The following problem (restricted to $n \in \{3,4,5\}$) turns out to be fundamental in this context.
\par

\begin{problem}\label{prob:ngon}
	Given $n$ angles~$\varphi_0$, \ldots, $\varphi_{n-1} > 0\dg$, is there a convex $n$-gon $P$ with corners $v_0, \dots, v_{n-1}$ (in arbitrary order) with interior angles $\psi_i < \varphi_i$ for $i=0, \dots, n-1$, such that the star $K_{1,n}$ has a greedy drawing with root $r$ inside $P$ and leaves $v_0, \dots, v_{n-1}$?
\end{problem}
If Problem~\ref{prob:ngon} has a solution we write $\{\varphi_0, \ldots, \varphi_{n-1}\} \in \gP{n}$.
It can be solved using a series of following optimization problems 
(one for each fixed cyclic ordering of~$(\varphi_1, \ldots, \varphi_n)$).
\begin{equation}
\begin{aligned}
 & \textnormal{maximize } \eps \textnormal{ under: }
 \eps,\alpha_i,\beta_i,\gamma_i \in [0\dg, 180\dg], \; i=0, \ldots, n-1 \\
 &\beta_i + \eps \leq \alpha_i, \;\; \gamma_i + \eps \leq \alpha_i, \; \; \beta_i + \gamma_{i+1} + \eps \leq \varphi_i \;\; (i \textnormal{ mod } n) \\
 &\alpha_i + \beta_i + \gamma_i = 180\dg, \;\; \alpha_0 + \ldots + \alpha_{n-1} = 360\dg\\ %
 &\sin(\beta_0) \cdot \ldots \cdot \sin(\beta_{n-1}) = \sin(\gamma_0) \cdot \ldots \cdot \sin(\gamma_{n-1})
\end{aligned}
\tag{*}\label{eq:opt-problem}
\end{equation}
The last constraint in~\optprobref follows from applying the law of sines and is known as the wheel condition 
in the work of di Battista and Vismara~\cite{DiBattistaVismara1993}.
\begin{lemma}
It is~$\{ \varphi_0, \ldots, \varphi_{n-1} \} \in \gP{n}$ if and only if there exists a solution of~\optprobref with~$\eps > 0$ for an ordering~$( \varphi_0, \ldots, \varphi_{n-1} )$.
\label{lem:Pn-Opt-equiv}
\end{lemma}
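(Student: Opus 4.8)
The plan is to prove both directions of the equivalence by connecting the geometry of a greedy drawing of the star $K_{1,n}$ inside a convex polygon $P$ to the angle variables $\alpha_i, \beta_i, \gamma_i$ appearing in~\optprobref. The variables have a natural interpretation: for each leaf $v_i$, the triangle $r v_i v_{i+1}$ (indices mod $n$) has angle $\alpha_i$ at the root $r$, angle $\beta_i$ at $v_i$, and angle $\gamma_{i+1}$ at $v_{i+1}$; the constraint $\alpha_0 + \dots + \alpha_{n-1} = 360\dg$ says the triangles tile the full angle around $r$, and $\alpha_i + \beta_i + \gamma_i = 180\dg$ is just the angle sum of each triangle. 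The interior angle of $P$ at corner $v_i$ is $\beta_i + \gamma_i$, so the requirement $\psi_i < \varphi_i$ becomes $\beta_i + \gamma_i < \varphi_i$; introducing a slack $\eps>0$ converts the strict inequalities into the closed-constraint form $\beta_i + \gamma_{i+1} + \eps \leq \varphi_i$ used in the optimization. I would state these dictionary identifications first, fixing notation once and for all.

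For the forward direction, assuming $\{\varphi_0, \dots, \varphi_{n-1}\} \in \gP{n}$, I would take a witnessing convex $n$-gon $P$ with a greedy star drawing, read off the angles $\alpha_i, \beta_i, \gamma_i$ from the $n$ triangles fanned out from $r$, and verify that they satisfy every constraint of~\optprobref with some $\eps > 0$. The angle-sum and full-turn equalities are immediate; the inequalities $\beta_i + \eps \leq \alpha_i$ and $\gamma_i + \eps \leq \alpha_i$ must come from greediness: in each triangle the angle at $r$ must strictly exceed the angles at the two leaves, since greediness of the star forces $|r v_i| < |v_i v_{i+1}|$ type distance comparisons (equivalently, by Lemma~\ref{lem:bisectors-dont-cross}, the leaf $v_i$ must lie in the half-plane $\hp{r v_i}{r}$ relative to the root, which translates into the angle at $r$ dominating). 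The wheel condition $\prod \sin \beta_i = \prod \sin \gamma_i$ is exactly the closure condition: applying the law of sines around the fan expresses the ratio of consecutive edge lengths $|r v_{i+1}| / |r v_i|$ as $\sin \beta_i / \sin \gamma_{i+1}$, and the product of these ratios around the cycle must return to $1$, yielding the stated equality. Taking $\eps$ to be the minimum positive slack across the finitely many inequalities gives a feasible solution.

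For the converse, given a solution of~\optprobref with $\eps > 0$ for some ordering, I would reverse the construction: build the fan of $n$ triangles around a point $r$ with the prescribed angles, which is possible precisely because $\sum \alpha_i = 360\dg$ (the triangles close up angularly) and the wheel condition guarantees the edge lengths close up metrically so that after going around the cycle the radial lengths are consistent and $P$ is a genuine simple polygon. The strict inequalities $\beta_i + \eps \leq \alpha_i$, $\gamma_i + \eps \leq \alpha_i$ ensure each edge of $P$ lies in the correct half-plane so that the star drawing is greedy (again via Lemma~\ref{lem:bisectors-dont-cross}: each axis of $r v_i$ separates $r$ from $v_i$ on the correct side), and $\beta_i + \gamma_{i+1} + \eps \leq \varphi_i$ gives interior angles strictly below $\varphi_i$. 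I would also note that $P$ is convex because each interior angle $\beta_i + \gamma_i$ is less than $180\dg$.

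The main obstacle is the wheel condition and the metric closure of the polygon. It is routine to match up the linear angle constraints, but verifying that the law-of-sines product condition is both \emph{necessary} (the radii must return consistently after one loop around $r$) and \emph{sufficient} (given the condition, one can actually realize a closed simple convex polygon rather than a spiral that fails to close) requires care. Concretely, I would set $|r v_0| = 1$, propagate $|r v_{i+1}| = |r v_i| \cdot \sin\beta_i / \sin\gamma_{i+1}$ around the cycle, and observe that returning to $v_0$ forces $\prod_i \sin\beta_i / \sin\gamma_{i+1} = 1$, i.e.\ the wheel condition; conversely this same condition lets me choose consistent radii, after which convexity and greediness follow from the angle inequalities. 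Handling the degenerate or boundary cases (where some $\psi_i$ would hit $\varphi_i$, or where a leaf lies on an axis) is exactly what the slack $\eps$ is designed to prevent, so threading $\eps$ through every inequality is the technical crux.
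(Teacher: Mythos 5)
Your overall approach is the same as the paper's: identify $\alpha_i,\beta_i,\gamma_i$ with the angles of the $n$ triangles fanned around $r$, observe that $|r v_{i-1}|, |r v_i| < |v_{i-1} v_i|$ is equivalent to $\beta_i, \gamma_i < \alpha_i$ by the law of sines / side-angle comparison, and obtain the wheel condition as the closure condition for the radii. The forward direction and the metric-closure discussion are fine (indeed more explicit than the paper, which just points to the figure).

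There is, however, a gap in your converse. Greediness of $K_{1,n}$ requires $|r v_j| < |v_i v_j|$ for \emph{every} pair of leaves $i \neq j$ (the only neighbor of $v_i$ is $r$, so the message from $v_i$ to $v_j$ must make progress through $r$); equivalently, via Lemma~\ref{lem:bisectors-dont-cross}, every \emph{other} leaf $v_j$ must lie in $\hp{r v_i}{r}$, not just $v_{i\pm1}$. Your constraints $\beta_i + \eps \le \alpha_i$, $\gamma_i + \eps \le \alpha_i$ only certify this for \emph{adjacent} pairs $v_i, v_{i+1}$, and your appeal to ``each axis of $r v_i$ separates $r$ from $v_i$ on the correct side'' is vacuous (the axis always separates its two endpoints); what must be checked is where the remaining $n-2$ leaves fall. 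For $n=4,5$ there are genuinely non-adjacent pairs, and this is exactly the one substantive step in the paper's proof: for, say, $v_0$ and $v_2$ with $\turnccw{r v_0}{r v_2} \le 180\dg$, convexity of the fan gives $\angle v_0 v_2 r \le \beta_2 < \alpha_2 \le \alpha_1+\alpha_2 = \angle v_0 r v_2$ and symmetrically $\angle v_2 v_0 r \le \gamma_1 < \alpha_1 + \alpha_2$, so the angle at $r$ in triangle $r v_0 v_2$ dominates both base angles and $|r v_0|, |r v_2| < |v_0 v_2|$. Without this step your converse only proves greediness for $n=3$.
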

\begin{proof}
A solution to Problem~\ref{prob:ngon} provides a solution to~\optprobref
by the construction in Fig.~\ref{fig:opt-problem},
since~$|r v_{i-1}|, |r v_i| < |v_{i-1} v_i| \Leftrightarrow \beta_i, \gamma_i < \alpha_i$.
Conversely, a solution to~\optprobref
provides a greedy drawing of~$K_{1,n}$.
Without loss of generality, consider nodes~$v_0$, $v_2$,
and let the counterclockwise turn from~$r v_0$ to~$r v_2$ be at most~$180\dg$;
see Fig.~\ref{fig:PnGreedy-equiv}.
Then, $\angle v_0 v_2 r \leq \beta_2 < \alpha_1 + \alpha_2$,
 $\angle v_2 v_0 r \leq \gamma_1 < \alpha_1 + \alpha_2$, hence, $|r v_0|, |r v_2| < |v_0 v_2|$.
 \end{proof}
\begin{figure}[tb]
 \centering
 \hfill
 \subfloat[]{
 \includegraphics[scale=0.95, page=1]{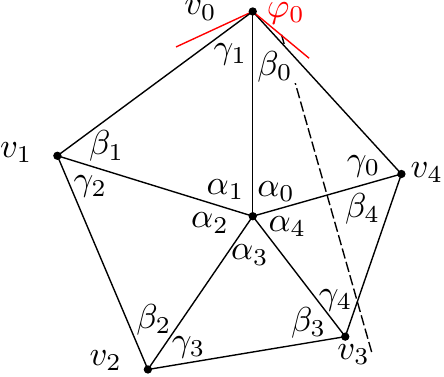}
 \label{fig:opt-problem}}
\hfill
\subfloat{\includegraphics[scale=0.95, page=6]{fig/full/opt-problem.pdf}
  \label{fig:PnGreedy-equiv}
}\hfill
  \subfloat[]{\includegraphics[scale=0.95, page=2]{fig/full/opt-problem.pdf}
  \label{fig:PnGreedy}
 }\hfill\null
 \caption{
  \protect\subref{fig:opt-problem}~Sketch for the optimization problem~\optprobref.
  \protect\subref{fig:PnGreedy-equiv}~In a drawing of~$K_{1,n}$ induced by a solution of~\optprobref, path~$v_i,r,v_j$ is distance-decreasing.
  \protect\subref{fig:PnGreedy} Solving~\optprobref lets us construct greedy drawings by placing sufficiently small drawings of subtrees at $n$-gon corners.
  }
\end{figure}
Deciding whether a solution of~\optprobref with~$\eps>0$ exists
is in fact equivalent to deciding whether the wheel condition can be satisfied in the interior of a
$2n-1$-dimensional simplex.
\begin{observation}
 Let~$n=3,4,5$, $\varphi_i \in [0\dg, 180\dg]$, $i=0,\ldots,n-1$, such that 
 $\sum_{i=0}^{n-1} \varphi_i > (n-2)180\dg$.
 For a permutation~$\tau$ of~$(0,\ldots,n-1)$, define a $2n-1$-dimensional simplex~$S_\tau$ as follows:
 \begin{align*}
  S_\tau   = \left\{(\beta_0,\ldots,\beta_{n-1},\gamma_0,\ldots,\gamma_{n-1}) \;\; \middle| \;\;
     \begin{aligned}  
  		 & \textnormal{for } i=0,\ldots,n-1: \\
  		 & \beta_i \geq 0, \gamma_i \geq 0, \\
  		 & \beta_i + \gamma_{i+1} \leq \varphi_{\tau(i)}, \\
  		 & 180\dg - \beta_i - \gamma_i \geq \beta_i, \\
  		 & 180\dg - \beta_i - \gamma_i \geq \gamma_i; \\
  		 & \sum_{i=0}^{n-1} (\beta_i + \gamma_i) = (n-2)180\dg
  \end{aligned}	  
  		\right\}.
\end{align*}
Define
$$ \omega(\beta_0,\ldots,\beta_{n-1},\gamma_0,\ldots,\gamma_{n-1}) = \prod_{i=0}^{n-1} \sin(\beta_i) - \prod_{i=0}^{n-1} \sin(\gamma_i). $$
Then, $\{ \varphi_0, \ldots, \varphi_{n-1}\} \in \gP{n}$ if and only if the function~$\omega$
has a zero in the interior of the simplex~$S_\tau$ for some permutation~$\tau$.
\label{lem:w-zero-on-simplex}
\end{observation}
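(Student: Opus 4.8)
The plan is to show that $S_\tau$ is, for each permutation $\tau$, exactly the feasible region of the \emph{linear} constraints of the optimization problem~\optprobref (for the cyclic ordering determined by $\tau$), that its relative interior corresponds to strict feasibility, i.e.\ to the existence of a positive~$\eps$, and that the wheel condition is precisely $\omega=0$. Granting this, the statement follows from Lemma~\ref{lem:Pn-Opt-equiv}: $\{\varphi_0,\ldots,\varphi_{n-1}\}\in\gP{n}$ iff \optprobref has a solution with $\eps>0$ for some ordering, which we then translate into a zero of $\omega$ in the relative interior of $S_\tau$ for the corresponding~$\tau$.

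First I would eliminate $\alpha_i$ from~\optprobref using $\alpha_i=180\dg-\beta_i-\gamma_i$. Then $\sum_i\alpha_i=360\dg$ becomes exactly the affine constraint $\sum_i(\beta_i+\gamma_i)=(n-2)180\dg$ cutting out the $(2n-1)$-dimensional hull of $S_\tau$; the constraints $\beta_i+\eps\le\alpha_i$ and $\gamma_i+\eps\le\alpha_i$ become $180\dg-\beta_i-\gamma_i\ge\beta_i$ and $\ge\gamma_i$ (with slack $\eps$); $\beta_i+\gamma_{i+1}+\eps\le\varphi_i$ becomes $\beta_i+\gamma_{i+1}\le\varphi_{\tau(i)}$; and the box constraints reduce to $\beta_i,\gamma_i\ge0$. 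Thus the inequalities defining $S_\tau$ are exactly the $\eps=0$ version of the linear constraints of~\optprobref, and the hypothesis $\sum_i\varphi_i>(n-2)180\dg$ is precisely the condition obtained by summing $\beta_i+\gamma_{i+1}\le\varphi_{\tau(i)}$ against $\sum_i(\beta_i+\gamma_i)=(n-2)180\dg$ that is needed for a strictly feasible point to exist, ensuring $S_\tau$ has nonempty relative interior. A point lies in $\operatorname{relint}(S_\tau)$ iff all these inequalities hold \emph{strictly}; for such a point I take $\eps$ to be the minimum slack over the finitely many inequalities, which is positive, so that together with $\omega=0$ it yields a solution of~\optprobref with $\eps>0$. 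This settles the direction ``zero of $\omega$ in $\operatorname{relint}(S_\tau)$ $\Rightarrow\gP{n}$''.

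For the reverse direction the delicate point is that a solution of~\optprobref with $\eps>0$ taken abstractly could, a priori, have some $\beta_i=0$ (forcing some $\gamma_j=0$ via the wheel condition), placing it on $\partial S_\tau$ rather than in the interior, since $\eps$ provides slack only in the $\le$-inequalities, not in $\beta_i,\gamma_i\ge0$. I would avoid this by reading the solution off an actual greedy drawing of $K_{1,n}$ as in Problem~\ref{prob:ngon} and Fig.~\ref{fig:opt-problem}: there $\beta_i,\gamma_i,\alpha_i$ are the angles of the non-degenerate triangles $r v_{i-1} v_i$, hence strictly positive; greediness gives $\beta_i,\gamma_i<\alpha_i$, convexity with $\psi_i<\varphi_i$ gives $\beta_i+\gamma_{i+1}<\varphi_{\tau(i)}$, the full turn at $r$ gives $\sum_i\alpha_i=360\dg$, and closing the polygon gives $\omega=0$. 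All inequalities are thus strict and all $\beta_i,\gamma_i$ positive, so the corresponding point lies in $\operatorname{relint}(S_\tau)$ and is a zero of $\omega$.

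I expect the main obstacle to be exactly this boundary issue: justifying that ``$\eps>0$'' really lands in the \emph{relative interior} rather than on a facet $\beta_i=0$. The clean resolution is the geometric non-degeneracy argument above. Alternatively one can argue directly: starting from a boundary zero $x_0$ at which both $\sin$-products vanish, perturb $x_0$ slightly toward an interior point (legal since every other constraint is strict at $x_0$) to reach interior points $x_+,x_-$ on which $\prod_i\sin\beta_i$ exceeds resp.\ falls short of $\prod_i\sin\gamma_i$ — independent tuning is possible because $\dim\operatorname{relint}(S_\tau)=2n-1\ge5$ — and then invoke the intermediate value theorem on the segment from $x_+$ to $x_-$ inside the convex set $\operatorname{relint}(S_\tau)$ to recover an interior zero of $\omega$.
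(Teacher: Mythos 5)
Your proposal is correct and follows exactly the route the paper intends: the paper states this as an unproved observation, relying implicitly on the substitution $\alpha_i=180\dg-\beta_i-\gamma_i$ to identify $S_\tau$ with the $\eps=0$ feasible region of~\optprobref and then invoking Lemma~\ref{lem:Pn-Opt-equiv}. Your extra care on the boundary issue (reading the angles off a non-degenerate drawing of $K_{1,n}$ to guarantee $\beta_i,\gamma_i>0$ and hence an interior point) is precisely the detail the paper glosses over, and your treatment of it is sound.
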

\begin{theorem}
 For~$n = 3,4,5$, consider trees~$T_i$, $i=0, \ldots, n-1$ with root~$r$, edge~$r v_i$ in~$T_i$,
 $\deg(r) = 1$ in~$T_i$, $T_i \cap T_j = \{ r \}$ for~$i \neq j$, such that each~$T_i$ has a drawing 
 with opening angle at least~$0 < \varphi_i - \eps  < 180\dg$ for any~$\eps > 0$.
 Then, tree~$T = \bigcup_{i=0}^{n-1} T_i$ has a greedy drawing with
 $|\angle T_i| < \varphi_i$ for all~$i=0,\ldots,n-1$
 if and only if~$\{ \varphi_0, \ldots, \varphi_{n-1} \} \in \gP{n}$.
 \label{thm:p5equiv}
\end{theorem}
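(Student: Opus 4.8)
The statement is an equivalence, so my plan is to prove the two directions separately, using the shrinking lemma (Lemma~\ref{lem:shrink}) as the common reduction. In both directions I replace each $T_i$ by a greedy drawing in which $(T_i)_{rv_i}^{v_i}$ is shrunk to a single point $p_i$; by Lemma~\ref{lem:shrink} this preserves greediness and leaves $|\angle T_i|$ unchanged, so $\angle T_i$ becomes a cone with apex $p_i$ and opening $|\angle T_i|$, and the whole drawing collapses to a star $K_{1,n}$ with root $r$ and leaves $p_0,\dots,p_{n-1}$. The greediness of the combined tree then decomposes, via Lemma~\ref{lem:bisectors-dont-cross}, into the internal greediness of each $T_i$ together with the single interface condition that the remainder of the tree lie inside $\polytope(T_i)\subseteq\angle T_i$; after shrinking this says exactly that $r$ and all $p_j$ with $j\neq i$ lie in the cone $\angle T_i$. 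This observation is what lets me move between the tree and Problem~\ref{prob:ngon}.

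For the implication $\{\varphi_0,\dots,\varphi_{n-1}\}\in\gP{n}\Rightarrow$ existence, I start from a solution of Problem~\ref{prob:ngon}: a convex $n$-gon $P$ with corners $v_i$, interior angles $\psi_i<\varphi_i$, and a greedy drawing of $K_{1,n}$ with $r$ inside $P$. For each $i$ I fix $\eps_i$ with $0<\eps_i<\varphi_i-\psi_i$, take a greedy drawing of $T_i$ with $|\angle T_i|=\varphi_i-\eps_i$ shrunk at $v_i$, and rotate it so that its apex lies at $v_i$ and the cone $\angle T_i$ (of opening $\varphi_i-\eps_i>\psi_i$) contains the interior-angle cone of $P$ at $v_i$. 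Since $P$ is convex, that interior-angle cone contains all of $P$, hence $r$ and every other corner $v_j$; drawing each $(T_i)_{rv_i}^{v_i}$ small enough then keeps the whole remainder strictly inside $\angle T_i$, i.e.\ inside $\polytope(T_i)$. I check greediness with Lemma~\ref{lem:bisectors-dont-cross}: for the internal edges of $T_i$ the $T_i$-part lies in the required half-plane because the chosen drawing of $T_i$ is greedy, and the remainder does so because it lies in $\polytope(T_i)$; for the edges $rv_i$ the condition $v_j\in\hp{rv_i}{r}$ is exactly greediness of the star and survives the shrinking. As $|\angle T_i|=\varphi_i-\eps_i<\varphi_i$, the resulting drawing has the claimed opening angles (this is the construction sketched in Fig.~\ref{fig:PnGreedy}).

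Conversely, given a greedy drawing of $T$ with $|\angle T_i|<\varphi_i$, I pass to the shrunk star as above. The edge conditions for $rv_i$ read $p_j\in\hp{rp_i}{r}$, i.e.\ $|rp_j|<|p_ip_j|$ for all $j\neq i$, so $K_{1,n}$ on $r,p_0,\dots,p_{n-1}$ is greedy with $r$ the common nearest point. Each cone $\angle T_i$ has opening $<180\dg$ yet contains $r$ and every $p_j$, so all other points lie in an open half-plane through $p_i$, making $p_i$ a vertex of the convex hull; combined with the mutual containment of apices from Lemma~\ref{lem:indep-angles-contain}(a), the $p_i$ are in convex position and bound a convex $n$-gon $P$ (in the cyclic order in which the cones are arranged) with $r$ in its interior -- this is the same convex arrangement underlying Lemma~\ref{lem:indep-angles-sum}. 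The interior angle $\psi_i$ of $P$ at $p_i$ equals the angular spread of the remaining corners seen from $p_i$, and since all of them lie in $\angle T_i$ we get $\psi_i\leq|\angle T_i|<\varphi_i$. Hence $P$ together with this greedy star witnesses $\{\varphi_0,\dots,\varphi_{n-1}\}\in\gP{n}$. If one $T_i$ is realized with a closed or zero angle, then by Lemma~\ref{lem:two-closed-angles} it is the only one, and I treat its boundary segment in place of an apex using parts~(b) and~(c) of Lemma~\ref{lem:indep-angles-contain}.

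The step I expect to be most delicate is the geometric bookkeeping in the constructive direction: I must argue that the $n$ shrunk subtrees can be rotated and scaled simultaneously so that each remainder stays strictly inside its own $\polytope(T_i)$ while all the star-edge conditions $v_j\in\hp{rv_i}{r}$ continue to hold. The freedom to pick $\eps_i<\varphi_i-\psi_i$ and to shrink arbitrarily is what resolves this, but it must be verified that these choices do not interfere across subtrees. The other point requiring care is showing that after shrinking the leaves are genuinely in convex position with $r$ strictly interior (rather than $r$ becoming a hull vertex); I would obtain this from the convex $2n$-gon arrangement already established for Lemmas~\ref{lem:indep-angles-contain} and~\ref{lem:indep-angles-sum} rather than reprove it.
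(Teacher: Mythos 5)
Your forward direction matches the paper's construction, and your overall strategy for the converse (shrink each subtree via Lemma~\ref{lem:shrink} and read off a convex polygon) is also the paper's. However, the converse as you present it has two genuine gaps. First, the closed-angle case: if one $|\angle T_i|\leq 0$ in the given drawing, Lemma~\ref{lem:shrink} cannot be applied to $T_i$ at all (it requires $|\angle T'|>0$), so you never reach the shrunk star for that subtree, and ``treating its boundary segment in place of an apex'' does not produce what $\gP{n}$ demands, namely a convex $n$-gon corner $v_i$ with interior angle $\psi_i<\varphi_i$ and a greedy star $K_{1,n}$ with root inside. The missing idea, which the paper supplies, is to \emph{re-embed} first: by Lemma~\ref{lem:two-closed-angles} the complementary subtree $\overline{T}=\{rv_i\}+\bigcup_{j\neq i}T_j$ has an open angle, and since by hypothesis $T_i$ admits drawings with opening angle $\varphi_i-\eps>0$, one can lengthen $rv_i$ inside $\angle\overline{T}$ and redraw $T_i$ with a positive opening angle, obtaining a greedy drawing in which \emph{all} $|\angle T_j|>0$ before any shrinking takes place.

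Second, your claim that the shrunk leaves are in convex position \emph{with $r$ strictly interior} is not justified by the convex $2n$-gon arrangement of Lemma~\ref{lem:indep-angles-sum}: that polygon is built from the apices and cone boundaries of the $\angle T_i$, and $r$ is not one of its vertices, so it says nothing about where $r$ sits. For $n=4,5$ the paper gets $r$ interior from a different fact (each of the $n$ consecutive angles at $r$ exceeds $60\dg$ and they sum to $360\dg$, so each is below $180\dg$), but for $n=3$ this genuinely fails: the paper explicitly notes that $v_1$ may lie inside $\angle v_0 r v_2\leq 180\dg$, i.e.\ $r$ need not lie inside the triangle, so the shrunk drawing is \emph{not} a witness for $\gP{3}$. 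The paper patches this by abandoning the drawing entirely in that case: pairwise independence gives $\varphi_0+\varphi_1+\varphi_2>180\dg$ via Lemma~\ref{lem:indep-angles-sum}, and then Lemma~\ref{lem:greedy-K13} yields $\{\varphi_0,\varphi_1,\varphi_2\}\in\gP{3}$ directly. You flagged both of these points as delicate, but the resolutions you sketch for them do not go through as stated.
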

\begin{proof}
 First, consider a drawing of~$K_{1,n}$ with edges~$r v_i$ that solves~$\gP{n}$,
 and, without loss of generality, let the angles be ordered such that~$\psi_i := \angle v_{i-1} v_i v_{i+1} < \varphi_i$.
 We create a greedy drawing~$\Gamma$ of~$T$ by drawing~$({T_i})_{r v_i}^{v_i}$ infinitesimally small at~$v_i$
 with opening angle~$\varphi_i - \eps > \psi_i$ for a sufficiently small~$\eps > 0$
 and orienting it such that~$v_j \in \angle T_i$ for all~$j \neq i$; see Fig.~\ref{fig:PnGreedy}. \par
 Now assume a greedy drawing~$\Gamma_0$ of~$T$ with $|\angle T_i| < \varphi_i$, $i=0,\ldots,n-1$ exists.
 For one~$i$, it might be~$|\angle T_i| < 0$ in~$\Gamma_0$.
 Then, there also exists a greedy drawing~$\Gamma$, in which
 $0 < |\angle T_j| < \varphi_j$, $j=0,\ldots,n-1$.
 By Lemma~\ref{lem:two-closed-angles},
 the subtree $\overline{T} = \{ r v_i\} + \bigcup_{j \neq i} T_j$ must have an open angle in~$\Gamma_0$.
 We then obtain~$\Gamma$ by making the edge~$r v_i$ sufficiently long inside~$\angle \overline{T}$
 and drawing~$T_i$ with~$|\angle T_i|>0$,
 such that $\overline{T}  \subseteq \angle T_i$
 and $T_i \subseteq  \angle \overline{T}$.
 \par
 We apply Lemma~\ref{lem:shrink} to~$T_0$, then to~$T_1$, \ldots, $T_{n-1}$
 and obtain a greedy drawing~$\Gamma'$ of~$T$ with opening angles~$\angle T_i$ of same size,
 such that each subtree~$({T_i})_{r v_i}^{v_i}$ is drawn infinitesimally small at~$v_i$.
 For $n=4,5$, for each pair of consecutive edges~$r v_i$, $r v_j$ in~$\Gamma'$
 the turn from~$r v_i$ to $r v_j$ is less than~$180\dg$, %
 so $r$ lies inside the convex polygon with corners $v_0, \ldots, v_{n-1}$.  Therefore, $\Gamma'$ directly provides a solution of~$\gP{n}$.
 For~$n=3$, $v_1$ might lie inside angle~$\angle v_0 r v_2 \leq 180\dg$.
 However, since~$\varphi_0 + \varphi_1 + \varphi_2 > 180\dg$,
 it is~$\{ \varphi_0 , \varphi_1 , \varphi_2 \} \in \gP{3}$; see Lemma~\ref{lem:greedy-K13}. 
\end{proof}
Although Problem~\optprobref is non-linear,
we are almost always able to give tight conditions for the existence of the solution; 
see Table~\ref{tab:P}, which summarizes Lemmas~\ref{lem:greedy-K13}
to~\ref{lem:2x180}.
\begin{table}
 \centering
 \begin{tabular}{l p{.49\linewidth} p{.33\linewidth} r}
  $n$ & case & $\{ \varphi_0, \ldots, \varphi_{n-1} \} \in \gP{n}$ iff & proof \\\hline
  $3,4$ & & always & Lem.~\ref{lem:greedy-K13}, \ref{lem:greedy-K14} \\
  $5$ & $\varphi_0 = \ldots = \varphi_3 = 180\dg$ & always & Lem.~\ref{lem:4x180} \\
  $5$ & $\varphi_0 \leq 120\dg$ & always & Lem.~\ref{lem:P5-120} \\
  $5$ & $\varphi_0 = \ldots = \varphi_2 = 180\dg$ & $\varphi_3 + \varphi_4 > 120\dg$ & Lem.~\ref{lem:P5-3x180} \\
  $5$ & $\varphi_0 = \varphi_1 = 180\dg$ & $\varphi_2 + \varphi_3 + \varphi_4 > 240\dg$ & Lem.~\ref{lem:2x180} \\
  $5$ & $\varphi_0 = 180\dg, \varphi_1, \varphi_2, \varphi_3 \in (90\dg, 120\dg], \varphi_4 \leq 60\dg$ & \centering ? & \\
  $5$ & $\varphi_0 = 180\dg, \varphi_1, \ldots, \varphi_4 \in (90\dg, 120\dg]$ & \centering ? & \\\hline
 \end{tabular}
 \normalsize
 \caption{Solving non-linear problem~\gP{n} explicitly.
 Let~$\varphi_i \geq \varphi_{i+1}$, 
 $\varphi_i \in (0\dg,60\dg] \cup (90\dg,120\dg] \cup \{ 180\dg \}$,
 $\sum_{i=0}^{n-1} \varphi_i > (n-2)180\dg$.}
 \label{tab:P}
\end{table} 
\begin{lemma}
 For angles~$\varphi_0$, $\varphi_1$, $\varphi_2>0$, $\sum_{i=0}^2 \varphi_i > 180\dg$, it holds: $ \{ \varphi_0, \varphi_1, \varphi_2\} \in \mathcal P^3$.
 \label{lem:greedy-K13}
\end{lemma}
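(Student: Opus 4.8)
My plan is to invoke Lemma~\ref{lem:Pn-Opt-equiv} to turn the statement $\{\varphi_0,\varphi_1,\varphi_2\}\in\gP{3}$ into a purely constructive task: exhibit, for a suitable ordering, a solution of~\optprobref with $\eps>0$, or equivalently (Problem~\ref{prob:ngon}) a triangle $v_0v_1v_2$ with interior angles $\psi_i<\varphi_i$ and an interior root $r$ whose star $K_{1,3}$ is greedy. Note first that $\sum_i\psi_i=180\dg$ forces the hypothesis $\sum_i\varphi_i>180\dg$ to be necessary; conversely, since $\sum_i\varphi_i>180\dg$, the set of admissible angle triples $(\psi_0,\psi_1,\psi_2)$ with $0<\psi_i<\varphi_i$ and $\sum_i\psi_i=180\dg$ is a nonempty open polytope, so there is room to choose the triangle shape. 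The whole difficulty is then to couple a valid shape with a valid root position.

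The second step is to make the greediness condition geometric. For the star with center $r$ the only nontrivial routing requests are $v_i\to v_j$, so the drawing is greedy if and only if $|rv_j|<|v_iv_j|$ for every ordered pair, i.e. for each edge $v_iv_j$ the root lies in the lens $D(v_i,|v_iv_j|)\cap D(v_j,|v_iv_j|)$, where $D(c,\rho)$ is the open disk of radius $\rho$ about $c$. Writing $\theta_{ij}=\angle v_i r v_j$ and applying the law of cosines to $|v_iv_j|^2=|rv_i|^2+|rv_j|^2-2|rv_i||rv_j|\cos\theta_{ij}$, this lens membership is equivalent to $\theta_{ij}>60\dg$ together with $2\cos\theta_{ij}<|rv_i|/|rv_j|<1/(2\cos\theta_{ij})$. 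In the notation of~\optprobref these are exactly the constraints $\beta_i,\gamma_i<\alpha_i$, and the requirement that the three triangles around $r$ close up is the wheel condition $\prod_i\sin\beta_i=\prod_i\sin\gamma_i$.

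For the main case I would use the symmetric assignment $\beta_i=\gamma_i$, which satisfies the wheel condition automatically. Then $\alpha_i+2\beta_i=180\dg$, the star constraints reduce to each central angle exceeding $60\dg$, and the interior-angle bounds $\psi_i<\varphi_i$ translate into an upper bound of the form $\alpha_i<2\varphi_k$ for the central angle opposite $v_k$. Choosing $\alpha_0,\alpha_1,\alpha_2$ with $\sum_i\alpha_i=360\dg$ inside the boxes $(60\dg,2\varphi_k)$ is feasible because the lower bounds sum to $180\dg<360\dg$ while the upper bounds sum to $2\sum_i\varphi_i>360\dg$; a small $\eps>0$ then survives by openness. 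This settles every instance in which all $\varphi_i>30\dg$.

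The main obstacle is precisely the unbalanced regime where some $\varphi_i\le30\dg$: there the box $(60\dg,2\varphi_i)$ is empty, the symmetric construction fails, and one is forced into a thin triangle where the wheel condition genuinely binds. To handle it I would give an explicit asymmetric construction, placing $r$ near the midpoint of the longest edge $v_1v_2$ (opposite the large angle $\psi_0$): there the two short-edge disks and the long edge's wide lens overlap in an interior region, and one checks directly that all three central angles exceed $60\dg$. Alternatively, and more uniformly, I would close the argument through Observation~\ref{lem:w-zero-on-simplex}: on the simplex $S_\tau$ one exhibits two interior points at which $\omega=\prod_i\sin\beta_i-\prod_i\sin\gamma_i$ takes opposite signs (using the orientation-reversing symmetry that swaps the $\beta$- and $\gamma$-products), so that continuity yields an interior zero and hence membership in $\gP{3}$. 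Guaranteeing that this zero is \emph{interior} to $S_\tau$, rather than trapped on its boundary, is the delicate point I expect to require the most care.
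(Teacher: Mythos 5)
Your reduction to \optprobref via Lemma~\ref{lem:Pn-Opt-equiv} and your translation of greediness into $\beta_i,\gamma_i<\alpha_i$ are fine, but your ``main case'' does not cover what you claim it covers. The symmetric assignment $\beta_i=\gamma_i$ places $r$ at the circumcenter of the triangle, and it forces $\beta_j=90\dg-\alpha_j/2$, so you need every central angle $\alpha_j<180\dg$, i.e.\ an \emph{acute} admissible triangle. Your feasibility count (lower bounds summing to $180\dg$, upper bounds $2\sum_i\varphi_i>360\dg$) ignores the cap $\alpha_j<180\dg$, and the threshold ``all $\varphi_i>30\dg$'' is not the right condition: for $(\varphi_0,\varphi_1,\varphi_2)=(180\dg,35\dg,35\dg)$ every admissible triangle has $\psi_0>110\dg$, the circumcenter lies outside, and your boxes force $\alpha_2>220\dg$, i.e.\ $\beta_2<0$. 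So the unbalanced regime you defer to an ``asymmetric construction'' or to a sign-change argument on $S_\tau$ is strictly larger than you state, and neither fallback is carried out --- you yourself flag that keeping the zero of $\omega$ in the interior of the simplex is unresolved. As written, the proof is incomplete.

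The fix in the paper is a one-line change of your Ansatz: instead of $\beta_i=\gamma_i$ (circumcenter), set $\beta_i=\gamma_{i+1}=\psi_i/2$ for any $0<\psi_i<\varphi_i$ with $\sum_i\psi_i=180\dg$, which places $r$ at the \emph{incenter}. The wheel condition still holds for free, because $(\beta_0,\beta_1,\beta_2)$ and $(\gamma_0,\gamma_1,\gamma_2)$ are the same multiset $\{\psi_i/2\}$ up to a cyclic shift; and since $\beta_i+\gamma_i=(\psi_i+\psi_{i-1})/2<90\dg$, every $\alpha_i>90\dg$ strictly dominates $\beta_i$ and $\gamma_i$. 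The incenter exists in the interior of every triangle, so no case distinction and no topological argument on $S_\tau$ are needed.
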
  
\begin{proof}
 It is possible to choose~$0 < \psi_i < \varphi_i$, $i=0,\ldots,2$,
 such that~$\sum_{i=0}^2 \psi_i = 180\dg$.
 In~Problem~\optprobref, we set
 $\beta_i = \gamma_{i+1} = \frac{\psi_i}{2}$; see Fig.~\ref{fig:lemP3}.
 It is $\beta_i + \gamma_i = \frac{1}{2}(\psi_i + \psi_{i-1}) < 90\dg < \alpha_i$.
 Therefore, this angle assignment
 satisfies the constraints in Problem~\optprobref for some positive~$\eps$.
\end{proof}
\begin{lemma}
 For~$n = 4,5$ and angles~$\varphi_0, \ldots, \varphi_{n-1} \leq 120\dg$, $\sum_{i=0}^{n-1} \varphi_i > 180\dg (n-2)$, it holds: $ (\varphi_0, \ldots, \varphi_{n-1}) \in \gP{n}.$
 \label{lem:P5-120}
\end{lemma}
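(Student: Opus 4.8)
The plan is to exhibit an explicit feasible point of the optimization problem~\optprobref with $\eps>0$; by Lemma~\ref{lem:Pn-Opt-equiv} this immediately yields $\{\varphi_0,\ldots,\varphi_{n-1}\}\in\gP{n}$. The guiding idea is to impose the \emph{symmetric} assignment $\beta_i=\gamma_{i+1}$ (indices mod $n$), i.e.\ $\gamma_i=\beta_{i-1}$. This single choice makes the wheel condition $\prod_i\sin\beta_i=\prod_i\sin\gamma_i$ hold automatically, since $\{\gamma_i\}$ is then just a cyclic relabelling of $\{\beta_i\}$. This is exactly the trick already used for $n=3$ in Lemma~\ref{lem:greedy-K13}, and it removes the only non-linear constraint from consideration, leaving just the linear (in)equalities of~\optprobref to be satisfied.

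Under $\gamma_i=\beta_{i-1}$ I would set $\alpha_i=180\dg-\beta_i-\beta_{i-1}$, so that $\alpha_i+\beta_i+\gamma_i=180\dg$ holds by construction. The angle-sum condition $\sum_i\alpha_i=360\dg$ then reduces to the single linear equation $\sum_i\beta_i=(n-2)90\dg$, while the interior-angle constraint becomes $\beta_i+\gamma_{i+1}=2\beta_i<\varphi_i$, i.e.\ $\beta_i<\varphi_i/2$. The remaining greedy constraints $\beta_i<\alpha_i$ and $\gamma_i<\alpha_i$ translate into $2\beta_i+\beta_{i-1}<180\dg$ and $\beta_i+2\beta_{i-1}<180\dg$; since the hypothesis $\varphi_i\le120\dg$ forces every $\beta_i<60\dg$, these hold automatically (note $2\cdot 60\dg+60\dg=180\dg$). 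Thus the whole task collapses to finding nonnegative $\beta_i<\varphi_i/2$ with $\sum_i\beta_i=(n-2)90\dg$.

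The key quantitative input is the hypothesis $\sum_i\varphi_i>(n-2)180\dg$, equivalently $\sum_i\varphi_i/2>(n-2)90\dg$. I would then take the uniform rescaling $\beta_i=t\,\varphi_i/2$ with $t=(n-2)180\dg/\sum_j\varphi_j$; the hypothesis gives $0<t<1$, so $\sum_i\beta_i=t\sum_i\varphi_i/2=(n-2)90\dg$ as required, while $\beta_i=t\varphi_i/2<\varphi_i/2\le60\dg$ makes every constraint above strict. Because all defining inequalities of~\optprobref are then strict and the wheel condition holds exactly, one can pick $\eps>0$ small enough, which concludes the argument through Lemma~\ref{lem:Pn-Opt-equiv}.

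I expect the only genuine subtlety — rather than obstacle — to be the bookkeeping that checks the strict inequalities together with the $\eps>0$ slack: one must confirm $\alpha_i\in(0\dg,180\dg)$, verify that the two greedy inequalities really are implied by $\beta_i<60\dg$, and ensure the degenerate case $\beta_i=0$ (where $\sin\beta_i=0$ would trivialise the wheel condition) does not occur, which it cannot here since each $\varphi_i>0\dg$ and $t>0$. Notably, no separate construction of an actual convex polygon is needed, as that equivalence is already packaged in Lemma~\ref{lem:Pn-Opt-equiv}.
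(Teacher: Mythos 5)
Your proposal is correct and matches the paper's proof in essence: the paper likewise sets $\beta_i=\gamma_{i+1}=\psi_i/2$ for angles $\psi_i<\varphi_i$ summing to $(n-2)180\dg$, observes that all these angles are below $60\dg$, and notes that the cyclic relabelling makes the wheel condition hold automatically. Your explicit uniform rescaling $\psi_i=t\varphi_i$ is just a concrete instance of the paper's ``choose $\psi_i<\varphi_i$'' step.
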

\begin{proof}
 It is possible to choose~$\psi_i > 0$, $i=0, \ldots, n-1$ such that
 $\psi_i < \varphi_i$, $\sum_{i=0}^{n-1} \psi_i = 180\dg (n-2)$.
 Again, we set
 $\beta_i = \gamma_{i+1} = \frac{\psi_i}{2}$ in~Problem~\optprobref.
 All these angles are less than~$60\dg$, and all constraints in~\optprobref are satisfied.
 Then, the corresponding drawing in Fig.~\ref{fig:opt-problem} provides a solution.
\end{proof} 
\begin{figure}[tb]
 \hfill
 \subfloat[]{
  \includegraphics[page=4, scale=1.0]{fig/full/opt-problem.pdf}
  \label{fig:lemP3}
 }
  \hfill
  \subfloat[]{
  \includegraphics[page=5, scale=1.0]{fig/full/opt-problem.pdf}
  \label{fig:lemP4:1}
 }
 \hfill
  \subfloat[]{
  \includegraphics[page=3, scale=1.0]{fig/full/opt-problem.pdf}
  \label{fig:lemP4:2}
 }
  \hfill\null
 \caption{
  \protect\subref{fig:lemP3} A solution to~\gP{3}.
  \protect\subref{fig:lemP4:1},
  \protect\subref{fig:lemP4:2}: solutions of~\gP{4} for~$\varphi_0 = 180\dg$.
 }
\end{figure}
\begin{lemma}
Let~$\varphi_0$, \ldots, $\varphi_3$, $\varphi_i \geq \varphi_{i+1}>0\dg$,
 and for each~$i=0,\ldots,3$, $\varphi_i \notin (120\dg, 180\dg)$ and~$\varphi_i \notin (60\dg,90\dg]$.
 If~$\sum_{i=0}^3 \varphi_i > 360\dg$, then~$\{ \varphi_0, \ldots, \varphi_3 \} \in \gP{4}$.
 \label{lem:greedy-K14}
\end{lemma}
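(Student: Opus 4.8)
The plan is to route the statement through Lemma~\ref{lem:Pn-Opt-equiv} and Observation~\ref{lem:w-zero-on-simplex}, after peeling off the easy regime. Since the $\varphi_i$ are sorted and lie in $(0\dg,60\dg]\cup(90\dg,120\dg]\cup\{180\dg\}$, the only admissible value exceeding $120\dg$ is $180\dg$; let $k$ be the number of indices with $\varphi_i=180\dg$, so $\varphi_0=\dots=\varphi_{k-1}=180\dg$ and $\varphi_k,\dots,\varphi_3\le120\dg$. If $k=0$, all four angles are at most $120\dg$ and $\sum_i\varphi_i>360\dg=(4-2)180\dg$, so the claim is exactly Lemma~\ref{lem:P5-120}. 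Hence I may assume $k\ge1$, i.e.\ $\varphi_0=180\dg$; a first useful consequence is $\varphi_1+\varphi_2+\varphi_3>360\dg-180\dg=180\dg$, which will guarantee that the relevant constraint region is nonempty.

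For $k\ge1$ I would work directly with the simplex $S_\tau$ of Observation~\ref{lem:w-zero-on-simplex}, where $n=4$ and $(n-2)180\dg=360\dg$: it suffices to exhibit a permutation $\tau$ for which $\omega$ has a zero in the relative interior of $S_\tau$. First I would fix a cyclic order and check $\mathrm{int}\,S_\tau\ne\emptyset$. Summing the four coupling inequalities $\beta_i+\gamma_{i+1}\le\varphi_{\tau(i)}$ gives $\sum_i(\beta_i+\gamma_i)\le\sum_i\varphi_i$, so the equality $\sum_i(\beta_i+\gamma_i)=360\dg$ is strictly compatible with the coupling bounds precisely because $\sum_i\varphi_i>360\dg$; together with $\varphi_0=180\dg$ rendering its own coupling inactive, this produces a strictly feasible point. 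Whenever two or more $\varphi_i$ coincide---in particular whenever $k\ge2$, using the repeated copies of $180\dg$---I would place the equal values at a reflection-symmetric pair of corners, so that the orientation-reversing involution $\beta_i\leftrightarrow\gamma_{\sigma(i)}$ maps $S_\tau$ onto itself. Then $\omega$ is antisymmetric under this involution, and its fixed locus (where the two $\sin$-products agree) meets $\mathrm{int}\,S_\tau$, giving the desired zero; equivalently, this is the explicit reflection-symmetric drawing with $r$ on the axis, which satisfies the wheel condition automatically.

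The remaining and hardest case is $k=1$ with $\varphi_1,\varphi_2,\varphi_3$ pairwise distinct, where no palindromic cyclic order exists and the wheel condition is genuinely nonlinear. Here I would show that $\omega$ changes sign on $S_\tau$ and then appeal to continuity. The key observation is that on the relative interior $\partial\omega/\partial\beta_i>0$ and $\partial\omega/\partial\gamma_i<0$ (each base angle lies in $(0\dg,90\dg)$, since $2\beta_i+\gamma_i\le180\dg$ with $\gamma_i>0$ forces $\beta_i<90\dg$, and symmetrically for $\gamma_i$), so $\omega$ is strictly increasing along the sum-preserving direction $d=(\mathbf1_\beta,-\mathbf1_\gamma)$. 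Pushing a feasible interior point along $+d$ until some $\gamma_j$ vanishes reaches a face where $\prod_i\sin\gamma_i=0$ and hence $\omega=\prod_i\sin\beta_i>0$, while pushing along $-d$ until some $\beta_j$ vanishes reaches a face with $\omega=-\prod_i\sin\gamma_i<0$; approximating these boundary points from inside gives interior points of both signs, and the intermediate value theorem on the segment joining them (which lies in the convex relative interior) yields an interior zero, whence $\{\varphi_0,\dots,\varphi_3\}\in\gP4$ by Observation~\ref{lem:w-zero-on-simplex}. The main obstacle I expect lies exactly in this last step: verifying that the sign-changing coordinate faces are nonempty and forcing the zero into the relative interior rather than onto a coupling boundary, which is where the nonlinearity of the wheel condition must be controlled. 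The delicate subcase is when $\varphi_1+\varphi_2+\varphi_3$ only slightly exceeds $180\dg$, so that $S_\tau$ is very thin and the feasible base angles are nearly pinned; there the positivity of $\max_{S_\tau}\omega$ and negativity of $\min_{S_\tau}\omega$ need a careful direct estimate rather than the clean symmetry argument available for $k\ge2$.
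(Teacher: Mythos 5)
Your high-level plan---peel off the all-$\le 120\dg$ case via Lemma~\ref{lem:P5-120}, and otherwise produce a sign change of $\omega$ on $S_\tau$ and invoke the intermediate value theorem---is the same strategy the paper uses, and your symmetric treatment of repeated $180\dg$'s matches in spirit the paper's explicit symmetric construction for the subcase $\varphi_2>90\dg$. But the case you yourself identify as hardest contains a genuine gap, and it is exactly the step where the paper has to do explicit work. Pushing an interior point of $S_\tau$ along $d=(\mathbf 1_\beta,-\mathbf 1_\gamma)$ does preserve the coupling sums $\beta_i+\gamma_{i+1}$ and the total sum, but it increases every $2\beta_i+\gamma_i$ by $t$ (and along $-d$ increases every $\beta_i+2\gamma_i$ by $t$), so the segment can exit $S_\tau$ through a face $\alpha_j=\beta_j$ (resp.\ $\alpha_j=\gamma_j$) before any $\gamma_j$ (resp.\ $\beta_j$) vanishes. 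On such a face neither sine product is zero and $\omega$ has no determined sign, so the endpoints with $\omega>0$ and $\omega<0$ that your IVT needs are not guaranteed to exist. This is precisely the ``delicate subcase'' you flag (total angle barely above $360\dg$, simplex thin, $\alpha$-constraints nearly active), and as written the argument does not close it. The $k\ge 2$ symmetry argument also leaves unverified that the fixed locus of the involution actually meets the interior of $S_\tau$, though that is a smaller omission.

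For comparison, the paper first disposes of two subcases you fold into the hard case: if $\varphi_3>90\dg$ then all four angles exceed $90\dg$ and a square solves $\gP{4}$ outright, and if $\varphi_2>90\dg\ge\varphi_3$ an explicit symmetric figure works. The only genuinely nonlinear case is then $\varphi_0=180\dg$, $\varphi_1>90\dg$, $\varphi_2,\varphi_3\le 60\dg$, and there the paper makes your idea go through by fixing all apex angles to $\alpha_i=90\dg$ and writing down an explicit one-parameter family $x\in(0,\Delta)$ of feasible assignments in which every base angle stays in $(0\dg,90\dg)$ and exactly one sine of each product vanishes at the two endpoints, giving $f(0)>0$ and $f(\Delta)<0$ so the intermediate value theorem applies on a segment that provably stays feasible. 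Your proof would be complete if you supplied such an explicit segment, or otherwise controlled through which face the flow along $\pm d$ exits the simplex.
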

\begin{proof}
If~$\varphi_0 < 180\dg$,
then~$\varphi_0 \leq 120\dg$,
and the statement holds by Lemma~\ref{lem:P5-120}.
Let~$\varphi_0 = 180\dg$.
If~$\varphi_3 > 90\dg$, then a square is a solution of~$\gP{4}$.
Hence, let~$\varphi_3 \leq 60\dg$.
Then, $\varphi_1 + \varphi_2 > 120\dg$, and~$\varphi_1 > 90\dg$.
\par
If~$\varphi_2 > 90\dg$, then the construction in Fig.~\ref{fig:lemP4:1}
provides a solution.
Let now~$\varphi_2 \leq 60\dg$.
Since it is~$\sum_{i=1}^3 \varphi_i > 180\dg$,
there exist~$0\dg < \psi_i < \varphi_i$, $i=1,\ldots,3$,
and~$0\dg < \Delta < \psi_3$,
 such that~$\sum_{i=1}^3 \psi_i = 180\dg + \Delta$.
 Consider the angle assignment in Fig.~\ref{fig:lemP4:2}.
 For~$x \in (0, \Delta)$, all angles~$\beta_j$, $\gamma_j$ in Problem~\optprobref
 are in~$(0\dg, 90\dg)$, and all $\alpha_i$ are $90\dg$.
 Consider the function
 \begin{align*}
  f(x) = & \sin(\psi_2 - x) \sin(90\dg - x) \sin(\Delta - x) \sin(90\dg - \psi_3 + \Delta - x) \\
  	   - & \sin(x) \sin(90\dg - \Delta + x) \sin(\psi_3 - \Delta + x) \sin(90\dg - \psi_2 + x).
 \end{align*}
 It is~$f(0) > 0$ and~$f(\Delta) < 0$.
 Hence, for some~$x \in (0,\Delta)$ it is~$f(x)=0$.
 For this value of~$x$, the angle assignment provides a solution of~$\gP{4}$.
\end{proof}
\begin{lemma}
For each~$\varphi_4 > 0\dg$, $\{ 180\dg, 180\dg, 180\dg, 180\dg, \varphi_4\} \in \gP{5}$.
 \label{lem:4x180}
\end{lemma}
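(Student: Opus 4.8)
The plan is to certify $\{180\dg,180\dg,180\dg,180\dg,\varphi_4\}\in\gP{5}$ by exhibiting, for every $\varphi_4>0$, an explicit solution of~\optprobref with $\eps>0$ for the cyclic ordering $(180\dg,180\dg,180\dg,180\dg,\varphi_4)$; by Lemma~\ref{lem:Pn-Opt-equiv} this is exactly what is needed. Placing the small angle $\varphi_4$ at the ``sharp'' corner $v_4$, the only interior-angle constraint $\beta_i+\gamma_{i+1}+\eps\le\varphi_i$ that actually bites is the one at $v_4$, namely $\beta_4+\gamma_0+\eps\le\varphi_4$; the four remaining constraints merely demand interior angles below $180\dg$, which convexity supplies for free.

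The key idea is to search only among configurations that are mirror-symmetric about the axis through $r$ and the sharp vertex $v_4$. Such a reflection swaps the triangle $T_i=r\,v_{i-1}v_i$ with $T_{4-i}=r\,v_{3-i}v_{4-i}$ (indices mod $5$), and matching the angles at corresponding vertices forces $\beta_i=\gamma_{4-i}$ for all $i$, i.e.\ $\beta_4=\gamma_0$, $\beta_0=\gamma_4$, $\beta_1=\gamma_3$, $\beta_3=\gamma_1$, and $\beta_2=\gamma_2$. Hence $(\beta_0,\dots,\beta_4)$ and $(\gamma_0,\dots,\gamma_4)$ are the \emph{same multiset}, so $\prod_i\sin\beta_i=\prod_i\sin\gamma_i$ holds identically. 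This disposes of the sole nonlinear constraint of~\optprobref---the wheel condition---and reduces the problem to choosing symmetric angle triples obeying the remaining \emph{linear} constraints $\beta_i+\eps\le\alpha_i$, $\gamma_i+\eps\le\alpha_i$, $\alpha_i=180\dg-\beta_i-\gamma_i$, $\sum_i\alpha_i=360\dg$, together with $2\beta_4+\eps\le\varphi_4$ at $v_4$.

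I would then write down such a triple explicitly. Setting $\beta_4=\gamma_0=t$ with $t<\varphi_4/2$, the greedy inequality $\gamma_4<\alpha_4=180\dg-t-\gamma_4$ forces $\gamma_4<90\dg-t/2$; I take $\gamma_4=\beta_0=90\dg-t/2-s$, so $\alpha_0=\alpha_4=90\dg-t/2+s$. Because $\beta_i<\alpha_i$ and $\gamma_i<\alpha_i$ together imply $\alpha_i>60\dg$ in every triangle, and $\alpha_0,\alpha_4$ must sit just below $90\dg$, the delicate point is whether the residual budget $2\alpha_1+\alpha_2=360\dg-2\alpha_4$, which only slightly exceeds $180\dg$, can be split into three angles each above $60\dg$. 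It can: choosing $\beta_1=\gamma_1=\beta_3=\gamma_3=60\dg-t/4+(s+u)/2$ and $\beta_2=\gamma_2=60\dg-u$ (so $\alpha_2=60\dg+2u$), a direct check gives $\sum_i\alpha_i=360\dg$ exactly, collapses every greedy inequality to the single slack condition $s+u<t/2$, keeps $2\beta_4=2t<\varphi_4$, and leaves all interior angles at $v_0,\dots,v_3$ well below $180\dg$.

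I expect the main obstacle to be precisely this last feasibility check: one must ensure that $\alpha_0=\alpha_4$ stays \emph{strictly below} $90\dg$ rather than being pushed above it (which would make $2\alpha_1+\alpha_2<180\dg$ and render each $\alpha_i>60\dg$ impossible), and that this remains compatible with $2\beta_4<\varphi_4$ as $\varphi_4\to0$. Any $t<\varphi_4/2$ with $s,u>0$ and $s+u<t/2$ settles this for all $\varphi_4>0$; taking $\eps$ equal to the minimum slack of the resulting strict inequalities completes the certificate, and Lemma~\ref{lem:Pn-Opt-equiv} yields $\{180\dg,180\dg,180\dg,180\dg,\varphi_4\}\in\gP{5}$.
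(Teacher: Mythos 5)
Your proposal is correct and takes essentially the same route as the paper: both certify the statement by writing down an explicit mirror-symmetric solution of~\optprobref, so that $\{\beta_i\}$ and $\{\gamma_i\}$ coincide as multisets, the wheel condition holds identically, and only linear constraints remain (the paper uses the one-parameter family $\beta_0=\gamma_1=8\delta$, $\gamma_0=\beta_1=90\dg-5\delta$, $\beta_i=\gamma_i=60\dg-\delta$ for $i=2,\ldots,4$ with $0\dg<16\delta<\min\{\varphi_4,60\dg\}$). One small correction: your closing claim that \emph{any} $t<\varphi_4/2$ works is too strong when $\varphi_4$ is large, since $\gamma_0=t<\alpha_0=90\dg-\tfrac{t}{2}+s$ additionally requires $t<60\dg+\tfrac{2s}{3}$; capping $t<\min\{\varphi_4/2,\,60\dg\}$ (mirroring the paper's bound on $\delta$) repairs this without affecting the hard regime of small~$\varphi_4$.
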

\begin{proof}
 Let~$0\dg < 16\delta < \min\{ \varphi_4, 60\dg \}$. Following angle assignment solves~\optprobref:
 \begin{align*}
  & \beta_0 = \gamma_1 = 8\delta, \;\; \gamma_0 = \beta_1 = 90\dg - 5\delta, \;\; \alpha_0 = \alpha_1 = 90\dg - 3\delta, \\
  & \beta_i = \gamma_i = 60\dg - \delta, \;\; \alpha_i = 60 + 2\delta, \;\; i = 2,\ldots,4. 
 \end{align*}
\end{proof}
\begin{figure}[b]
 \centering
 \hfill{}
 \subfloat[]{\includegraphics[page=1]{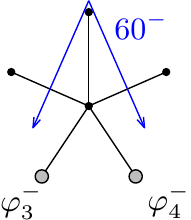} \label{fig:deg5-no2x60:1}} \hfill
 \subfloat[]{\includegraphics[page=2]{fig/full/deg5-no2x60.pdf} \label{fig:deg5-no2x60:2}}
 \hfill\null
 \caption{Proof of Lemma~\ref{lem:P5-3x180}. Both orderings are not possible if $\varphi_3 + \varphi_4 \leq 120\dg$.}
\end{figure}
\begin{lemma}
For~$\varphi_0 = \varphi_1 = \varphi_2 = 180\dg$, $\varphi_3, \varphi_4 \leq 120\dg$,
it is 
$\{ \varphi_0, \ldots, \varphi_4 \} \in \gP{5}$ if and only if $\varphi_3 + \varphi_4 > 120\dg$.
 \label{lem:P5-3x180}
\end{lemma}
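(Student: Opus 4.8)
The plan is to reduce the statement to the optimization problem \optprobref via Lemma~\ref{lem:Pn-Opt-equiv}: it suffices to decide whether \optprobref has a solution with $\eps>0$ for some cyclic ordering of $(180\dg,180\dg,180\dg,\varphi_3,\varphi_4)$. Writing $T_i=rv_{i-1}v_i$ for the triangles around $r$, with apex angle $\alpha_i$ at $r$ and base angles $\beta_i$ (at $v_i$) and $\gamma_i$ (at $v_{i-1}$), I will use $\alpha_i+\beta_i+\gamma_i=180\dg$, $\sum_i\alpha_i=360\dg$, the greedy inequalities $\beta_i,\gamma_i<\alpha_i$, and the vertex constraints $\beta_i+\gamma_{i+1}<\varphi_{\tau(i)}$. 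Since the three $180\dg$-leaves are interchangeable and $\varphi_3,\varphi_4$ enter only through their sum, up to reflection there are exactly two cyclic orderings to analyze: the two tight leaves are \emph{adjacent}, or \emph{separated} by exactly one $180\dg$-leaf, corresponding to Fig.~\ref{fig:deg5-no2x60:1} and Fig.~\ref{fig:deg5-no2x60:2}.

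For the ``only if'' direction I argue the contrapositive: assuming $\varphi_3+\varphi_4\le120\dg$, I derive $\sum_i\alpha_i>360\dg$ in both orderings, a contradiction. The only tool needed is that in every triangle the apex is the largest angle: from $\beta_i,\gamma_i<\alpha_i$ and $\alpha_i=180\dg-\beta_i-\gamma_i$ one gets $\alpha_i>90\dg-x/2$ whenever $x$ is a base angle of $T_i$ (apply $y<\alpha_i$ to the other base angle $y$), and $\alpha_i>60\dg$ always. In the separated ordering each of the two tight vertices contributes its small base angle to a \emph{different} triangle, so summing $\alpha_i>90\dg-x/2$ over the four triangles that carry a tight base, together with $\alpha_4>60\dg$, gives $\sum_i\alpha_i>420\dg-\tfrac12(\beta_0+\gamma_1+\beta_2+\gamma_3)>420\dg-\tfrac12(\varphi_3+\varphi_4)\ge360\dg$. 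In the adjacent ordering the triangle $T_1$ between the two tight leaves carries two tight base angles; here I instead use $\alpha_1=180\dg-\beta_1-\gamma_1>180\dg-\varphi_3-\varphi_4+\beta_0+\gamma_2\ge60\dg+\beta_0+\gamma_2$ together with $\alpha_0>90\dg-\beta_0/2$, $\alpha_2>90\dg-\gamma_2/2$ and $\alpha_3,\alpha_4>60\dg$, whose sum again exceeds $360\dg$. As no ordering admits a solution, $\{180\dg,180\dg,180\dg,\varphi_3,\varphi_4\}\notin\gP{5}$. Notably, this count never uses the wheel condition.

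For the ``if'' direction I assume $\varphi_3+\varphi_4>120\dg$ and construct a solution of \optprobref in the adjacent ordering. I set the two tight vertex angles just below $\varphi_3$ and $\varphi_4$, make $T_1$ nearly degenerate with $\beta_1,\gamma_1$ close to $60\dg$ (so $\alpha_1$ just above $60\dg$), and choose the remaining apex angles near the greedy-tightness limits $\alpha_0\approx90\dg-\beta_0/2$, $\alpha_2\approx90\dg-\gamma_2/2$; the surplus $\varphi_3+\varphi_4-120\dg>0$ is exactly what makes $\sum_i\alpha_i=360\dg$ attainable with all greedy inequalities strict (the previous count run in reverse). It then remains to enforce the wheel condition $\prod_i\sin\beta_i=\prod_i\sin\gamma_i$. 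When $\varphi_3=\varphi_4$ I take the assignment symmetric under the reflection swapping the two tight leaves; then $\{\beta_i\}$ and $\{\gamma_i\}$ coincide as multisets and the wheel condition holds automatically. In general I leave one scalar parameter $t$ free in this family, keep all linear constraints satisfied for $t$ in a nonempty interval, and show that $\omega(t)=\prod_i\sin\beta_i(t)-\prod_i\sin\gamma_i(t)$ changes sign on that interval; the intermediate value theorem then yields a $t$ with $\omega(t)=0$, exactly as in the proof of Lemma~\ref{lem:greedy-K14}. This produces a solution with $\eps>0$, so $\{180\dg,180\dg,180\dg,\varphi_3,\varphi_4\}\in\gP{5}$.

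The delicate point is the sufficiency direction, and specifically the non-linear wheel condition: the linear constraints alone are easy to meet, but forcing $\prod_i\sin\beta_i=\prod_i\sin\gamma_i$ inside the open feasible region is where the surplus $\varphi_3+\varphi_4-120\dg$ must be spent. I expect the main work to be verifying the sign change of $\omega(t)$ (equivalently, that $\omega$ takes both signs in the interior of the simplex of Observation~\ref{lem:w-zero-on-simplex}) uniformly over all admissible $(\varphi_3,\varphi_4)$, including the strongly asymmetric case $\varphi_4\le60\dg<\varphi_3$ where the reflection symmetry is unavailable. The necessity direction, by contrast, reduces to the short angle-sum count above.
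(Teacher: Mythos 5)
Your necessity argument is correct and complete. The two cyclic orderings (tight leaves adjacent or separated by one path) are the right case split, and the elementary bounds $\alpha_i>90\dg-\beta_i/2$, $\alpha_i>90\dg-\gamma_i/2$, $\alpha_i>60\dg$ really do force $\sum_i\alpha_i>360\dg$ in both cases when $\varphi_3+\varphi_4\le120\dg$ (I checked both sums). This is a linear-infeasibility version of the paper's geometric argument about apices and opening angles, and it is arguably more self-contained since it never leaves Problem~\optprobref.

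The sufficiency direction, however, has a genuine gap, and it sits exactly where you flagged it. First, your construction lives in the \emph{adjacent} ordering and asks for $\beta_1,\gamma_1$ close to $60\dg$; but the vertex constraint gives $\beta_1<\varphi_4-\gamma_2\le\varphi_4$, so for $\varphi_4\le60\dg$ (e.g.\ $\varphi_3=120\dg$, $\varphi_4=30\dg$, which satisfies $\varphi_3+\varphi_4>120\dg$) the angle $\beta_1$ cannot approach $60\dg$ and the family you describe does not exist. Second, and more importantly, the wheel condition is never actually enforced: the one-parameter family is not written down and the sign change of $\omega$ is not verified. In the adjacent ordering with $\varphi_4$ small, $\beta_1+\gamma_2<\varphi_4$ forces one factor of \emph{each} product to be small in a coupled way, and it is far from clear that $\prod_i\sin\gamma_i-\prod_i\sin\beta_i$ takes both signs on the feasible region; the adjacent ordering may simply fail for strongly asymmetric pairs. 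The paper sidesteps both issues by working in the \emph{separated} ordering and writing down an explicit assignment ($\beta_1=\gamma_2=\psi_3/2$, $\beta_3=\gamma_4=\psi_4/2$, $\beta_2=\gamma_1$, $\beta_4=\gamma_3$, $\beta_0=\gamma_0$, with a small slack $\delta$) in which the multiset $\{\beta_i\}$ equals the multiset $\{\gamma_i\}$, so the wheel condition holds identically and no intermediate-value argument is needed; the surplus $\psi_3+\psi_4-120\dg-8\delta>0$ is spent only on making $\alpha_0>60\dg$. To repair your proof you should switch to the separated ordering with such a palindromic assignment, or else actually exhibit the family and the sign change in the adjacent ordering, which I do not believe can be done uniformly.
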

\begin{proof}
 First, let~$\varphi_3 + \varphi_4 \leq 120\dg$.
 Consider the opening angles in the two embedding options in Fig.~\ref{fig:deg5-no2x60:1} and~\ref{fig:deg5-no2x60:2}. 
In the first case, angles with strict upper bounds~$60\dg$, $\varphi_3$ and~$\varphi_4$ must pairwise contain apices of each other.
In the second case, angles with strict upper bounds~$120\dg$, $\frac{\varphi_3}{2}$ and~$\frac{\varphi_4}{2}$ must form a triangle.
In both cases, the sum of the three angles is below~$180\dg$, a contradiction. \par
 Now, let~$\varphi_3 + \varphi_4 > 120\dg$, $\varphi_3, \varphi_4 \leq 120\dg$. 
 There exist~$\psi_i < \varphi_i$, $i=3,4$, and a sufficiently small~$\delta > 0$,
  such that $\psi_3 + \psi_4 - 8 \delta > 120\dg$.
 Then, following assignment satisfies~\optprobref:
 \begin{align*}
  &\beta_1 = \gamma_2 = \frac{\psi_3}{2} < 60\dg, \; \beta_3 = \gamma_4 = \frac{\psi_4}{2} < 60\dg\\
  &\beta_2 = \gamma_1 = 90\dg - \frac{\psi_3}{4}-\delta, \; \alpha_1 = \alpha_2 = 90\dg - \frac{\psi_3}{4}+\delta \\
  &\beta_4 = \gamma_3 = 90\dg - \frac{\psi_4}{4}-\delta, \; \alpha_3 = \alpha_4 = 90\dg - \frac{\psi_4}{4}+\delta \\
  &\beta_0 = \gamma_0 = 90\dg - \frac{\psi_3 + \psi_4}{4}+2\delta, \; \alpha_0 = \frac{\psi_3 + \psi_4}{2}-4\delta > 60\dg. \\
 \end{align*}
\end{proof}
\begin{figure}[t]
 \hfill
 \subfloat[]{ \includegraphics[page=1]{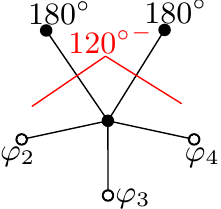} \label{fig:2x180:1} } \hfill
 \subfloat[]{ \includegraphics[page=2]{fig/full/lem-2x180.pdf} \label{fig:2x180:2} } \hfill
 \subfloat[]{ \includegraphics[page=3]{fig/full/lem-2x180.pdf} \label{fig:2x180:3} } \hfill\null
 \caption{Proof of Lemma~\ref{lem:2x180}: $\varphi_2 + \varphi_3 + \varphi_4 > 240\dg$ is necessary.
 }
 \label{fig:2x180:I}
\end{figure}
\begin{figure}[b]
 \hfill
 \subfloat{ \includegraphics[page=4]{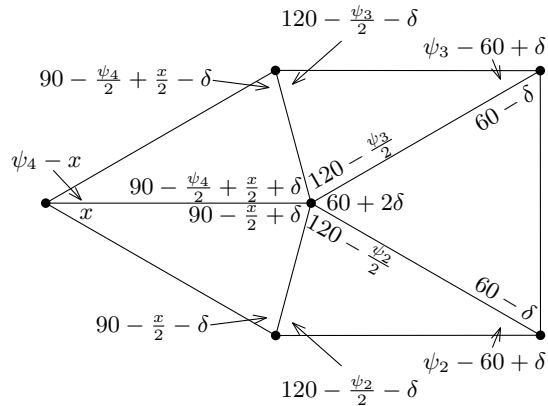}} \hfill\null
 \caption{Proof of Lemma~\ref{lem:2x180}: $\varphi_2 + \varphi_3 + \varphi_4 > 240\dg$ is sufficient.
 }
 \label{fig:2x180:4}
\end{figure}
\begin{lemma}
  Let~$\varphi_2, \varphi_3 \in (90\dg,120\dg]$ and~$\varphi_4 \in (0\dg,120\dg]$, $\varphi_2 \geq \varphi_3 \geq \varphi_4$.  Then, 
 \begin{align*}
 \{ 180\dg, 180\dg, \varphi_2, \varphi_3, \varphi_4\} \in \gP{5} \textnormal{ if and only if } 
  \varphi_2 + \varphi_3 + \varphi_4 > 240\dg. 
 \end{align*}
 \label{lem:2x180}
\end{lemma}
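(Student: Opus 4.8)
The plan is to pass to the equivalent optimization formulation of Lemma~\ref{lem:Pn-Opt-equiv}: a solution of~$\gP{5}$ is a convex pentagon $v_0\dots v_4$ with an interior point $r$, where each triangle $r v_i v_{i+1}$ has apex angle $\alpha_i$ at $r$ and base angles $\beta_i,\gamma_i$ with $\alpha_i+\beta_i+\gamma_i=180\dg$, subject to $\sum_i\alpha_i=360\dg$, the greedy inequalities $\beta_i,\gamma_i<\alpha_i$ (which force $\alpha_i>60\dg$), the corner bounds $\psi_i<\varphi_i$ where $\psi_i$ is the interior angle at $v_i$ (the sum of the two base angles of the two triangles incident to $v_i$), and the wheel condition $\prod_i\sin\beta_i=\prod_i\sin\gamma_i$. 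I treat the two $180\dg$'s as \emph{path} corners (their bound is vacuous) and $\varphi_2,\varphi_3,\varphi_4$ as \emph{cone} corners; up to rotation and reflection the cyclic type pattern is then either $PPCCC$ (paths adjacent) or $PCPCC$ (paths separated by one cone on one side and two on the other).

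For necessity I would count angles only, never invoking the wheel condition. Let $\Psi$ denote the sum of the three cone interior angles. Since $\sum_i\psi_i=540\dg$ and the two paths each contribute less than $180\dg$, one gets only $\Psi>180\dg$ from the pentagon alone; the greedy inequalities are what raise this to $240\dg$. Each $\psi_i$ splits into two base angles of its incident triangles, and a base angle feeding a path corner is strictly below the apex of its triangle. Bounding those base angles away, using $\sum_i\alpha_i=360\dg$ and the trivial estimate $x\le (x)_+$, gives in the $PPCCC$ case $\Psi\ge 2\alpha_{\mathrm{pp}}+\alpha'+\alpha''$, where $\alpha_{\mathrm{pp}}$ is the path--path apex and $\alpha',\alpha''$ the two cone--cone apexes, hence $\Psi\ge 2\cdot 60\dg+60\dg+60\dg=240\dg$; and in the $PCPCC$ case $\Psi\ge 180\dg+\alpha'$ with $\alpha'$ the single cone--cone apex, again $\Psi\ge240\dg$. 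Since $\psi_i<\varphi_i$, this yields $\varphi_2+\varphi_3+\varphi_4>\Psi\ge240\dg$, as claimed.

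For sufficiency I assume $\varphi_2+\varphi_3+\varphi_4>240\dg$ and construct a solution using the separated ordering $PCPCC$, in which (unlike $PPCCC$) no cone corner sits between two cone--cone triangles, so none of the three corner angles is forced close to $120\dg$. The strategy mirrors the explicit constructions of the preceding lemmas: pick apex angles $\alpha_i>60\dg$ with $\sum_i\alpha_i=360\dg$ and base angles realizing corner angles $\psi_i<\varphi_i$ --- feasible precisely because of the $240\dg$ slack --- and then satisfy the wheel condition for free by arranging the multisets $\{\beta_i\}$ and $\{\gamma_i\}$ to coincide through a pairing of the triangles, exactly as in the sufficiency part of Lemma~\ref{lem:P5-3x180}. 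Membership in $\gP{5}$ is all that is required, so no further reinsertion of subtrees is needed here.

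The main obstacle is sufficiency in the near-degenerate regime $\varphi_2+\varphi_3+\varphi_4\downarrow 240\dg$, where the feasible region of the greedy and corner constraints becomes thin and the purely symmetric choice $\beta_i=\gamma_i$ no longer fits (its sine-product balance is too rigid to meet all three bounds simultaneously). There I expect to need a short continuity argument in the spirit of Lemma~\ref{lem:greedy-K14}: set up a one-parameter family of admissible angle assignments respecting the greedy inequalities and the three bounds, and show that the wheel function $\prod_i\sin\beta_i-\prod_i\sin\gamma_i$ changes sign across the family, so that the intermediate-value theorem produces an interior zero and hence a genuine greedy drawing. Verifying that this family stays inside the admissible region for every permissible $(\varphi_2,\varphi_3,\varphi_4)$ with $\varphi_2,\varphi_3\in(90\dg,120\dg]$ is the delicate point.
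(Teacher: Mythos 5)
Your necessity argument is correct and in fact takes a cleaner route than the paper's. You work entirely inside the optimization formulation of Lemma~\ref{lem:Pn-Opt-equiv} and obtain the bound by pure angle counting: each base angle adjacent to a path corner is at least $180\dg-2\alpha$ of its triangle, and combining this with $\sum_i\alpha_i=360\dg$ and $\alpha_i>60\dg$ gives $\Psi>2\alpha_{\mathrm{pp}}+\alpha'+\alpha''>240\dg$ in the $PPCCC$ ordering and $\Psi>180\dg+\alpha'>240\dg$ in the $PCPCC$ ordering; I checked both computations and they are sound (and the bound $\psi<180\dg$ at a path corner is indeed vacuous, since $\beta_i,\gamma_i<90\dg$ follows from the greedy inequalities). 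The paper instead argues geometrically with opening angles and apices of combined subtrees (Figs.~\ref{fig:2x180:1}--\ref{fig:2x180:3}), splitting into the same two cyclic orderings; your version is a legitimate and arguably more uniform alternative.

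The sufficiency direction, however, has a genuine gap. You correctly identify the strategy the paper uses --- an explicit one-parameter family of admissible angle assignments on which the wheel function $\prod_i\sin\beta_i-\prod_i\sin\gamma_i$ changes sign, so that the intermediate value theorem yields an interior zero --- but you never produce the family, and you yourself flag its verification as ``the delicate point.'' That verification \emph{is} the proof: one must exhibit $\beta_i(x),\gamma_i(x)$ satisfying $\beta_i,\gamma_i<\alpha_i$, positivity, and the three corner bounds $\psi_i<\varphi_i$ throughout the parameter interval, for \emph{every} admissible triple, including the near-degenerate regimes $\varphi_2+\varphi_3+\varphi_4\downarrow 240\dg$ and $\varphi_4\downarrow 0\dg$ (where two of the triangles are forced to be nearly right-angled). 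Your fallback of making the multisets $\{\beta_i\}$ and $\{\gamma_i\}$ coincide so that the wheel condition holds ``for free'' is asserted rather than proved, and you concede it fails near the boundary. The paper resolves exactly this point by fixing $\psi_i<\varphi_i$ with $\psi_2+\psi_3+\psi_4-8\delta=240\dg$, writing down a concrete assignment parametrized by $x\in(0\dg,\psi_4)$, checking the linear constraints, and observing that $f(0\dg)<0<f(\psi_4)$ because at each endpoint exactly one factor of one of the two sine products vanishes while the other product stays positive. Without an explicit construction of this kind, your sufficiency argument remains a plan rather than a proof.
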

\begin{proof} 
 First, let~$\varphi_2 + \varphi_3 + \varphi_4 \leq 240\dg$. %
 Then, the embedding option in Fig.~\ref{fig:2x180:1} is not possible, since~${120\dg}^- + 240\dg < 360\dg$.
 Thus, the only possible embedding option is the one in Fig.~\ref{fig:2x180:2}
 for~$\{ \varphi_2, \varphi_3, \varphi_4 \} = \{ \alpha, \beta, \gamma\}$.
 Assume the solution of~$\gP{5}$ exists, and consider the corresponding construction in Fig.~\ref{fig:2x180:3}.
 It must hold
 \begin{align*}
  &\frac{\alpha}{2} + \frac{x + \beta}{2} + \frac{y+\gamma}{2} > 180\dg 
  \Rightarrow x + y > 120\dg \Rightarrow 180\dg - x - y < 60\dg,
 \end{align*}
 a contradiction to greediness. \par
 Now, let~$\varphi_2 + \varphi_3 + \varphi_4 > 240\dg$.
 Then, there exist $\psi_2, \psi_3, \psi_4$ and a sufficiently small~$\delta$ such that:
 \begin{align*}
  & \psi_i < \varphi_i, \; i=2,3,4; \\
  & \delta < \min \{ \psi_2, \psi_3, \psi_4, 30\dg \}; \\
  & \psi_j - 60\dg + \delta < 60\dg, \; j=2,3; \\
  & \psi_2 + \psi_3 + \psi_4 - 8\delta = 240\dg.
 \end{align*}
Let such $\psi_2, \psi_3, \psi_4, \delta$ be fixed.
For~$x \in (0\dg, \psi_4)$, consider the angle assignment shown in Fig.~\ref{fig:2x180:4}. It is
 $(90\dg - \frac{\psi_4}{2} + \frac{x}{2} + \delta) - (\psi_4 - x)
= 90\dg - \frac{3}{4} \psi_4 + \frac{3}{4}x + \delta > 0\dg$.
Further, it is
 $(90\dg - \frac{x}{2} + \delta) - x
= 90\dg - \frac{3}{4} x + \delta > 0\dg$.
Thus, it can be easily verified that this angle assignment satisfies all linear constraints in~\optprobref.
It remains to show the existence of~$x \in (0\dg, \psi_4)$ such that:
 \begin{align*}
 &f(x) := \sin(x) \cdot \sin(120\dg - \frac{\psi_2}{2} - \delta) \cdot \sin(\psi_3 - 60\dg + \delta) 
 \cdot \sin(90\dg - \frac{\psi_4}{2} + \frac{x}{2} - \delta) \\
 	&- \sin(\psi_4 - x) \cdot \sin(90\dg - \frac{x}{2} - \delta) \cdot \sin(\psi_2 - 60\dg + \delta)
 	\cdot \sin(120\dg - \frac{\psi_3}{2} - \delta) \stackrel{!}{=} 0.
 \end{align*}

 For~$x = 0\dg$, all angles except~$x$ are in $(0\dg,180\dg)$.
Similarly, for~$x = \psi_4$, all angles except~$\psi_4 - x$ are in $(0\dg,180\dg)$. Therefore, it is~$f(0\dg) < 0$ and~$f(\psi_4) > 0$. Thus, such~$x$ exists. 
\end{proof}
The last two cases for~$n=5$ are the only remaining ones to consider
(for~$\varphi_3 + \varphi_4 > 120\dg$, $\varphi_2 + \ldots + \varphi_4 > 240\dg$, 
$\varphi_1 + \ldots + \varphi_4 > 360\dg$).
In practice, it is possible to either strictly prove~$\{ \varphi_0, \ldots, \varphi_4 \} \notin \gP{5}$ 
or numerically construct a solution for many such sets of angles.
If we drop the last constraint in~\optprobref,
we acquire a linear program which has a constant number of variables and constraints
and can be solved in~$O(1)$. If it has no solution for any cyclic order of~$\varphi_i$, neither has~$\gP{5}$.
For example, this is the case for~$\{ {180\dg}, {105\dg}, {105\dg}, {105\dg}, 60\dg \}$.
If this linear program has a solution, we can try to solve~\optprobref
using nonlinear programming solvers. Using MATLAB we solved~$\gP{5}$ for the tree in Fig.~\ref{fig:deg45:P1}.
However, if the solver finds no solution, we obviously have no guarantee that none exists.
Lemma~\ref{lem:1x180-suff} presents a sufficient condition for the first of the two above cases.
We do not know whether it is also necessary, but interestingly, in our experiments, MATLAB found a solution %
exactly when it was satisfied.
\par
\begin{lemma}
 Consider angles~$0\dg \leq \varphi_4 \leq 60\dg$, $90\dg < \varphi_3 \leq \varphi_2 \leq \varphi_1 \leq 120\dg$,
 $\varphi_1 + \ldots + \varphi_4 > 360\dg$. 
 Let the following two conditions hold:
 \begin{enumerate}
   \item[(i)] $14 \varphi_1 + 12 \varphi_2 + 8 \varphi_3 + 15 \varphi_4 > 4500\dg$
   \item[(ii)] For $x := \min\{ \frac{1}{7} (14 \varphi_1 + 12 \varphi_2 + 8 \varphi_3 + 15 \varphi_4 - 4500\dg), \varphi_4 \}$ and
    $p_1 \in [0\dg, 90\dg]^{10}$,
    $p_1 = (\beta_0, \ldots, \beta_4, \gamma_0, \ldots, \gamma_4)$ defined as:\\
 \begin{minipage}[b]{0.50\linewidth}
\centering
     \begin{align*}
     \beta_0 &= \varphi_4-x, \\
     \beta_1 &= 90\dg - \frac{x}{2},\\
     \beta_2 &= \varphi_3 + \frac{\varphi_2}{2} + \frac{\varphi_1}{4} + \frac{\varphi_4-x}{8} - 157.5\dg, \\
     \beta_3 &= \varphi_2 + \frac{\varphi_1}{2} + \frac{\varphi_4-x}{4} - 135\dg, \\
     \beta_4 &= \varphi_1 - 90\dg + \frac{\varphi_4-x}{2}, 
    \end{align*}
\end{minipage}
\begin{minipage}[b]{0.50\linewidth}
\centering
     \begin{align*}
     \gamma_0 &= 90\dg - \frac{\varphi_4 - x}{2}, \\
     \gamma_1 &= x, \\
     \gamma_2 &= 168.75\dg - \frac{\varphi_3}{2}- \frac{\varphi_2}{4} - \frac{\varphi_1}{8} - \frac{\varphi_4 - x}{16}, \\
     \gamma_3 &= 157.5\dg - \frac{\varphi_2}{2} - \frac{\varphi_1}{4} - \frac{\varphi_4-x}{8}, \\
	 \gamma_4 &= 135\dg - \frac{\varphi_1}{2} - \frac{\varphi_4-x}{4}, 
    \end{align*}
\end{minipage}
it holds: $ \omega(p_1) < 0 $.
\end{enumerate}
Then, $\{ 180\dg, \varphi_1, \ldots, \varphi_4\} \in \gP{5}$. 
 \label{lem:1x180-suff}
\end{lemma}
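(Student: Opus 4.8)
The plan is to apply Observation~\ref{lem:w-zero-on-simplex}: to conclude $\{180\dg, \varphi_1, \ldots, \varphi_4\} \in \gP{5}$ it suffices to exhibit one cyclic order $\tau$ for which the continuous function $\omega$ attains a zero in the interior of the simplex $S_\tau$. I would fix the ordering $\tau$ that places the $180\dg$-angle first and the remaining angles in the sorted order $\varphi_1 \geq \varphi_2 \geq \varphi_3 \geq \varphi_4$, this being the ordering for which the displayed point $p_1$ is written. Equivalently, by Lemma~\ref{lem:Pn-Opt-equiv}, I am producing a solution of~\optprobref with $\eps > 0$.

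First I would check that $p_1$ is a genuine point of $S_\tau$. The value $x = \min\{\tfrac{1}{7}(14\varphi_1 + 12\varphi_2 + 8\varphi_3 + 15\varphi_4 - 4500\dg), \varphi_4\}$ is designed precisely so that the hyperplane constraint holds: substituting the formulas and summing gives $\sum_{i=0}^4(\beta_i + \gamma_i) = \tfrac{1}{16}(14\varphi_1 + 12\varphi_2 + 8\varphi_3 + 15\varphi_4 - 7x) + 258.75\dg$, which equals $540\dg$ exactly when $7x = 14\varphi_1 + 12\varphi_2 + 8\varphi_3 + 15\varphi_4 - 4500\dg$. Condition~(i) guarantees this value is positive, so $x > 0$, and the $\min$ keeps $x \le \varphi_4$, whence $\beta_0 = \varphi_4 - x \ge 0$ and $\gamma_1 = x \ge 0$. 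The remaining linear constraints of $S_\tau$ — nonnegativity of every $\beta_i, \gamma_i$, the dominations $\alpha_i = 180\dg - \beta_i - \gamma_i \ge \beta_i, \gamma_i$, and the wheel inequalities $\beta_i + \gamma_{i+1} \le \varphi_{\tau(i)}$ — are then checked directly from $90\dg < \varphi_3 \le \varphi_2 \le \varphi_1 \le 120\dg$ and $\varphi_4 \le 60\dg$; this is routine but bookkeeping-heavy. By condition~(ii), $\omega(p_1) < 0$.

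Next I would produce a companion point $p_2 \in S_\tau$ with $\omega(p_2) > 0$. The natural choice degenerates the $\gamma$-product: pushing $\gamma_1 = x$ down toward $0$ (taking the $x \to 0$ end of the same one-parameter family, with the freed angle reabsorbed elsewhere to restore $\sum(\beta_i + \gamma_i) = 540\dg$) forces $\prod_i \sin\gamma_i$ toward $0$ while $\prod_i \sin\beta_i$ stays bounded away from $0$, so $\omega$ becomes strictly positive there; one verifies this endpoint still satisfies all the linear constraints of $S_\tau$. Since $S_\tau$ is convex, the straight segment $\overline{p_1 p_2}$ lies inside it and $\omega$ is continuous, so by the intermediate value theorem $\omega$ vanishes at some interior point of the segment. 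Because $\omega$ is strictly negative at $p_1$ and strictly positive at $p_2$, this zero is attained strictly between them, hence in the relative interior of $S_\tau$; Observation~\ref{lem:w-zero-on-simplex} then finishes the proof.

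The main obstacle is the honest construction of $p_2$: the wheel condition $\prod\sin\beta_i = \prod\sin\gamma_i$ is the only nonlinear constraint, so the whole argument hinges on genuinely flipping the sign of $\omega$ while remaining inside the \emph{same} simplex $S_\tau$. The delicate point is that the explicit family underlying $p_1$ lies on the hyperplane $\sum(\beta_i + \gamma_i) = 540\dg$ only for the one value of $x$ picked out by the $\min$; moving $x$ off that value leaves the hyperplane, so the companion point must be repaired by redistributing the resulting angle deficit without violating any domination or wheel inequality. Controlling linear feasibility and the sign of $\omega$ simultaneously is what carries all the weight, whereas the membership check for $p_1$, though lengthy, is mechanical by comparison.
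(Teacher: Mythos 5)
Your overall strategy coincides with the paper's: take the given $p_1$ with $\omega(p_1)<0$, manufacture a second feasible point $p_2$ with $\omega(p_2)>0$ by killing the $\gamma$-product, apply the intermediate value theorem along the segment $\overline{p_1p_2}$, and conclude via Observation~\ref{lem:w-zero-on-simplex}. However, the two steps you defer are exactly where the content of the lemma lies, and as written they are gaps. First, $p_2$ is never constructed. You correctly identify that one must push $\gamma_1$ to $0$ and ``reabsorb'' the freed angle so that $\sum_i(\beta_i+\gamma_i)=540\dg$ still holds, and you correctly flag this as the main obstacle---but then you stop. The paper does it explicitly: condition~(i) is used a second time to choose $90\dg<\psi_i<\varphi_i$ ($i=1,2,3$), $0\dg<\psi_4<\varphi_4$ and a small $\eps>0$ with $14\psi_1+12\psi_2+8\psi_3+15\psi_4-80\eps=4500\dg$, and $p_2$ is then written down by the same formulas with $\overline{\gamma}_1=0\dg$ and the $\eps$'s distributed over $\overline{\beta}_1,\overline{\gamma}_0,\overline{\gamma}_2,\overline{\gamma}_3,\overline{\gamma}_4$; the linear relation forces the hyperplane constraint, and $\psi_i<\varphi_i$ gives slack in the wheel constraints. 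Without some such explicit choice, the existence of a feasible $p_2$ in the \emph{same} simplex is not established.

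Second, your justification that the zero lies in the (relative) interior of $S_\tau$ is a non sequitur. Both endpoints lie on the boundary of $S_\tau$: at $p_1$ one has $\alpha_0=\gamma_0$, $\alpha_1=\beta_1$ and the wheel constraints $\beta_i+\gamma_{i+1}\le\varphi_{\tau(i)}$ hold with equality, while at $p_2$ one has $\overline{\gamma}_1=0\dg$. A point strictly between two boundary points need not be interior---the whole segment could lie in a face. What rescues the argument, and what the paper checks, is complementary strictness: at $p_2$ every defining inequality except $\overline{\gamma}_1\ge 0\dg$ is strict, and that one inequality is strict at $p_1$ (since $\gamma_1=x>0\dg$ by condition~(i)); hence every open convex combination satisfies all inequalities strictly. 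You should also correct the cyclic order: the permutation for which $p_1$ satisfies the wheel inequalities is $\tau=(4,0,3,2,1)$ (one computes $\beta_0+\gamma_1=\varphi_4$, $\beta_2+\gamma_3=\varphi_3$, $\beta_3+\gamma_4=\varphi_2$, $\beta_4+\gamma_0=\varphi_1$), not ``$180\dg$ followed by the $\varphi_i$ in sorted order,'' for which $\beta_1+\gamma_2\le\varphi_1$ already fails.
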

\begin{figure}[tb]
 \centering
 \includegraphics[trim = 0ex 0ex 0ex 0ex, clip=false]{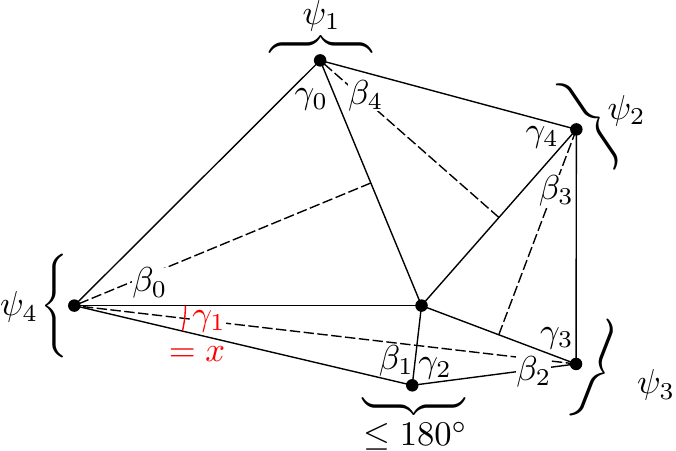}
 \caption{Proof of Lemma~\ref{lem:1x180-suff}.}
 \label{fig:1x180-suff}
\end{figure}
\begin{proof}
Assume both conditions hold.
See the construction in Fig.~\ref{fig:1x180-suff}.
The angles in~$p_1$ are chosen such that all five triangles are equilateral:
it is $\beta_1 = \alpha_1$ and~$\gamma_i = \alpha_i$ for~$i = 0,2,3,4$.
Furthermore, $\beta_0, \gamma_1, \beta_2, \beta_3, \beta_4 \leq 60\dg$.
Consider the permutation~$\tau = (4, 0, 3, 2, 1)$.
By verifying the conditions in Observation~\ref{lem:w-zero-on-simplex},
we see that~$p_1 \in S_\tau =: S$, in particular, $p_1 \in \partial S$.
\par
We now define another point~$p_2 \in [0\dg, 90\dg]^{10}$.
Due to condition~$(i)$, there must exist $90\dg < \psi_i < \varphi_i$ for $i=1,\ldots,3$, $0\dg < \psi_4 < \varphi_4$,
$0\dg < \eps < \psi_4$ (for proper~$\psi_i$, $\eps$ can be chosen arbitrarily small), such that:
$$ 14\psi_1 + 12\psi_2 + 8\psi_3 + 15\psi_4 - 80\eps = 4500\dg. $$
\par
Consider the point $p_2 = (\overline{\beta_0},\ldots,\overline{\beta_4},\overline{\gamma_0},\ldots,\overline{\gamma_4})$, such that:\\
 \begin{minipage}[b]{0.50\linewidth}
\centering
     \begin{align*}
     \overline{\beta}_0 &= \psi_4, \\
     \overline{\beta}_1 &= 90\dg - \eps,\\
     \overline{\beta}_2 &= \psi_3 + \frac{\psi_2}{2} + \frac{\psi_1}{4} + \frac{\psi_4}{8} - 157.5\dg, \\
     \overline{\beta}_3 &= \psi_2 + \frac{\psi_1}{2} + \frac{\psi_4}{4} - 135\dg, \\
     \overline{\beta}_4 &= \psi_1 - 90\dg + \frac{\psi_4}{2}, 
    \end{align*}
\end{minipage}
\begin{minipage}[b]{0.50\linewidth}
\centering
     \begin{align*}
     \overline{\gamma}_0 &= 90\dg - \frac{\psi_4}{2} - \eps, \\
     \overline{\gamma}_1 &= 0\dg, \\
     \overline{\gamma}_2 &= 168.75\dg - \frac{\psi_3}{2}- \frac{\psi_2}{4} - \frac{\psi_1}{8} - \frac{\psi_4 - x}{16} - \eps, \\
     \overline{\gamma}_3 &= 157.5\dg - \frac{\psi_2}{2} - \frac{\psi_1}{4} - \frac{\psi_4-x}{8} - \eps, \\
	 \overline{\gamma}_4 &= 135\dg - \frac{\psi_1}{2} - \frac{\psi_4-x}{4} - \eps. 
    \end{align*}
\end{minipage}
The condition~$\sum_{i=1}^5 (\overline{\beta}_i + \overline{\gamma}_i) = 540\dg$ holds, since
\begin{align*}
16 \cdot (90\dg + 168.75\dg) + (14\psi_1 + 12\psi_2 + 8\psi_3 + 15\psi_4 - 16 \cdot 5\eps) = 16\cdot 540\dg
\end{align*}
due to the choice of~$\psi_i$ and~$\eps$. The rest of the conditions for $p_2 \in S$ can be easily verified.
Apart from~$\overline{\gamma}_1 \geq 0\dg$, all inequalities are strict.
Since~$\gamma_1 > 0\dg$, for each~$\lambda \in (0,1)$,
the point $\lambda p_1 + (1-\lambda) p_2$ lies in the interior of~$S$.
Since~$\omega(p_1) < 0$ and~$\omega(p_2) > 0$ (due to $\overline{\gamma}_1 = 0\dg$),
by the mean value theorem, 
$\omega(\lambda p_1 + (1-\lambda) p_2) = 0$ for some~$\lambda \in (0,1)$.
\end{proof}
\section{Characterizing greedy-drawable binary trees}
\label{sec:binary}
In this section, we shall characterize greedy-drawable binary trees by forbidden subgraphs.
\par
Let us consider the following subtree~$Q_k$ with root~$b_0$.
It consists of nodes $b_0$, $b_1$, $c_1$, $b_2$, $c_2$, \ldots, $b_{k+1}$, $c_{k+1}$, $b_{k+2}$.
Node~$b_i$ is connected to~$b_{i-1}$, $b_{i+1}$ and~$c_i$, 
and nodes~$c_1, \ldots, c_k$ and~$b_{k+1}$ are leaves (e.g., $Q_0 = K_{1,3}$).
Hence, $(b_1, b_2, c_1)$ is an irreducible triple with parent~$b_0$,
$(b_2, b_3, c_2)$ is an irreducible triple with parent~$b_1$ etc.
Figure~\ref{fig:Qk:1} shows a subdivision of such a subtree~$Q_1$.
\begin{lemma}
 For a subdivision of~$Q_k$, an open angle~$\psi_k \geq 90^\circ+30^\circ/(2^k)$ is not possible.
 For each~$\varepsilon' > 0$, $Q_k$ can be drawn with open angle~$90^\circ+30^\circ/(2^k)-\varepsilon'$.
  \label{lem:stackedforks-maxangle} 
\end{lemma}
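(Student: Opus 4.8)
The plan is to prove the statement by induction on $k$, exploiting the self-similar structure of $Q_k$: peeling off the topmost fork leaves a copy of $Q_{k-1}$. Concretely, the junction $b_1$ has exactly three incident edges — the root edge $b_0 b_1$, the leaf edge $b_1 c_1$, and the spine edge $b_1 b_2$ — and the rooted subtree hanging below $b_1$ through $b_2$ has the same recursive shape as $Q_{k-1}$ (now rooted at $b_1$). Thus $Q_k$ is obtained from $Q_{k-1}$ by the case~II construction of Lemma~\ref{lem:combine-case1}, under the identification: junction $r = b_1$, upward edge $r_0 = b_0$, the attached (possibly subdivided) path equal to the leaf edge toward $c_1$, and $T' = Q_{k-1}$.

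For the base case I would take $Q_0 = K_{1,3}$, an irreducible triple, whose opening angle is $120\dg\mn = (90\dg + 30\dg/2^0)\mn$ as already recorded in the preliminaries. For the inductive step, assume $|\angle Q_{k-1}| = (90\dg + 30\dg/2^{k-1})\mn$. Since $90\dg + 30\dg/2^{k-1} > 90\dg$ for every $k \geq 1$, the drawing always falls into case~(i) of Lemma~\ref{lem:combine-case1}, which yields
\begin{equation*}
 |\angle Q_k| = \left(45\dg + \tfrac{1}{2}\,|\angle Q_{k-1}|\right)\mn = \left(45\dg + 45\dg + \tfrac{30\dg}{2^k}\right)\mn = \left(90\dg + \tfrac{30\dg}{2^k}\right)\mn .
\end{equation*}
Equivalently, the recurrence $a_k = 45\dg + a_{k-1}/2$ gives $a_k - 90\dg = \tfrac12 (a_{k-1} - 90\dg)$, solving to $a_k - 90\dg = 30\dg/2^k$ from $a_0 - 90\dg = 30\dg$. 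The $\mn$-notation delivers both assertions at once: the impossibility of attaining $90\dg + 30\dg/2^k$, and — via the achievability half of Lemma~\ref{lem:combine-case1} applied inductively to a near-optimal drawing of $Q_{k-1}$ — a drawing of $Q_k$ with open angle $90\dg + 30\dg/2^k - \varepsilon'$ for every $\varepsilon' > 0$.

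The point that needs care, and the one I expect to be the main obstacle, is that the impossibility claim concerns an arbitrary \emph{subdivision} of $Q_k$, which has strictly more drawing freedom than $Q_k$ itself; I must argue this freedom cannot raise the opening angle. Subdivisions of the leaf edge $b_1 c_1$ are already absorbed by Lemma~\ref{lem:combine-case1}, whose case~II attaches a full path $r a_1 a_2 \ldots a_m$ rather than a single edge; subdivisions of the spine edge $b_1 b_2$ and of everything below it are subdivisions of $Q_{k-1}$, handled by phrasing the inductive hypothesis for subdivisions throughout; and any degree-$2$ vertices inserted on the root edge $b_0 b_1$ can be suppressed by Lemma~\ref{lem:deg2-contract} (or absorbed by case~I, Lemma~\ref{lem:combine-case0}), since contracting them leaves the subtree below the junction $b_1$ — and hence the extremal edges defining $\angle Q_k$ — untouched. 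With the induction carried over subdivisions at every level, the case~II bound propagates to the entire subdivided tree, establishing the impossibility, while the achievability construction on the unsubdivided edges provides the matching drawing of $Q_k$.
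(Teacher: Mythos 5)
Your proof is correct and follows essentially the same route as the paper: induction on $k$, peeling off the fork at $b_1$ so that the subtree hanging below it is a subdivision of $Q_{k-1}$, then applying case~II (Lemma~\ref{lem:combine-case1}) for the leaf path and case~I (Lemma~\ref{lem:combine-case0}) to absorb subdivision vertices on the root edge. Your recurrence $a_k = 45\dg + a_{k-1}/2$ with $a_0 = 120\dg$ even reproduces the stated bound $90\dg + 30\dg/2^k$ exactly, whereas the paper's own displayed computation carries a sign slip (writing $90\dg - 30\dg/2^{k}$).
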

\begin{figure}[tb]
 \hfill
  \subfloat[]{ \includegraphics[page=3, scale=1.0]{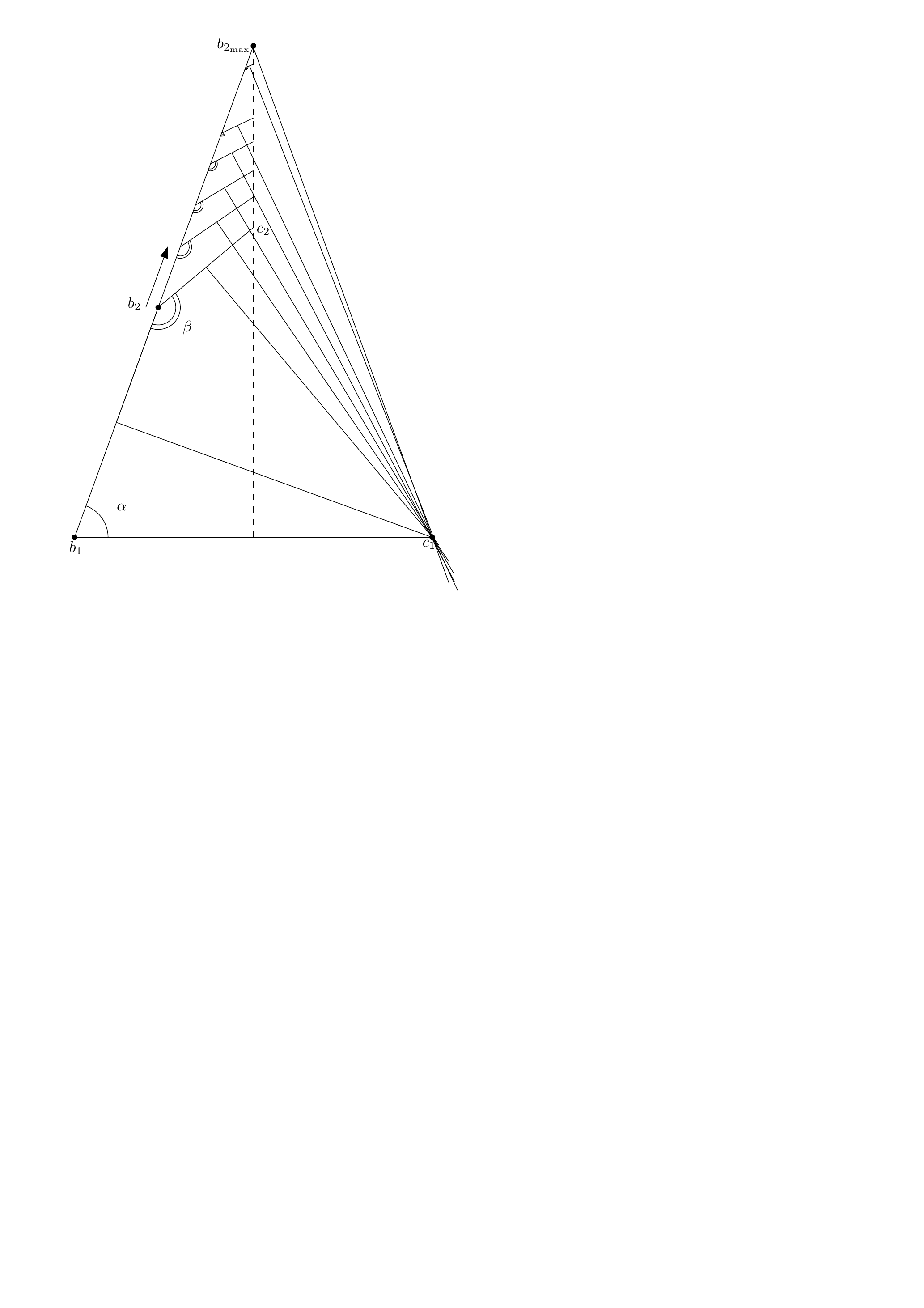} \label{fig:Qk:1} } \hfill
  \subfloat[]{ \includegraphics[page=4, scale=1.0]{fig/full/stackedforks-maxangle-01.pdf} \label{fig:Qk:2} } \hfill\null
 \caption{A subdivision of~$Q_1$ and a greedy drawing with optimal opening angle.}
 \label{fig:Qk}
\end{figure}
\begin{proof}
 It is~$|\angle Q_0| < {120\dg}$. Each subdivision of~$Q_0$ contains it as a subgraph.
 Furthermore, angle~${120\dg} - \eps$ is possible for each~$\eps > 0$ 
 (draw each of the three simple paths collinear and make all edges infinitesimally small, except for the three segments 
 adjacent to the node of degree~3). \par
 Let~$T_k$ be a subdivision of~$Q_k$, $k \geq 1$.
 Let~$(b_0 = x_0, x_1, \ldots, x_p, b_1)$ be a subdivision of~$b_0 b_1$,
 $(b_1, y_1, \ldots, y_q, b_2)$ be a subdivision of~$b_1 b_2$
 and~$(b_1, z_1, \ldots, z_r, c_1)$ a subdivision of~$c_1 c_2$; see Fig.~\ref{fig:Qk:1}.
 Let~$T_{k-1} = ({T_k})_{b_1 y_1}^{y_1} + b_1 y_1$ with root $b_1$.
 Then,~$T_{k-1}$ is a subdivision of~$Q_{k-1}$,
 and by induction, $|\angle T_{k-1}| = (90\dg - \frac{30\dg}{2^{k-1}})^-$.
 Applying Lemma~\ref{lem:combine-case1} to~$(T_k)_{x_p b_1}^{b_1} + {x_p b_1}$
 and then repetitively applying Lemma~\ref{lem:combine-case0} to~$(T_k)_{x_i x_{i+1}}^{x_{i+1}} + {x_i x_{i+1}}$
 provides
 \begin{align*}
  & |\angle T_k| = ( \frac{1}{2} (90\dg - \frac{30\dg}{2^{k-1}}) + 45\dg )^-
   = (90\dg - \frac{30\dg}{2^{k}})^-.
 \end{align*}
\end{proof}
 For $\varepsilon >0$, angle $90^\circ + 30^\circ/(2^k) - \varepsilon$ is achieved if~$b_1$, $b_2$, \ldots, $b_{k+2}$ lie on a single line,
 $\angle b_{k+2} b_{k+1} c_{k+1}$ is slightly bigger than~$60^\circ$,
 $\angle b_{k+1} b_{k} c_{k}$ is slightly bigger than~$75^\circ$ etc; see Fig.~\ref{fig:Qk:2}.
 \par
It follows that subtrees of type~$Q_k$ and subdivisions thereof
 can always be drawn with an opening angle~$90 + \varepsilon_k$, $\varepsilon_k > 0$, for any fixed~$k$.
We show that if there are at most four such independent components in~$T$,
their four open angles can always be arranged appropriately.
\begin{figure}[b]
 \hfill
  \subfloat{
  \includegraphics[page=3]{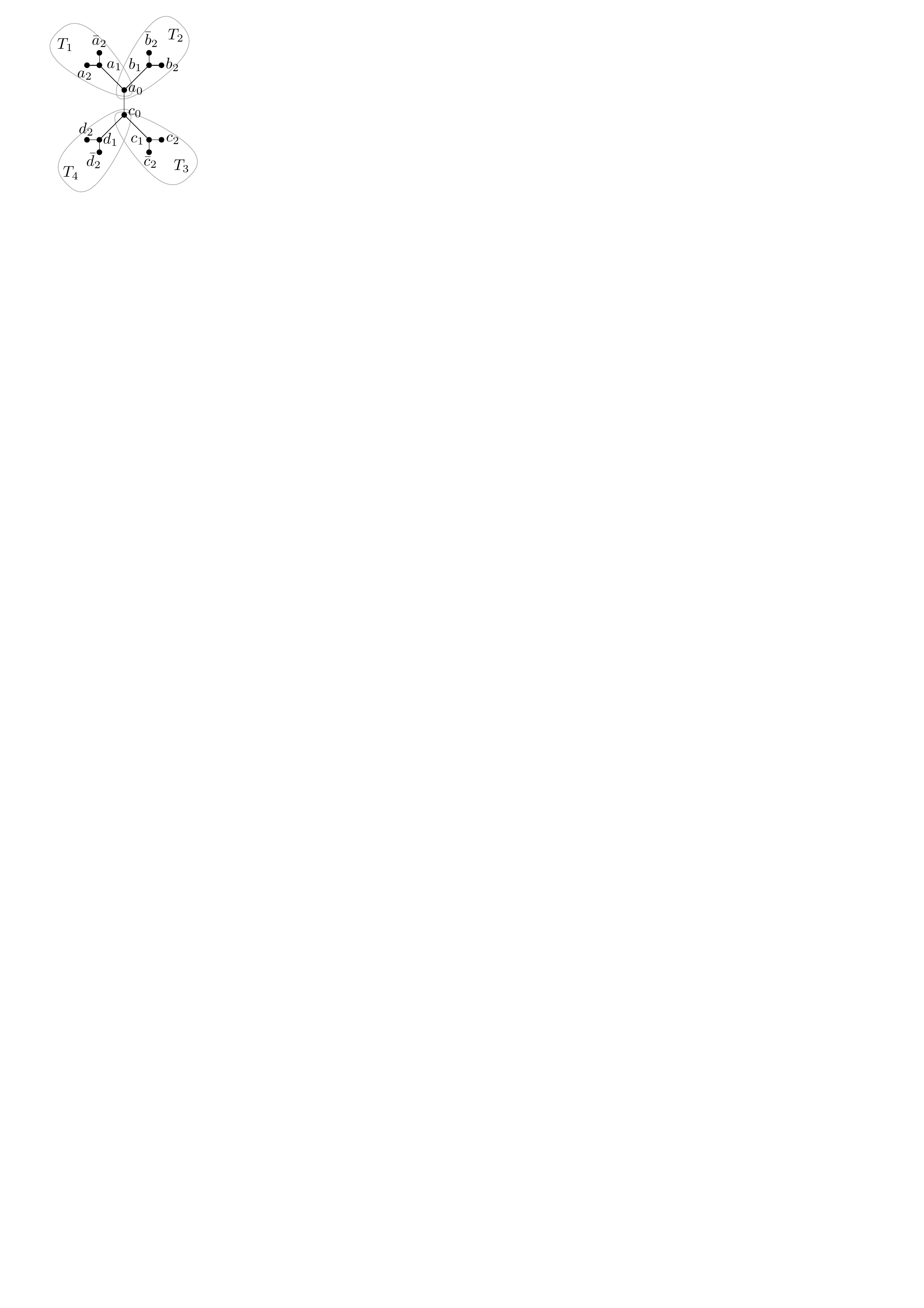}
  \label{fig:bintrees-4triples:crab}
 }\hfill\null
 \caption{Sketch of the proof of Lemma~\ref{lem:bintrees:4triples}.}
\end{figure}
\begin{lemma}
 If a binary tree~$T$ contains at most four independent irreducible separating triples ($n_3 \leq 4$), it has a greedy drawing.
 \label{lem:bintrees:4triples}
\end{lemma}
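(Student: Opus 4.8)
The plan is to root $T$ at a carefully chosen branching node and reduce everything to the solvable case $\gP{3}$. First I would dispose of the degenerate situation $n_3\le 1$: if $T$ has no degree-$3$ node it is a path and any collinear drawing is greedy; if it has a single independent triple, then up to contracting degree-$2$ vertices (Lemma~\ref{lem:deg2-contract}) $T$ is a stacked fork $Q_k$ decorated with paths, which is drawable by the explicit construction of Lemma~\ref{lem:stackedforks-maxangle} together with cases~I and~II (Lemmas~\ref{lem:combine-case0} and~\ref{lem:combine-case1}).

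For $2\le n_3\le 4$ the decisive step is the choice of root. Treating the at most four triple centers as unit weights, I would take a weighted centroid $r$, i.e.\ a node such that every component of $T\setminus\{r\}$ contains at most $\lfloor 4/2\rfloor=2$ of the triples. If this centroid has degree $\le 2$ then $T$ has at most one branching node and we are back in the easy case; otherwise $\deg(r)=3$, and rooting $T$ at $r$ yields three pairwise independent branches $T_1,T_2,T_3$ carrying at most two triples each. (If the balanced centroid happens to be a degree-$2$ vertex, I would instead move to the nearest branching node and recheck that no branch then exceeds two triples.)

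Next I would lower-bound the maximal opening angle $\varphi_i:=|\angle T_i|$ of each branch. A branch with no triple contains no degree-$3$ node, hence is a path with $\varphi_i=180\dg$. A branch with exactly one triple stays, under path attachment and path extension, inside the angle-preserving regime of case~II.i (Lemma~\ref{lem:combine-case1}) — the angle-shrinking cases~III--V always create a second independent triple and are thus excluded — so by Lemma~\ref{lem:stackedforks-maxangle} we get $\varphi_i>90\dg$. A branch with two independent triples splits at a single node $w$ into two one-triple subtrees of opening angles $>90\dg$, so case~IV (Lemma~\ref{lem:combine-case3}), possibly followed by angle-neutral path attachments, gives $\varphi_i=(\varphi'+\varphi''-180\dg)^->0\dg$. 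Writing $s,t,p$ for the numbers of branches carrying one, two, and zero triples, we have $2t+s\le n_3\le 4$ and $s+t+p=3$; since every triple contributes more than $90\dg$ while each two-triple branch subtracts only a single $180\dg$, a short computation yields $\varphi_1+\varphi_2+\varphi_3>90\dg\,s+180\dg\,p\ge 180\dg$, and the final inequality is precisely where the hypothesis $n_3\le 4$ enters.

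Finally, $\varphi_1+\varphi_2+\varphi_3>180\dg$ gives $\{\varphi_1,\varphi_2,\varphi_3\}\in\gP{3}$ by Lemma~\ref{lem:greedy-K13}, and Theorem~\ref{thm:p5equiv} converts this into a greedy drawing of $T=\bigcup_i T_i$, since each $T_i$ is by construction drawable with opening angle arbitrarily close to $\varphi_i$. The step I expect to be the main obstacle is the structural rooting: proving that a centroid (or an adjacent branching node) really distributes the triples with at most two per branch, and that a branch carrying exactly one independent triple cannot be forced to opening angle $\le 90\dg$. Both hinge on the interplay between the \emph{independence} of the triples and the fact that only the angle-neutral cases~I/II (not the shrinking cases~III--V) preserve the triple count, so this is where the argument must be made precise.
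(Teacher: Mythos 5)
Your proposal is correct in substance but finishes the argument by a genuinely different decomposition. The paper assumes $n_3=4$, observes that $T$ then contains a subdivision of the four-crab of Fig.~\ref{fig:bintrees-4triples:crab} whose four hanging subtrees are subdivided caterpillars~$Q_k$ with opening angle greater than~$90\dg$ (Lemma~\ref{lem:stackedforks-maxangle}), combines the two caterpillars at each of the two hubs (case~IV together with the angle-preserving cases~I/II) into two subtrees~$T'$, $T''$ sharing a spine edge, each drawable with a small positive opening angle, and simply merges the two drawings along that edge. You instead root~$T$ at a (near-)centroid degree-$3$ node so that each of the three branches carries at most two of the independent triples, bound the branch angles by~$180\dg$, $>90\dg$, $>0\dg$ according to whether a branch carries $0$, $1$ or $2$ triples, and close via the counting $90\dg\, s+180\dg\, p\ge 180\dg$ (equivalently $2p+s=6-s-2t\ge 2$ from $s+2t\le 4$ and $p+s+t=3$) followed by Lemma~\ref{lem:greedy-K13} and Theorem~\ref{thm:p5equiv}. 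Your route treats $n_3\in\{2,3,4\}$ uniformly and makes visible exactly where the hypothesis $n_3\le 4$ enters, at the price of the centroid-relocation argument and of invoking the $\gP{3}$ machinery where the paper only needs a trivial two-angle merge; the key ingredients (the $>90\dg$ bound for one-triple caterpillars, case~IV, i.e.\ Lemma~\ref{lem:combine-case3}, for pairs of them, cases~I/II for paths) are identical.

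Two points to tighten. First, the structural step you flag is indeed the only delicate one, and it is settled by a swap argument worth making explicit: if a branch carrying exactly one triple of the chosen maximum independent set were not a decorated subdivision of some~$Q_k$, it would contain two degree-$3$ nodes neither of which lies on the root-path to the other; their triples with branches directed away from the root are independent of each other and of every triple in the remaining branches, so exchanging them in would yield five pairwise independent triples, contradicting $n_3\le 4$. The same exchange shows that a two-triple branch splits at a single degree-$3$ node into two one-triple caterpillars. (The paper relies on the very same unproved claim, ``otherwise it would be $n_3\ge 5$'', so you are not at a disadvantage here.) Second, a degree-$2$ centroid does \emph{not} imply that~$T$ has at most one branching node — in the four-crab with a long spine the centroid is an interior spine vertex — but your fallback of moving to the nearest degree-$3$ node does restore the at-most-two-triples-per-branch property, so the argument survives.
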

\begin{proof}
 Without loss of generality, let~$n_3 = 4$.
 Then,~$T$ contains a subdivision of a subtree depicted in Fig.~\ref{fig:bintrees-4triples:crab}. 
 (Coincidentally, for~$\ell=0$, this is exactly the ``crab'' from~\cite{acglp-sag-12}. Hence, $T$ has no self-approaching drawing.)
 Let~$T_1 = T_{a_0 a_1}^{a_1} + {a_0 a_1}$, $T_2 = T_{a_0 b_1}^{b_1} + {a_0 b_1}$ be subtrees of~$T$ rooted at~$a_0$
 and~$T_3 = T_{c_0 c_1}^{c_1} + {c_0 c_1}$, $T_4 = T_{c_0 d_1}^{d_1} + {c_0 d_1}$ be subtrees of~$T$ rooted at~$c_0$.
 Then,~$T_1$, \ldots, $T_4$ must be subdivisions of caterpillars of type~$Q_k$ (otherwise it would be~$n_3 \geq 5$).
 \par
 Now, we start combining the subtrees.
 Let~$T' = T_{c_0 x}^x + { x c_0 }$ and~$T'' = T_{x c_0}^{c_0} + { x c_0 }$.
 By applying Lemmas~\ref{lem:combine-case0} and~\ref{lem:combine-case1}
to $T_1$ and~$T_2$ as well as to $T_3$ and~$T_4$,
 it follows that both~$T'$ and~$T''$ can be drawn greedily with an opening angle~$\eps$
 for sufficiently small~$\eps > 0$.
 We merge the two drawings at the edge~$x c_0$ and gain a greedy drawing of~$T$.
\end{proof}
\begin{figure}[tb]
 \hfill
 \subfloat[]{\includegraphics[page=5, scale=0.8]{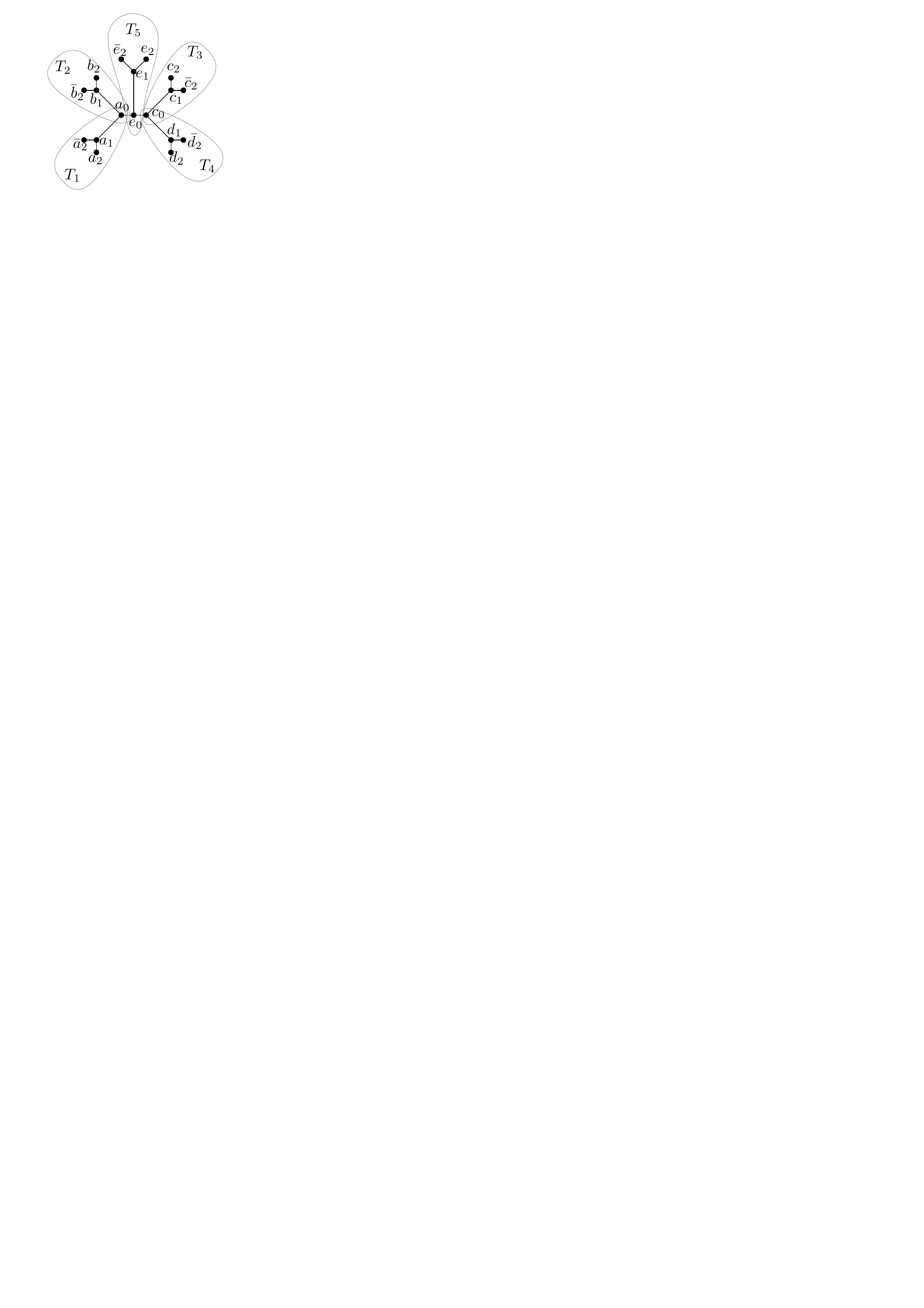}
  \label{fig:bintrees-5triples:crab}
 }
  \hspace{1cm}
  \subfloat[]{\includegraphics[page=6, scale=0.8]{fig/full/bintrees-5triples.pdf}
  \label{fig:bintrees-5triples:angles}
 }\hfill\null
 \caption{Constructing a greedy drawing for~$n_5 = 5$.}
\end{figure}
We now consider the remaining case~$n_3=5$.
In this case,~$T$ must contain a \emph{five-crab} subgraph shown in Fig.~\ref{fig:bintrees-5triples:crab}
or a subdivision thereof.
We consider the corresponding independent subtrees~$T_1$, \ldots, $T_5$ of~$T$.
Again, these subtrees must be caterpillars of type~$Q_k$, otherwise,~$n_3 \geq 6$.
Each~$T_i$ can be drawn with an opening angle~$|\angle T_i| = \varphi_i^- \in (90\dg, 120\dg)$.
Let~$\sigma = \sum_{i=1}^5 \varphi_i$.
If~$\sigma \leq 540\dg$, 
no greedy drawing exists by Lemma~\ref{lem:indep-angles-sum}.
\par
We now show that a greedy drawing always exists for~$\sigma > 540\dg$.
Similar to the proof of Lemma~\ref{lem:bintrees:4triples},
we combine the subtrees~$T_1$, $T_2$ and edges~$p_1 q_1$, \ldots, $p_\ell q_\ell$ to the subtree~$T' = T_{e_0 p_\ell}^{p_\ell} + {e_0 p_\ell}$
as well as the subtrees~$T_3$, $T_4$ and edges~$r_1 s_1$, \ldots, $r_m s_m$ to the subtree~$T'' = T_{e_0 r_m}^{r_m} + {e_0 r_m}$ (both with root~$e_0$).
By applying Lemmas~\ref{lem:combine-case0}, \ref{lem:combine-case1} and~\ref{lem:combine-case3},
we have~$|\angle T'| = {\varphi'}^-$ for~$\varphi' = \varphi_1 + \varphi_2 - 180\dg$
and~$|\angle T''| = {\varphi''}^-$ for~$\varphi'' = \varphi_3 + \varphi_4 - 180\dg$.
Since~$\varphi' + \varphi'' + \varphi_5 > 180\dg$, it is 
$\{ \varphi', \varphi'', \varphi_5 \} \in \gP{3}$ by Lemma~\ref{lem:greedy-K13}.
\par
We now list all the possibilities for~$\sigma > 540\dg$.
For the rooted subtrees~$T_i$, $i=1,\ldots,5$ we say that~$T_i$ has \emph{order~$k$} if~$T_i$ is equivalent 
to a subdivision of~$Q_k$.
Assume at least four of the five subtrees have order~1 or greater, then, 
$\sigma \leq 120\dg + 4 \cdot 105\dg = 540\dg$,
so~$T$ cannot be drawn greedily.
Thus, at least two subtrees~$T_i$ have order~0.
If there are three, four or five such subtrees,
it is~$\sigma > 3 \cdot 120\dg + 2 \cdot 90\dg = 540\dg$.
If there are only two, then at least two of the three remaining subtrees have order~1, 
since~$2 \cdot 120\dg + 105\dg + 2 \cdot 97.5\dg = 540\dg$.
In this case, $\sigma > 2 \cdot 120\dg + 2 \cdot 105\dg + 90\dg = 540\dg$.
In both cases, a greedy drawing exists by Lemma~\ref{lem:P5-120}. \par
We can now give a complete characterization of greedy-drawable trees with maximum degree~3.
\begin{proposition}
 A tree~$T$ with maximum degree~3 has a greedy drawing in~$\mathbb R^2$ if and only if one of the following holds:
 \begin{enumerate}
   \item $n_3 \leq 4$, or
   \item $T$ contains a subdivision of a five-crab in Fig.~\ref{fig:bintrees-5triples:crab}, such that 
   the rooted subtrees~$T_1$, \ldots, $T_5$ as defined above are subdivisions of~$Q_k$ of orders
   either $\{ 0,0,0,x_1,x_2\}$ or $\{ 0,0,1, 1, x_1 \}$ for some~$x_1, x_2 \in \mathbb N_0$.
 \end{enumerate}
\end{proposition}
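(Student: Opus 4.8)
The plan is to characterize greedy-drawability of maximum-degree-3 trees by combining two facts already established in the paper: first, that any binary tree with at most four pairwise-independent irreducible separating triples is greedy-drawable (Lemma~\ref{lem:bintrees:4triples}); and second, that when exactly five such triples are present, the tree is greedy-drawable precisely when the sum $\sigma=\sum_{i=1}^5\varphi_i$ of the maximum opening angles of the five associated subtrees $T_1,\ldots,T_5$ exceeds $540\dg$, as developed in the discussion preceding this proposition. The overall structure of the proof is therefore a case split on $n_3$, the maximum number of pairwise-independent irreducible separating triples in $T$.

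First I would handle the case $n_3\le 4$. By Lemma~\ref{lem:bintrees:4triples} such a tree always admits a greedy drawing, which matches condition~(1) of the statement. The converse direction here is trivial: if $n_3\le 4$ then condition~(1) holds by definition, so there is nothing to prove beyond invoking the lemma.

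Next I would treat the case $n_3\ge 6$: by Lemma~\ref{lem:indep-tuples-num} (with $n_4=n_5=0$ for a binary tree, so the threshold $n_3\ge 6$ applies), no greedy drawing exists, and correspondingly neither condition~(1) nor~(2) can hold, since~(1) requires $n_3\le 4$ and~(2) describes a \emph{five}-crab configuration with exactly five triples. I should verify that a binary tree with $n_3\ge 6$ cannot satisfy the five-crab description of~(2); this follows because the five-crab with its five subtrees $T_1,\ldots,T_5$ of type $Q_k$ accounts for exactly five independent triples, and any additional independent triple would push $n_3$ to at least $6$, contradicting the assumption embedded in~(2) that $T$ has exactly the five-crab structure.

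The main work, and the main obstacle, is the case $n_3=5$. Here I would argue that $T$ must contain a subdivision of the five-crab (Fig.~\ref{fig:bintrees-5triples:crab}) with five rooted subtrees $T_1,\ldots,T_5$, each necessarily a subdivision of some $Q_k$ (otherwise $n_3\ge 6$), so $|\angle T_i|=\varphi_i^-\in(90\dg,120\dg)$ with $\varphi_i=90\dg+30\dg/2^{k_i}$ by Lemma~\ref{lem:stackedforks-maxangle}. The characterization then reduces to the arithmetic condition $\sigma>540\dg$: by Lemma~\ref{lem:indep-angles-sum} no greedy drawing exists when $\sigma\le540\dg$, while the constructive argument preceding the proposition builds a greedy drawing (via Lemmas~\ref{lem:combine-case0},~\ref{lem:combine-case1},~\ref{lem:combine-case3} and~\ref{lem:greedy-K13}) whenever $\sigma>540\dg$. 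The delicate step is translating $\sigma>540\dg$ into the explicit \emph{order multisets} listed in condition~(2). Writing $\varphi_i=90\dg+30\dg/2^{k_i}$, I would show by the enumeration already carried out in the text that $\sum_i 30\dg/2^{k_i}>90\dg$, i.e. $\sum_i 2^{-k_i}>3$, holds exactly when the multiset of orders $\{k_1,\ldots,k_5\}$ is dominated by one of the patterns $\{0,0,0,x_1,x_2\}$ or $\{0,0,1,1,x_1\}$; any configuration with four subtrees of order $\ge1$, or with fewer than two order-$0$ subtrees, forces $\sigma\le540\dg$. Establishing this equivalence cleanly—confirming that the two listed patterns are exactly the maximal families with $\sum 2^{-k_i}>3$ and that all others fail—is where the care lies, and I would present it as a short finite case analysis on how many of the five orders equal $0$ and then how many of the remainder equal $1$.
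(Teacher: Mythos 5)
Your proposal is correct and follows essentially the same route as the paper: Lemma~\ref{lem:bintrees:4triples} for $n_3\le 4$, Lemma~\ref{lem:indep-tuples-num} for $n_3\ge 6$, and for $n_3=5$ the five-crab decomposition into $Q_k$-subtrees with $\varphi_i=90\dg+30\dg/2^{k_i}$, combined with the threshold $\sigma>540\dg$ (necessity via Lemma~\ref{lem:indep-angles-sum}, sufficiency via the combination lemmas and Lemma~\ref{lem:P5-120}). Your reformulation of the final enumeration as $\sum_i 2^{-k_i}>3$ is a slightly cleaner way to verify that the two order multisets $\{0,0,0,x_1,x_2\}$ and $\{0,0,1,1,x_1\}$ are exactly the admissible ones, but it is the same finite case analysis the paper carries out with explicit angle values.
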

Alternatively, we can express it using forbidden subgraphs: 
1) a five-crab with four subtrees~$Q_1$ or 2) a five-crab with two subtrees~$Q_2$ and one~$Q_1$ (or subdivisions thereof).
\section{Recognition algorithm}
\label{sec:algorithm}
\textbf{Maximum degree 4.} In this section we formulate Algorithm~\ref{alg:decide-deg4}, which  decides for a tree~$T$ with maximum degree~4 whether~$T$ has a greedy drawing.
First, we describe a procedure to determine the tight upper bound for the opening angle of a given rooted subtree.
Let $\neighb(v)$ denote the neighbors of~$v \in V$ in~$T$.
After processing a node~$v$, we set a flag~$\proc(v) = \true$.
Let $\procn(v) = \{ u \mid uv \in E, \proc(u) = \true\}$,
and $\optangle$ the new tight upper bound 
calculated according to Table~\ref{tab:combine}.
\par
\setlength{\algomargin}{0.3cm}    
\begin{figure}
\centering
\begin{minipage}[t]{.46\linewidth}
\begin{procedure}[H]
  	\small
 	\DontPrintSemicolon
	\SetKwInOut{Input}{Input}
	\SetKwInOut{Output}{output}
	\Input{tree~$T=(V,E)$, root~$r \in V$, $d_r=1$}
	\KwResult{tight upper bound on $|\angle T|$, 0~if no open angle possible. }
	 $\proc(r) \leftarrow \false$\;
	\For{$v \in V \setminus \{ r \}$}{
	 \lIf{$d_v = 5$}{
	 	$\Return~0$\;
	 }
	 \uElseIf{$d_v = 1$}{
	 	$\proc(v) \leftarrow \true$\;
	 	$\angle(v) \leftarrow 180$
	 }
	 \lElse{
	 	$\proc(v) \leftarrow \false$
	 }
	}
	\While{$\exists v \in V: \neg\proc(v)$ $\And |\procn(v)| = d_v-1$}{
		\uIf{$\forall u \in \procn(v): \angle(u)=180$}
			{$\angle(v) \leftarrow 180 -(d_v-2)\cdot 60$\;}
		 \lElseIf{case I,\ldots,V applicable}{
		 	$\angle(v) \leftarrow \optangle(\procn(v))$\;
		 }
		 \lElse{
			$\Return~0$\;
		 }
		$\proc(v) \leftarrow \true$\;
	}
	$\Return~\angle(v)$ for $\{ v \} = \neighb(r)$
\vspace{5.3mm}
 \caption{getOpenAngle($T$,$r$)}
 \label{proc:getAngle}
\end{procedure}
\end{minipage}
\begin{minipage}[t]{.52\linewidth}
\begin{algorithm}[H]
    \footnotesize
	\DontPrintSemicolon
	\SetKwInOut{Input}{Input}
	\SetKwInOut{Output}{output}
	\Input{tree~$T=(V,E)$, $\max \deg~4$}
	\KwResult{whether~$T$ has a greedy drawing}
	\For{$v \in V$}{
	 \uIf{$d_v = 1$}{
	 	$\proc(v) \leftarrow \true$;
	 	$\angle(v) \leftarrow 180$\;
	 }
	 \lElse{
	 	$\proc(v) \leftarrow \false$
	 } %
	}
	\While{$\exists v \in V: \neg\proc(v)$ $\And |\procn(v)| \geq d_v-1$}{ %
		\uIf{$|\procn(v)| = d_v$}{\Return~$\sum_{u, uv \in E} \angle(u) > (d_v-2)180$}
		\uElseIf{$\forall u \in \procn(v): \angle(u)=180$}
			{$\angle(v) \leftarrow 180 -(d_v-2)\cdot 60$\;}
		 \uElseIf{case I,\ldots,V applicable}{
		 	$\angle(v) \leftarrow \optangle(\procn(v))$\;
		 }
		 \uElse{
		 	$w \leftarrow \neighb(v) - \procn(v)$\;
		 	$\angle(w) \leftarrow \ref{proc:getAngle}(T_{vw}^w+ vw,v)$\;
		 	$\Return~\angle(w) > 0$ $\And \sum _{u,uv \in E}\angle(u) > (d_v-2)180$\;
		 }
		 $\proc(v) \leftarrow \true$\;
	}
	\caption{hasGreedyDrawing($T$)}
	\label{alg:decide-deg4}
\end{algorithm}
\end{minipage}
\end{figure}
\begin{lemma}\label{lem:proc:getAngle}
 Procedure~\ref{proc:getAngle} is correct and requires time $O(|V|)$.
\end{lemma}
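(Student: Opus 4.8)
The plan is to prove correctness by induction over the bottom-up order in which the while loop finalizes the nodes, and to obtain the $O(|V|)$ bound through a standard ready-queue implementation. First I would verify that the loop realizes a valid postorder of $T$ rooted at~$r$: the handle~$r$ satisfies $\deg(r)=1$ and is kept with $\proc(r)=\false$, so it is never finalized; every other node~$v$ becomes eligible exactly once, namely when $|\procn(v)| = d_v-1$, which by induction on the rooted structure happens precisely when all $d_v-1$ children of~$v$ are finalized while the single root-ward neighbor is not. Hence the loop finalizes all of $V\setminus\{r\}$, ending with the unique neighbor~$v$ of~$r$, and $\angle(v)$ is exactly the quantity we must return: the tight upper bound on the opening angle of $T = T_{rv}^v + rv$.

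Second, the core is the inductive claim that whenever a node~$u$ is finalized, $\angle(u)$ is the tight upper bound on $|\angle(T_{wu}^u + wu)|$, where $w$ is the parent of~$u$. The base case is a leaf, set to $180\dg$, corresponding to the single edge drawn as a degenerate path. For an internal node~$v$ with finalized children, I would classify the multiset of child angles and match it to exactly one row of Table~\ref{tab:combine}: if all children are paths the closed form $180\dg-(d_v-2)\cdot 60\dg$ is used, giving $180\dg,120\dg,60\dg$ for degrees $2,3,4$ (a path, a triple, a quadruple); otherwise the appropriate combination lemma (Lemma~\ref{lem:combine-case0}--\ref{lem:combine-case4}) yields $\optangle$. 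Because each combination lemma is proved via Lemma~\ref{lem:shrink} and states the resulting opening angle as a monotone function of the children's tight bounds, both the upper bound and the matching construction compose, so $\angle(v)$ is again a tight upper bound.

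The step I expect to require the most care is showing that this branching is an exhaustive, unambiguous case distinction and that the two ways the procedure returns~$0$ are correct. The decisive simplification, which I would establish first, is that every opening angle produced by the lemmas lies in $(0\dg,60\dg]\cup(90\dg,120\dg]\cup\{180\dg\}$; no value in $(60\dg,90\dg]$ or $(120\dg,180\dg)$ ever arises (cf.\ Table~\ref{tab:combine}). With at most three children (since $\deg\le 4$) only finitely many configurations remain: one non-path child gives case~I, II or III by the number of accompanying path-children; two non-path children give case~IV or~V, feasible exactly when both angles exceed $90\dg$ and otherwise collapsing to the closed case~VI, in which a child of angle at most~$90\dg$ is in fact below~$60\dg$ by the range restriction; and three non-path children give the always-closed case~VII. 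The procedure returns~$0$ precisely in cases~VI and~VII and on encountering a degree-$5$ node; the latter is correct because such a node admits no open angle, and the former because Lemma~\ref{lem:combine:closed} forces $|\angle\overline{T}|<0$.

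To complete correctness of the early termination I would argue that a closed subtree propagates: if the subtree at some node~$v$ has tight bound $\le 0$, then every enclosing subtree, and in particular~$T$, is closed, so returning~$0$ for all of~$T$ is justified. This follows because a greedy drawing of~$T$ restricts to a greedy drawing of the subtree at~$v$, and by the monotonicity of the combination formulas a closed child can never be widened into an open parent; this monotonicity-and-containment argument is the genuinely technical point of the proof. Finally, for the running time I would maintain a deficiency counter $d_v-1-|\procn(v)|$ per node and a queue of ready nodes; finalizing a node decrements its parent's deficiency and enqueues the parent once it reaches zero, while the degree-$5$ test, the case classification, and the angle update are all $O(1)$ because $\deg\le 4$. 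Each of the $|V|$ nodes is finalized once and each edge is inspected a constant number of times, so together with the $O(|V|)$ initialization the procedure runs in $O(|V|)$ time.
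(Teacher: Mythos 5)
Your proof follows essentially the same route as the paper's: a bottom-up induction maintaining the invariant that each finalized node stores the tight upper bound on the opening angle of its rooted subtree, an exhaustive match against the cases of Table~\ref{tab:combine} using the fact that no tight bound falls in $(60\dg,90\dg]$ or $(120\dg,180\dg)$, and a ready-queue argument for the $O(|V|)$ bound. You are in fact somewhat more explicit than the paper on two points it leaves implicit — why the case distinction is exhaustive and why a closed (or degree-5) subtree justifies returning $0$ for the whole rooted tree — so no gap to report.
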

\begin{proof}
 The algorithm processes tree nodes bottom-up. 
 For~$v \in V$, let $\pi_v$ be the parent of~$v$, $\deg(v) = d_v$, $T_v = T_{\pi_v v}^v + \pi_v v$ with root~$\pi_v$.
 For a subtree with one or two nodes, define its opening angle as~$180\dg$.
 We prove the following invariant for the~\emph{while} loop: 
 For each~$v \in V$ with~$\proc(v) = \true$, $\angle(v) > 0$ stores a tight upper bound for the opening angle in a greedy drawing of~$T_v$. \par
 The invariant holds for all leaves of~$T$ after the initialization.
 The first \emph{if}-statement inside the \emph{while} body ensures that if all nodes in~$T_v$ except~$v$
 have degree~1 or~2, then
 $\angle(v) = 180$ if~$d_v = 1,2$ in~$T$,
 $\angle(v) = 120$ if~$d_v = 3$ and
 $\angle(v) = 60$ if~$d_v = 4$. %
 Now consider the first \emph{else} clause inside the \emph{while} loop.
 Assume $\proc(v)=\false$, $|\procn(v)|$ $= d_v-1$ and the invariant holds for all subtrees~$T_u$, $u \in \procn(v)$.
 If one of the cases~I--V can be applied to~$v$ and subtrees~$T_u$, then,
  after the current loop, $\angle(v) > 0$ stores the tight upper bound for the opening angle in a greedy drawing of~$T_v$;
 see Table~\ref{tab:combine}.
 Otherwise, we have case~VI or VII, and $T_v$ cannot be drawn with an open angle.
 Each node~$v$ is processed in~$O(d_v)$, and if for $u \in \neighb(v) - \procn(v)$,
 it holds~$|\procn(u)| \geq d_u -1$ after processing~$v$, we put~$u$ in a queue.
 Hence the running time is $O(|V|)$.
\end{proof}
\begin{proposition}
 Algorithm~\ref{alg:decide-deg4} is correct and requires time $O(|V|)$.
  \label{prop:alg-deg4}
\end{proposition}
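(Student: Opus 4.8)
My plan is to reuse the bottom-up invariant established in the proof of Lemma~\ref{lem:proc:getAngle} and then verify that each of the two \texttt{return} statements of Algorithm~\ref{alg:decide-deg4} reports the correct answer. The guiding principle throughout is Theorem~\ref{thm:p5equiv}: if we root $T$ at an arbitrary node $v$ of degree $d_v$ and let $\angle(u)$ denote the tight upper bound on the opening angle of the subtree around $v$ in the direction of neighbour $u$, then $T$ has a greedy drawing if and only if $\{\angle(u) : u \in \neighb(v)\} \in \gP{d_v}$, since any greedy drawing automatically respects the tight bounds. For $d_v \le 4$ and bounds in the admissible range of Table~\ref{tab:combine}, Lemmas~\ref{lem:greedy-K13} and~\ref{lem:greedy-K14} (sufficiency) together with Lemma~\ref{lem:indep-angles-sum} (necessity) reduce membership in $\gP{d_v}$ to the single inequality $\sum_{u} \angle(u) > (d_v-2)\cdot 180\dg$, \emph{provided all the bounds are positive}; this is exactly the quantity tested by the algorithm.

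First I would record the loop invariant, which is identical to the one in Lemma~\ref{lem:proc:getAngle}: whenever a node $v$ with $|\procn(v)| = d_v - 1$ is processed without returning, its unique unprocessed neighbour $w$ acts as the parent, and $\angle(v)$ receives the tight upper bound on $|\angle T_v|$ for $T_v = T_{wv}^v + wv$, as certified by the ``all paths'' branch or by cases~I--V of Table~\ref{tab:combine}. Crucially, every value assigned there is strictly positive ($120\dg$ for $d_v=3$ and $60\dg$ for $d_v=4$ in the path branch, and a value in $(0\dg,120\dg)$ for the combining cases), so all angles stored for processed nodes are positive. For the first \texttt{return} (reached when $|\procn(v)| = d_v$), no closure has occurred along the way, so node $v$ has all $d_v$ incident subtrees reduced with positive tight bounds; rooting $T$ at $v$ and applying Theorem~\ref{thm:p5equiv} with Lemmas~\ref{lem:greedy-K13}/\ref{lem:greedy-K14}/\ref{lem:indep-angles-sum} (and the trivial case $d_v=2$, where $\sum_u \angle(u) > 0$ always holds) shows that the returned inequality $\sum_u \angle(u) > (d_v-2)\cdot 180\dg$ decides greediness correctly.

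The main obstacle is the last \texttt{else} branch, which is entered exactly in cases~VI or~VII: the $d_v-1$ processed subtrees combine at $v$, rooted towards $w$, into a subtree $T_v$ with $|\angle T_v| \le 0$. I would again root $T$ at $v$, so that the $d_v$ subtrees around $v$ are the $T_{vu_i}^{u_i}+vu_i$ (with positive bounds $\angle(u_i)$) and the single remaining subtree $T_{vw}^w+vw$, whose tight bound $\angle(w)$ the algorithm computes by its one call to Procedure~\ref{proc:getAngle} (correct by Lemma~\ref{lem:proc:getAngle}). Necessity of the test $\angle(w)>0$ follows from Lemma~\ref{lem:two-closed-angles}: the subtrees $T_v$ and $T_{vw}^w+vw$ are independent, so they cannot both be drawn with a non-positive angle; since $|\angle T_v|\le 0$, any greedy drawing forces the $w$-side to have a positive opening angle, i.e.\ $\angle(w)>0$ (equivalently, if the $w$-side harbours its own independent closed combination, Procedure~\ref{proc:getAngle} returns $0$ and the algorithm correctly answers \false). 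Once $\angle(w)>0$, all $d_v$ bounds around $v$ are positive, and Theorem~\ref{thm:p5equiv} with Lemmas~\ref{lem:greedy-K13}/\ref{lem:greedy-K14}/\ref{lem:indep-angles-sum} again makes greediness equivalent to $\sum_u \angle(u) > (d_v-2)\cdot 180\dg$; this is precisely the returned conjunction $\angle(w)>0 \,\wedge\, \sum_u \angle(u) > (d_v-2)\cdot 180\dg$.

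Finally, I would argue termination and the running time. The unprocessed nodes always induce a forest, which has a leaf, so some $v$ with $|\procn(v)| \ge d_v - 1$ remains available until the loop either fully processes a node (first \texttt{return}) or detects a closure (last \texttt{else}); hence the algorithm always terminates with a \texttt{return}. The sweep visits each node once, spending $O(d_v)$ per node exactly as in Lemma~\ref{lem:proc:getAngle}, for a total of $O(|V|)$, and it issues at most one call to Procedure~\ref{proc:getAngle}, itself $O(|V|)$; the overall running time is therefore $O(|V|)$.
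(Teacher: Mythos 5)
Your proposal is correct and follows essentially the same route as the paper's proof: the bottom-up invariant inherited from Lemma~\ref{lem:proc:getAngle}, Lemma~\ref{lem:indep-angles-sum} for necessity and Table~\ref{tab:P} (via Lemmas~\ref{lem:greedy-K13}, \ref{lem:greedy-K14} and Theorem~\ref{thm:p5equiv}) for sufficiency at the first \texttt{return}, and Lemma~\ref{lem:two-closed-angles} to justify the $\angle(w)>0$ test in the final \texttt{else} branch. Your explicit termination and running-time accounting only spells out what the paper leaves implicit.
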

\begin{proof}
 The algorithm is similar to Procedure~\ref{proc:getAngle}, except that~$T$ now doesn't have a distinguished root.
 We proceed from the leaves of~$T$ inwards.
 For a node~$v$ with~$|\procn(v)| = d_v-1$, let~$\{r_v\} = \neighb(v) - \procn(v)$.
 Similar to Procedure~\ref{proc:getAngle}, after~$\proc(v)$ is set~$\true$,
 $\angle(v) > 0$ stores the tight upper bound for the opening angle of subtree~$T_{r_v v}^v + r_v v$ (this is proved as in  Lemma~\ref{lem:proc:getAngle}). \par
 Let us now consider the two \emph{return} statements.
 In the first one, we have a node~$v$, and for all its neighbors~$u_i$, $i = 0, \ldots, d_v-1$, 
 $\proc(u_i) = \true$ and~$\varphi_i = \angle(u_i) >0$
 by the invariant.
 Angle~$\varphi_i$ is the tight upper bound on the opening angle for the subtree~$T_{v u_i}^{u_i} + u_i v$.
 Hence, if~$\sigma = \sum_{i=0}^{d_v-1} \varphi_i \leq (d_v - 2)180\dg$,
 by Lemma~\ref{lem:indep-angles-sum}, no greedy drawing of~$T$ exists.
 Now let~$\sigma > (d_v - 2)180\dg$.
 If~$d_v=2$, the two opening angles can be arranged in a suitable way.
 If~$d_v=3,4$, then~$\{ \varphi_0, \ldots \varphi_{d_v-1}\} \in \mathcal P^{d_v}$;
 see Table~\ref{tab:P}.
 By Theorem~\ref{thm:p5equiv}, a greedy drawing of~$T$ exists. \par
 Finally, consider the second \emph{return} statement and the last \emph{else} clause.
 Let $\{ u_0, \ldots, u_{d-2}\} = \procn(v)$ and~$\varphi_i = \angle(u_i)$.
 Again, since none of the cases I--V is applicable, 
 the combined tree~$T_{vw}^v + vw$ with root~$w$ must have a closed angle.
 Hence, if~$\angle(w)=0$, $T_{vw}^w + wv$ must also form a closed angle, and no greedy drawing exists
 by Lemma~\ref{lem:two-closed-angles}.
 Now let~$\varphi_{d_v-1} = \angle(w) > 0$, $\sigma = \sum_{i=0}^{d_v-1} \varphi_i$.
 Similar to the previous case, a greedy drawing exists iff~$\sigma > (d_v - 2)180\dg$;
 see Table~\ref{tab:P}.
\end{proof}
\textbf{Maximum degree 5 and above.}
If~$T$ contains a node~$v$ with $\deg(v) \geq 6$,
no greedy drawing exists.
Also, a greedy-drawable tree can have at most one node of degree~5 by Lemma~\ref{lem:indep-tuples-num}, 
otherwise, there are two independent 5-tuples.
\par
For unique~$r \in V$, $\deg(r)=5$,
consider the five rooted subtrees~$T_0, \dots, T_4$
attached to it and the tight upper bounds~$\varphi_i$ on~$|\angle T_i|$. %
If~$\sigma = \sum_{i=0}^4 \varphi_i \leq 540\dg$, $T$ cannot be drawn greedily.
The converse, however, doesn't hold.
By Theorem~\ref{thm:p5equiv}, a greedy drawing exists if and only 
if~$\{ \varphi_0, \ldots, \varphi_4\} \in \gP{5}$. %
To decide whether~$\{ \varphi_0, \ldots, \varphi_4\} \in \gP{5}$,
we apply the conditions from Table~\ref{tab:P}.
For the remaining case~$\varphi_0 = 180\dg$, $\varphi_1, \ldots, \varphi_4 \leq 120\dg$,
if the sufficient condition of Lemma~\ref{lem:1x180-suff} does not apply,
the linear relaxation of Problem~\optprobref has a solution,
but the non-linear solver finds none, we report \emph{uncertain};
see Algorithm~\ref{alg:decide-deg5}.
An uncertain example is presented in Fig.~\ref{fig:deg45:tree1}.
\begin{figure}[tb]
\hfill
 \subfloat[]{ \includegraphics[page=3]{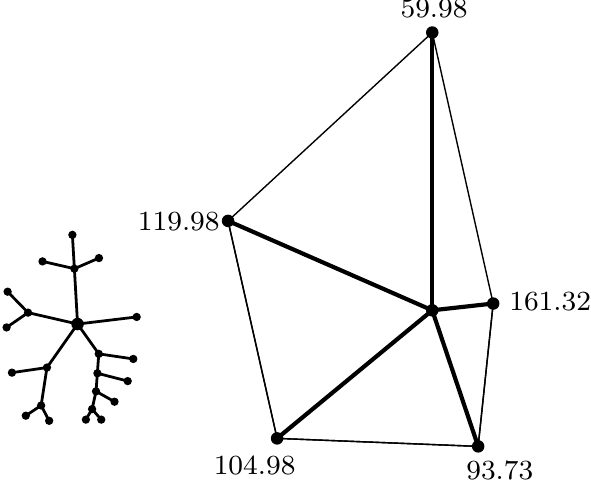} \label{fig:deg45:tree2}} \hfill
 \subfloat[]{ \includegraphics[page=1]{fig/full/deg45.pdf} \label{fig:deg45:P1}} \hfill
 \subfloat[]{ \includegraphics[page=2]{fig/full/deg45.pdf} \label{fig:deg45:tree1}}\hfill\null
  \caption{Some trees with a node of degree~5.
   Tree \protect\ref{fig:deg45:tree2} has no greedy drawing, since $\{ 180\dg, 105\dg, 105\dg, 105\dg, 60\dg\} \notin \gP{5}$.
   Tree \protect\ref{fig:deg45:P1} has one, since $\{ 180\dg, 120\dg, 105\dg, 93.75\dg, 60\dg\} \in \gP{5}$,
   see the solution found by a non-linear solver. 
   It is not clear whether the tree in
   \protect\ref{fig:deg45:tree1} has a greedy drawing.
  By Theorem~\ref{thm:p5equiv}, proving existence is equivalent to deciding whether
  $\{ 180\dg, 120\dg, 120\dg, 120\dg, 30\dg\} \in \gP{5}$.}
\end{figure}
\begin{algorithm}[tb]
	\DontPrintSemicolon
	\SetKwInOut{Input}{Input}
	\SetKwInOut{Output}{output}
	\Input{tree~$T=(V,E)$ with maximum degree~5, $r \in V, \deg(r) = 5$.}
	\KwResult{whether~$T$ has a greedy drawing}
	\uIf{$\exists u \in V \setminus\{ r\}, \deg(u) = 5$}{$\Return~\false$\;}
	$(u_0, \ldots, u_4) \leftarrow \neighb(r)$\;
	\For{$i = 0, \ldots, 4$}{
		$\alpha_i \leftarrow \ref{proc:getAngle}(T_{r u_i}^{u_i} + r u_i, r)$ \;
		\lIf{$\alpha_i = 0$}{$\Return~\false$\;}
	}
	\uIf{$\sum_{i=0}^4 \alpha_i \leq 540$}{\Return~\false}
	$(\varphi_0, \ldots, \varphi_4) \leftarrow \textnormal{sort\_desc} (\alpha_0, \ldots, \alpha_4) $\;
	\uIf{$\varphi_0 \leq 120$}{\Return~\true}
	\uIf{$\varphi_3 = 180$}{\Return~\true}
	\uIf{$\varphi_2 = 180$}{\Return $\varphi_3 + \varphi_4 > 120$}
	\uIf{$\varphi_1 = 180$}{\Return $\varphi_2 + \varphi_3 + \varphi_4 > 240$}
        \uIf{$\varphi_4 \leq 60 \And \textnormal{condition in Lemma~\ref{lem:1x180-suff} holds }$}{$\Return~\true$}
	\uIf{$\textnormal{LP\_has\_no\_solution}$}{$\Return~\false $}
	\uIf{$\textnormal{solved $\gP{5}$ numerically}$}{$\Return~\true $}
	\tcc*[h]{cases for which we have no guarantee for~$ \notin \mathcal P^5$ yet}\;
	\Return uncertain\;
	\caption{Deciding whether a tree with maximal degree~5 has a greedy drawing.}
	\label{alg:decide-deg5}
\end{algorithm}
\section{Conclusion}
\label{sec:conclusion}
In this paper, we gave the first complete characterization of all trees that admit a greedy embedding in~$\mathbb R^2$ with the Euclidean distance metric, thereby solving the corresponding open problem stated by Angelini et al.~\cite{abf-sgdae-10}. This is a further step in characterizing the graphs that have Euclidean greedy embeddings.
One direction of future work is to develop heuristics to actually draw greedy trees with non-zero edge lengths. Some simple strategies can be derived from the proofs presented in this paper. However, optimizing the resolution of such drawings appears to be a challenging task.

To fill the gaps in the characterization of graphs with an Euclidean greedy embedding in~$\mathbb R^2$, it would be interesting to consider other graph classes, e.g., non-3-connected planar graphs with cycles.
Another challenging question is to describe graphs with planar greedy drawings.
For example, the still-open \emph{strong Papadimitriou-Ratajczak conjecture}~\cite{Papadimitriou2005} states that every 3-connected planar graph has a planar greedy drawing with convex faces.

\subsubsection*{Acknowledgments.} M.N. received financial support by the `Concept for the Future’ of KIT within the framework of the German Excellence
Initiative“. R.P. is supported by the German Research Foundation (DFG) within the Research Training Group GRK 1194 ``Self-Organizing Sensor-Actuator Networks''.

{
  \bibliography{GreedyTrees}

\begin{thebibliography}{10}

\bibitem{acglp-sag-12}
S.~Alamdari, T.~M. Chan, E.~Grant, A.~Lubiw, and V.~Pathak.
\newblock Self-approaching graphs.
\newblock In W.~Didimo and M.~Patrignani, editors, {\em Proc. Graph Drawing
  (GD'12)}, volume 7704 of {\em LNCS}, pages 260--271. Springer-Verlag, 2013.

\bibitem{abf-sgdae-10}
P.~Angelini, G.~di~Battista, and F.~Frati.
\newblock Succinct greedy drawings do not always exist.
\newblock In D.~Eppstein and E.~Gansner, editors, {\em Proc. Graph Drawing
  (GD'09)}, volume 5849 of {\em LNCS}, pages 171--182. Springer, 2010.

\bibitem{angelini2010algorithm}
P.~Angelini, F.~Frati, and L.~Grilli.
\newblock An algorithm to construct greedy drawings of triangulations.
\newblock {\em J. Graph Algorithms Appl.}, 14(1):19--51, 2010.

\bibitem{Dhandapani2010}
R.~Dhandapani.
\newblock Greedy drawings of triangulations.
\newblock {\em Discrete Comput. Geom.}, 43:375--392, 2010.

\bibitem{DiBattistaVismara1993}
G.~Di~Battista and L.~Vismara.
\newblock Angles of planar triangular graphs.
\newblock In {\em Proc. Symp. Theory of Computing (STOC'93)}, pages 431--437.
  ACM, 1993.

\bibitem{Eppstein2011}
D.~Eppstein and M.~T. Goodrich.
\newblock Succinct greedy geometric routing using hyperbolic geometry.
\newblock {\em IEEE Trans. Computers}, 60(11):1571--1580, 2011.

\bibitem{Goodrich2009}
M.~T. Goodrich and D.~Strash.
\newblock Succinct greedy geometric routing in the {E}uclidean plane.
\newblock In Y.~Dong, D.-Z. Du, and O.~Ibarra, editors, {\em Proc. Algorithms
  and Computation (ISAAC'09)}, volume 5878 of {\em LNCS}, pages 781--791.
  Springer-Verlag, 2009.

\bibitem{Kleinberg2007}
R.~Kleinberg.
\newblock Geographic routing using hyperbolic space.
\newblock In {\em Proc. Computer Communications (INFOCOM'07)}, pages
  1902--1909. {IEEE}, 2007.

\bibitem{lm-srgems-10}
T.~Leighton and A.~Moitra.
\newblock Some results on greedy embeddings in metric spaces.
\newblock {\em Discrete Comput. Geom.}, 44:686--705, 2010.

\bibitem{Papadimitriou2005}
C.~H. Papadimitriou and D.~Ratajczak.
\newblock On a conjecture related to geometric routing.
\newblock {\em Theor. Comput. Sci.}, 344(1):3--14, 2005.

\bibitem{Rao2003}
A.~Rao, S.~Ratnasamy, C.~Papadimitriou, S.~Shenker, and I.~Stoica.
\newblock Geographic routing without location information.
\newblock In {\em Proc. Mobile Computing and Networking (MobiCom'03)}, pages
  96--108. ACM, 2003.

\bibitem{Schnyder1990}
W.~Schnyder.
\newblock Embedding planar graphs on the grid.
\newblock In {\em Proc. Discrete Algorithms (SODA'90)}, pages 138--148. SIAM,
  1990.

\bibitem{Wang2012}
J.-J. Wang and X.~He.
\newblock Succinct strictly convex greedy drawing of 3-connected plane graphs.
\newblock In J.~Snoeyink, P.~Lu, K.~Su, and L.~Wang, editors, {\em Proc.
  FAW-AAIM}, volume 7285 of {\em LNCS}, pages 13--25. Springer-Verlag, 2012.

\end{thebibliography}
  \bibliographystyle{abbrv}
}  
\end{document}